\mathchardef\ordinarycolon\mathcode`\:
\theoremstyle{plain}
\newtheorem{thm}{Theorem}
\newtheorem{corol}{Corollary}
\newtheorem{lem}{Lemma}
\theoremstyle{definition}
\newtheorem{defn}{Definition}
\theoremstyle{remark}
\newtheorem{rmk}{Remark}
\newtheorem{ex}{Example}
\def\<{\langle}
\def\I{ \mathbb{1} }
\def\I{ \mathbbm{1} }
\def\tr{ \mbox{tr} \,}
\def\>{\rangle}
\def\<{\langle}
\DeclareMathOperator{\Tr}{Tr} 
\DeclareMathOperator{\poly}{poly}
\newcommand{\alv}{\vec{\alpha}}
\newcommand{\norm}[1]{\left\lVert#1\right\rVert}
\renewcommand{\ket}[1]{|#1\rangle}               
\renewcommand{\bra}[1]{\langle #1|}              
\renewcommand{\vec}[1]{\boldsymbol{#1}}  
\newcommand{\stkout}[1]{\ifmmode\text{\sout{\ensuremath{#1}}}\else\sout{#1}\fi}
\newif\ifverbose
\begin{document}

\title{Out-of-distribution generalization for learning quantum dynamics}

\author{Matthias C.~Caro}
\thanks{The first two authors contributed equally to this work.}
\affiliation{Department of Mathematics, Technical University of Munich, Garching, Germany.}
\affiliation{Munich Center for Quantum Science and Technology (MCQST), Munich, Germany.}
\affiliation{Dahlem Center for Complex Quantum Systems, Freie Universität Berlin, Berlin, Germany.}
\affiliation{Institute for Quantum Information and Matter, Caltech, Pasadena, CA, USA.}

\author{Hsin-Yuan Huang}
\thanks{The first two authors contributed equally to this work.}
\affiliation{Institute for Quantum Information and Matter, Caltech, Pasadena, CA, USA.}
\affiliation{Department of Computing and Mathematical Sciences, Caltech, Pasadena, CA, USA.}

\author{Nicholas Ezzell}
\affiliation{Information Sciences, Los Alamos National Laboratory, Los Alamos, NM, USA.}
\affiliation{Department of Physics \& Astronomy, University of Southern California, Los Angeles, CA, USA.}

\author{Joe Gibbs}
\affiliation{Department of Physics, University of Surrey, Guildford, GU2 7XH, UK.}
\affiliation{AWE, Aldermaston, Reading, RG7 4PR, UK.}

\author{Andrew T. Sornborger} 
\affiliation{Information Sciences, Los Alamos National Laboratory, Los Alamos, NM, USA.}

\author{Lukasz Cincio}
\affiliation{Theoretical Division, Los Alamos National Laboratory, Los Alamos, NM, USA.}

\author{Patrick~J.~Coles} 
\affiliation{Theoretical Division, Los Alamos National Laboratory, Los Alamos, NM, USA.}
\affiliation{Normal Computing Corporation, New York, NY, USA.}

\author{Zo\"{e} Holmes}
\affiliation{Information Sciences, Los Alamos National Laboratory, Los Alamos, NM, USA.}
\affiliation{Institute of Physics, Ecole Polytechnique F\'{e}d\'{e}derale de Lausanne (EPFL), CH-1015 Lausanne, Switzerland.}

\date{\today}

\begin{abstract}
Generalization bounds are a critical tool to assess the training data requirements of Quantum Machine Learning (QML). Recent work has established guarantees for in-distribution generalization of quantum neural networks (QNNs), where training and testing data are drawn from the same data distribution. However, there are currently no results on out-of-distribution generalization in QML, where we require a trained model to perform well even on data drawn from a different distribution to the training distribution. Here, we prove out-of-distribution generalization for the task of learning an unknown unitary. In particular, we show that one can learn the action of a unitary on entangled states having trained only product states. Since product states can be prepared using only single-qubit gates, this advances the prospects of learning quantum dynamics on near term quantum hardware, and further opens up new methods for both the classical and quantum compilation of quantum circuits.
\end{abstract}

\maketitle

\section{Introduction}

In Quantum Machine Learning (QML) a quantum neural network (QNN) is trained using classical or quantum data, with the goal of learning how to make accurate predictions on unseen data~\cite{biamonte2017quantum,schuld2015introduction, schuld2021machine}. This ability to extrapolate from training data to unseen data is known as generalization. There is much excitement currently about the potential of such QML methods to outperform classical methods for a range of learning tasks~\cite{huang2021quantum,liu2021rigorous, aharonov2021quantum, huang2021information, chen2021exponential, chen2021hierarchy, cotler2021revisiting, huang2021power}. However, to achieve this, it is critical that the training data required for successful generalization can be produced efficiently.

While recent work has established a number of fundamental bounds on the \textit{amount} of training data required for successful generalization in QML~\cite{caro2020pseudo, abbas2020power, sharma2020reformulation, huang2021power, bu2021onthestatistical, banchi2021generalization, du2021efficient, gyurik2021structural, caro2021encodingdependent, caro2021generalization, chen2021expressibility, popescu2021learning, cai2022sample, Volkoff2021Universal}, less attention has been paid so far to the \textit{type} of training data required for generalization. In particular, prior work has established guarantees for the \textit{in-distribution generalization} of QML models, where training and testing data are assumed to be drawn independently from the same data distribution. However, in practice one may only have access to a limited type of training data, and yet be interested in making accurate predictions for a wider class of inputs. 
This is particularly an issue in the Noisy Intermediate-Scale Quantum (NISQ) era~\cite{preskill2018quantum}, when deep quantum circuits cannot be reliably executed, effectively limiting the quantum training data states that can be prepared.

In this article, we study \textit{out-of-distribution generalization} in QML. That is, we investigate generalization performance when the testing and training distributions do not coincide.
Specifically, we consider the task of learning unitary dynamics, which is a fundamental primitive for a range of QML algorithms. At its simplest, the target unitary could be the unknown dynamics of an experimental quantum system.  For this case, which has close links with quantum sensing~\cite{degen2017quantum} and Hamiltonian learning~\cite{wang2017experimental, Wiebe2014Quantum, Gentile2021Learning}, the aim is essentially to learn a digitalization of an analog quantum process. This could be performed using a `standard' quantum computer or a simpler experimental system with perhaps a limited gate set, as sketched in Fig.~\ref{fig:Schematic}a) and b) respectively.
Alternatively, the target unitary could take the form of a known gate sequence that one seeks to compile into a shorter depth circuit or a particular structured form~\cite{khatri2019quantum, sharma2019noise, Jones2022robustquantum,heya2018variational}. The compilation could be performed either on a quantum computer, see Fig.~\ref{fig:Schematic}c), or entirely classically, see Fig.~\ref{fig:Schematic}d). 
Such a subroutine can be used to reduce the resources required to implement larger scale quantum algorithms including those for dynamical simulations~\cite{cirstoiu2020variational,gibbs2021long, geller2021experimental, gibbs2022dynamical}.

Here we prove out-of-distribution generalization for unitary learning with a broad class of training and testing distributions. 
Specifically, we show that the average prediction error over any two \textit{locally scrambled}~\cite{kuo2020markovian, hu2021classical} ensembles of states are perfectly correlated up to a small constant factor. This is captured by our main theorem, Theorem~\ref{thm:comparing-different-locally-scrambled-costs}. By combining this observation with in-distribution generalization guarantees it follows that if the training and testing distributions are both locally scrambled (but potentially otherwise different distributions), out-of-distribution generalization is always possible between locally scrambled distributions. 
In particular, we show that a QNN trained on quantum data capturing the action of an efficiently implementable target unitary on a polynomial number of random product states, generalizes to test data composed of fully random states. 
That is, rather intriguingly, we show that one can learn the action of such a unitary on a broad spread of highly entangled states having only studied its action on a limited number of product states.

\begin{figure}[t]
\centering
\includegraphics[width =\columnwidth]{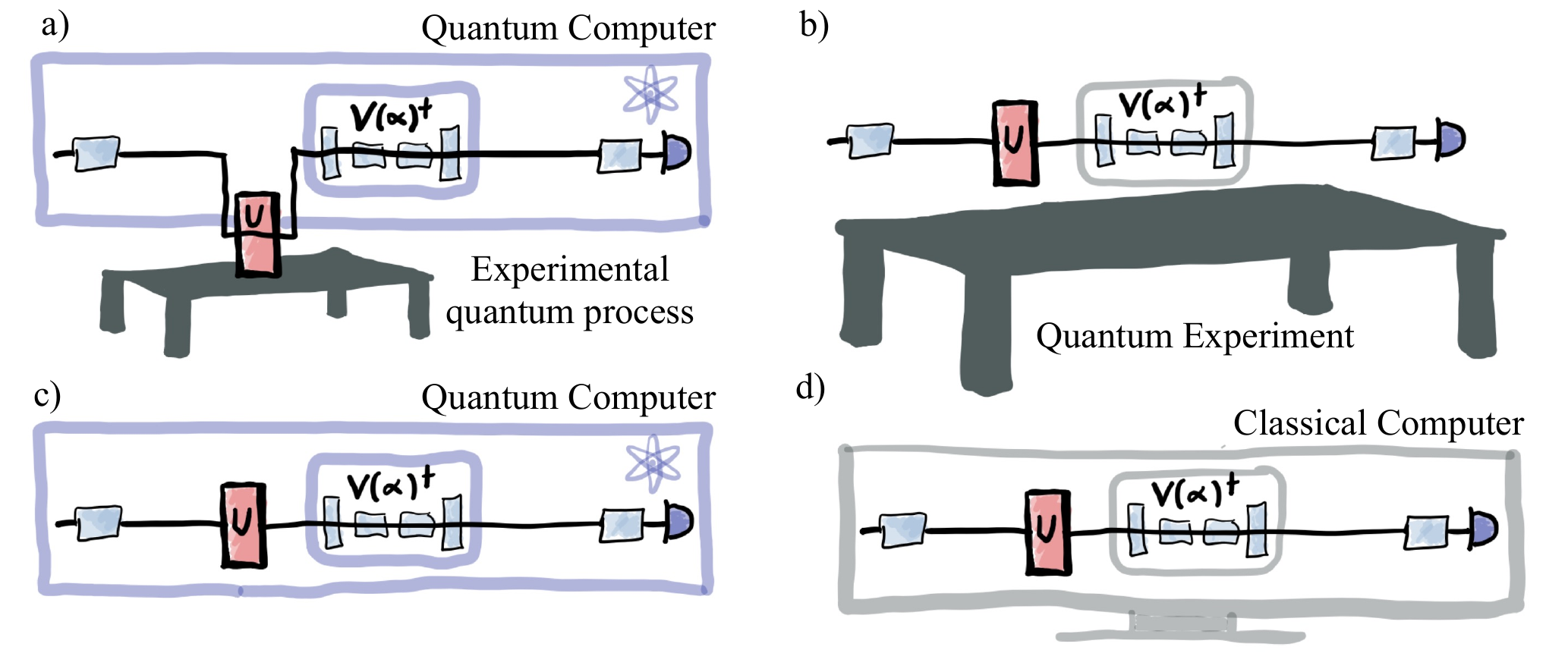}
\vspace{-6mm}
\caption{\textbf{Applications of Quantum Dynamics Learning.} a) Quantum dynamics learning of an experimental process using a quantum computer. b) Quantum dynamics learning with a more specialized experimental system with potentially a limited gate set. c) and d) Quantum compilation of a known unitary on a quantum computer and classical computer, respectively.}
\label{fig:Schematic}
\end{figure}

We numerically illustrate these analytical results by showing that the short time evolution of a Heisenberg spin chain can be well learned using only product state training data. Namely, we find that the out-of-distribution generalization error nearly perfectly correlates with the in-distribution generalization error and the training cost. 
In particular, in our numerical experiments, the testing performances achieved by the QML model on Haar-random states and on random product states differ only by a small constant factor, as predicted analytically.
We further perform noisy simulations that demonstrate how the noise accumulated preparing highly entangled states can prohibit training. In contrast, noisy training on product states, which can be prepared using only single-qubit gates, remains feasible. Additionally, in Appendix \ref{ap:additional-numerics} we numerically validate our generalization guarantees in a task of learning so-called fast scrambler unitaries \cite{belyansky2020minimal}. Thus our results make the possibility of using QML to learn unitary processes nearer term.

Our results further suggest a new quantum-inspired classical approach to unitary compilation. Namely, our results imply that a low-entangling unitary can be compiled using only low-entangled training states.
Such circuits can be readily simulated using classical tensor network methods, and hence this compilation can be performed classically.

\section{Results}

\subsection{Framework}

In this work we consider the QML task of learning an unknown $n$-qubit unitary $U\in\mathcal{U}((\mathbb{C}^{2})^{\otimes n})$. 
The goal is to use training states to optimize the classical parameters $\alv$ of $V(\alv)$, an $n$-qubit unitary QNN (or classical representation of a QNN), such that, for the optimized parameters $\alv_{\rm opt}$, $V(\alv_{\rm opt})$ well
predicts the action of $U$ on previously unseen test states. 

To formalize this notion of learning, we employ the framework of statistical learning theory~\cite{vapnik1971uniform, valiant1984theory}.
The prediction performance of the trained QNN $V(\alv_{\rm opt})$ can be quantified in terms of the average distance between the output state predicted by $V(\alv_{\rm opt})$ and the true output state determined by $U$. 
The average is taken over input states from a testing ensemble, which represents the ensemble of states that one wants to be able to predict the action of the target unitary on. 
More precisely, the goal is to minimize an \emph{expected risk}
\small
\begin{equation}\label{eq:test-cost-general}
    R_{\mathcal{P}}(\vec{\alpha}) 
    = \frac{1}{4}\mathbb{E}_{\ket{\Psi}\sim \mathcal{P}} \left[\big\lVert U\ket{\Psi}\bra{\Psi}U^\dagger  - V(\alv) \ket{\Psi}\bra{\Psi} V(\alv)^\dagger \big\rVert_{1}^2\right] \, ,
\end{equation}
\normalsize
where the testing distribution $\mathcal{P}$ is a probability distribution over (pure) $n$-qubit states $\ket{\Psi}$ and the factor of $1/4$ ensures $0 \leq R_{\mathcal{P}}(\vec{\alpha}) \leq 1$.

A learner will not have access to the full testing ensemble $\mathcal{P}$ and so cannot evaluate the cost in Eq.~\eqref{eq:test-cost-general}.
Instead, it is typically assumed that the learner has access to a training data set consisting of input-output pairs of pure $n$-qubit states, 
\begin{equation}\label{eq:training-data-general}
    \mathcal{D}_{\mathcal{Q}}(N) = \{(\ket{\Psi^{(j)}}, U\ket{\Psi^{(j)}}) \}_{j=1}^{N} \, ,
\end{equation}
where the $N$ input states are drawn independently from a training distribution $\mathcal{Q}$. Equipped with such training data, the learner may evaluate the \emph{training cost}
\small
\begin{equation}\label{eq:training-cost-general}
    \begin{split}
        C_{\mathcal{D}_{\mathcal{Q}}(N)}(\vec{\alpha})
        = \frac{1}{4 N} \sum_{j=1}^N \big\lVert &U\ket{\Psi^{(j)}}\bra{\Psi^{(j)}}U^\dagger -\\ &V(\vec{\alpha})  \ket{\Psi^{(j)}}\bra{\Psi^{(j)}} V(\vec{\alpha}) ^\dagger \big\rVert_{1}^2 \, .
    \end{split}
\end{equation}
\normalsize
We note that this cost can be rewritten in terms of the average fidelity as
\begin{equation}\label{eq:costfid}
        C_{\mathcal{D}_{\mathcal{Q}}(N)}(\vec{\alpha}) 
        = 1 -\frac{1}{N}\sum\limits_{j=1}^N  | \bra{\Psi^{(j)}}   V(\alv)^\dagger U \ket{\Psi^{(j)}} |^2  \, 
\end{equation}
and thus can be efficiently computed using a Loschmidt echo~\cite{sharma2020reformulation} or swap test circuit~\cite{buhrman2001quantum, gottesman2001quantum}. The hope is that by training the parameters $\alv$ of the QNN to minimize the training cost $C_{\mathcal{D}_{\mathcal{Q}}(N)}( \vec{\alpha})$ one will also achieve small risk $R_{\mathcal{P}}(\vec{\alpha})$.

However, whether such a strategy is successful crucially depends on whether the training cost $C_{\mathcal{D}_{\mathcal{Q}}(N)}( \vec{\alpha})$ is indeed a good proxy for the expected cost $R_{\mathcal{P}}\left( \alv \right)$.
This is exactly the question of \emph{generalization}: Does good performance on the training data imply good performance on (previously unseen) testing data?

In statistical learning theory, answers to this question are given in terms of \emph{generalization bounds}. 
These are bounds on the generalization error, which is typically taken to be the difference between expected risk and training cost, i.e., 
\begin{equation}
    \operatorname{gen}_{\mathcal{P}, \mathcal{D}_{\mathcal{Q}}(N)} \left(\alv_{\rm opt} \right) := R_\mathcal{P} \left(\alv_{\rm opt} \right) - C_{\mathcal{D}_{\mathcal{Q}}(N)} \left(\alv_{\rm opt} \right) \, .
\end{equation}
Usually, such bounds are proved under an i.i.d.~assumption on training and testing. That is, they are based on the assumptions (a) that the training examples are drawn independently from a training distribution $\mathcal{Q}$ and (b) that the training and testing distributions coincide, $\mathcal{Q}= \mathcal{P}$. In this case, we speak of \emph{in-distribution generalization}. 

In this paper, we consider \emph{out-of-distribution generalization} where we drop assumption (b) by allowing $\mathcal{Q} \neq \mathcal{P}$. 
Borrowing classical machine learning terminology, one can also regard this as a scenario of dataset shift \cite{quinonero2008dataset}, or more specifically covariate shift \cite{shimodaira2000improving, shen2021towards}, which is often addressed using transfer learning techniques~\cite{pratt1991direct, pan2009survey}.
We formulate our results for a broad class of ensembles called \emph{locally scrambled} ensembles.
In loose terms, locally scrambled ensembles of states can be thought of as ensembles of states that are \textit{at least locally random}~\footnote{Note, we use the terms `distribution' and `ensemble' interchangeably.}. More formally, they are defined as follows. 

\begin{defn}[Locally scrambled ensembles]\label{definition:locally-scrambled-ensembles}
    An ensemble of $n$-qubit unitaries is called \emph{locally scrambled} if it is invariant under pre-processing by tensor products of arbitrary local unitaries. 
    That is, a unitary ensemble $\mathcal{U}_{\rm LS}$ is locally scrambled iff for $U \sim \mathcal{U}_{\rm LS}$ and for any fixed $U_1,\ldots, U_n\in\mathcal{U}(\mathbb{C}^2)$ also $U (\bigotimes_{i=1}^n U_i)\sim\mathcal{U}_{\rm LS}$~\footnote{Here and elsewhere, the ``$\sim$'' notation means that the random variable on the left has the distribution on the right as its law. For instance, $U \sim \mathcal{U}_{\rm LS}$ means that the random unitary $U$ is drawn from the distribution $\mathcal{U}_{\rm LS}$.}.    Accordingly, an ensemble $\mathcal{S}_{\rm LS}$ of $n$-qubit quantum states is locally scrambled if it is of the form $\mathcal{S}_{\rm LS}=\mathcal{U}_{\rm LS}\ket{0}^{\otimes n}$ for some locally scrambled unitary ensemble $\mathcal{U}_{\rm LS}$~\footnote{Here and elsewhere, $\mathcal{U}\ket{0}^{\otimes n}$ denotes the ensemble of states generated by drawing unitaries from $\mathcal{U}$ and applying them to the $n$-qubit all-zero state $\ket{0}^{\otimes n}$.}. 
    We denote the classes of locally scrambled ensembles of unitaries and states as $\mathbb{U}_{\rm LS}$ and $\mathbb{S}_{\rm LS}$, respectively. 
\end{defn}

In fact, our results hold for a slightly broader class of ensembles where we only require that the ensemble agrees with a locally scrambled one up to and including its (complex) second moments. That is, more informally, the average over the ensemble agrees with those of a locally scrambled ensemble over all functions of $U$ that contain at most two copies of $U$. We will denote these broader classes of unitary and state ensembles, which we formally define in Appendix~\ref{app:Prelim}, as $\mathbb{U}_{\rm LS}^{(2)}$ and $\mathbb{S}_{\rm LS}^{(2)}$, respectively.

\medskip

In our results, we suppose that both the testing and training ensembles are such ensembles, i.e. $\mathcal{P} \in \mathbb{S}_{\rm LS}^{(2)}$ and $\mathcal{Q} \in \mathbb{S}_{\rm LS}^{(2)}$. 
However, as $\mathbb{S}_{\rm LS}^{(2)}$ captures a variety of different possible ensembles, $\mathcal{P}$ and $\mathcal{Q}$ can be ensembles containing very different sorts of states. 
In particular, as detailed further in Appendix~\ref{app:Prelim}, the following are important examples of ensembles in $\mathbb{S}_{\rm LS}^{(2)}$:

\begin{itemize}
    \item  $\mathcal{S}_{{\rm Haar}_1^{\otimes n}}$ - Products of Haar-random single-qubit states.
    \item $\mathcal{S}_{{\rm Stab}_1^{\otimes n}}$ - Products of random single-qubit stabilizer states.
    \item  $\mathcal{S}_{{\rm Haar}_k^{\otimes n/k}}$ - Products of Haar-random $k$-qubit states.
    \item $\mathcal{S}_{{\rm Haar}_n}$ - Haar-random $n$-qubit states.
    \item $\mathcal{S}_{\rm 2design}$ - A $2$-design on $n$-qubit states.
    \item $\mathcal{S}_{\rm RandCirc}^{\mathcal{A}_k}$ - The output states of random quantum circuits. (Here $\mathcal{A}_k$ denotes the $k$-local $n$-qubit quantum circuit architecture from which the random circuit is constructed.) 
\end{itemize}

These examples highlight that the class of locally scrambled ensembles includes both ensembles that consist solely of product states and ensembles composed mostly of highly entangled states. We can use this to our advantage to construct more efficient machine learning strategies.

Typically the learner will be interested in learning the action of a unitary on a wide class of input states including both entangled and unentangled states. For example, they might be interested in learning the action of a unitary on all states that can be efficiently prepared on a quantum computer using a polynomial-depth hardware-efficient layered ansatz. 
Thus in general the expected risk should be evaluated over distributions such as $\mathcal{S}_{{\rm Haar}_n}$, $\mathcal{S}_{\rm 2design}$ or $\mathcal{S}_{\rm RandCirc}^{\mathcal{A}_k}$ (for $k \geq 2$) which cover a large proportion of the total Hilbert space.

In classical machine learning one often thinks of the training data as given. However, in the context of learning or compiling quantum unitary dynamics (as sketched in Fig.~\ref{fig:Schematic}), one in practice needs either to prepare the training states on a quantum computer or in an experimental setup, or to be able to efficiently simulate them classically. Thus, it is desirable to train on states that can be prepared using simple circuits, i.e., those that are short depth, low-entangling or require only simple gates. This is especially important in the NISQ era due to noise-induced barren plateaus~\cite{wang2020noise} or other noise-related issues~\cite{franca2020limitations}.
Therefore, as random stabilizer states and random product states can be prepared using only a single layer of single-qubit gates, it makes practical sense to train using the ensembles $\mathcal{S}_{{\rm Haar}_1^{\otimes n}}$ or $\mathcal{S}_{{\rm Stab}_1^{\otimes n}}$. 

In this manner the class of ensembles that are locally scrambled to the second moment, $\mathbb{S}_{\rm LS}^{(2)}$, divides naturally into sub-classes of ensembles that give rise to training sets and testing sets. We sketch this in Fig.~\ref{fig:LocallyScramblingVenn}.

\begin{figure}[t]
\centering
\includegraphics[width =\columnwidth]{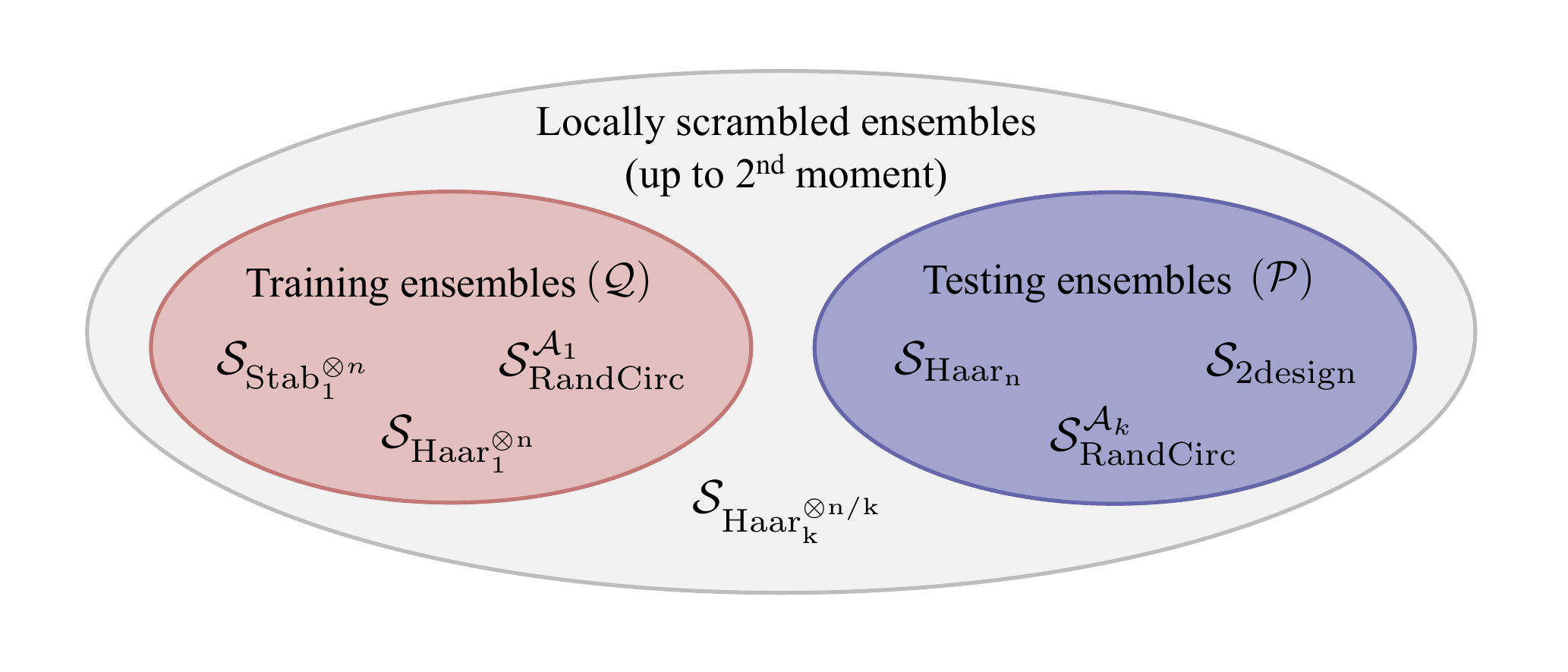}
\vspace{-6mm}
\caption{\textbf{Locally Scrambled Ensembles.} Venn diagram showing how the class of ensembles that are locally scrambled up to the second moment, $\mathbb{S}_{\rm LS}^{(2)}$, divides naturally into training ensembles and testing ensembles. For the formal definitions of each of the ensembles referenced see Appendix~\ref{app:Prelim}.}
\label{fig:LocallyScramblingVenn}
\end{figure}

\subsection{Analytical Results}\label{sec:analytical-results}

Having set up our framework, we now present our analytical results. 
First, we show that all locally scrambled ensembles lead to closely related expected risks for unitary learning.
Second, we use this observation to lift in-distribution generalization to out-of-distribution generalization when using a QNN to learn an unknown unitary from quantum data. For the formal proofs see Appendix~\ref{app:AnalyticResults}.

\subsubsection{Equivalence of Locally Scrambled Risks}

We first show a close connection between the risks for unitary learning arising from any locally scrambled ensembles. More precisely, we show that they can be upper and lower bounded in terms of the expected risk over the Haar distribution in our main technical result:

\begin{lem}\label{lem:locally-scrambled-cost-versus-hst-cost}
    For any $\mathcal{Q} \in \mathbb{S}_{\rm LS}^{(2)}$ and any parameter setting $\alv$,
    \begin{equation}
        \frac{1}{2} R_{\mathcal{S}_{{\rm Haar}_n}}(\vec{\alpha})
        \leq \frac{d}{d+1} R_{\mathcal{Q}} \left( \vec{\alpha} \right)
        \leq R_{\mathcal{S}_{{\rm Haar}_n}}(\vec{\alpha}) \, ,
    \end{equation}
    where $d = 2^n$ is the dimension of the target unitary $U$ being learned. 
\end{lem}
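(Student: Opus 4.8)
The plan is to reduce the statement to a fact about second moments, evaluate the Haar side exactly, and then control a general locally scrambled second moment through its Pauli-weight structure.

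First I would use that for pure states $\tfrac14\norm{U\ket{\Psi}\bra{\Psi}U^\dagger - V(\alv)\ket{\Psi}\bra{\Psi}V(\alv)^\dagger}_1^2 = 1 - |\bra{\Psi}W\ket{\Psi}|^2$ with $W := V(\alv)^\dagger U$ (the rewriting already behind Eq.~\eqref{eq:costfid}), so that $R_{\mathcal{P}}(\alv) = 1 - \Tr[(W\otimes W^\dagger)\,\rho^{(2)}_{\mathcal{P}}]$, where $\rho^{(2)}_{\mathcal{P}} := \mathbb{E}_{\ket{\Psi}\sim\mathcal{P}}[(\ket{\Psi}\bra{\Psi})^{\otimes 2}]$. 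This is linear in $\rho^{(2)}_{\mathcal{P}}$, and by definition an ensemble in $\mathbb{S}_{\rm LS}^{(2)}$ shares its first and second moments with a genuine locally scrambled ensemble $\mathcal{U}_{\rm LS}\ket{0}^{\otimes n}$, so it suffices to prove the bound for the latter. For the Haar side I would insert $\rho^{(2)}_{\mathcal{S}_{{\rm Haar}_n}} = (\I + \mathrm{SWAP})/(d(d+1))$ and use $\Tr[W\otimes W^\dagger] = |\Tr W|^2$ together with $\Tr[(W\otimes W^\dagger)\,\mathrm{SWAP}] = \Tr[WW^\dagger] = d$ to get $R_{\mathcal{S}_{{\rm Haar}_n}}(\alv) = \tfrac{d}{d+1}\big(1 - |\Tr W|^2/d^2\big)$. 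The claim then reduces to showing $\tfrac12\big(1 - |\Tr W|^2/d^2\big) \le R_{\mathcal{Q}}(\alv) \le 1 - |\Tr W|^2/d^2$.

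Next I would use local scrambling to average the states additionally over products of single-qubit Haar unitaries at no cost (invariance of $\mathcal{U}_{\rm LS}$ under right multiplication by $\bigotimes_i U_i$). This gives $\rho^{(1)}_{\mathcal{Q}} = \I/d$ and $\rho^{(2)}_{\mathcal{Q}} = \mathbb{E}_{U\sim\mathcal{U}_{\rm LS}}[U^{\otimes 2}\,\sigma\,(U^\dagger)^{\otimes 2}]$ with $\sigma := \bigotimes_{i=1}^{n} \tfrac{1}{6}(\I + \mathrm{SWAP})_{A_iB_i}$, where $A_i,B_i$ label the $i$-th qubit of the two copies. The upper bound is then immediate from Jensen's inequality: $\Tr[(W\otimes W^\dagger)\rho^{(2)}_{\mathcal{Q}}] = \mathbb{E}_{\ket{\Psi}\sim\mathcal{Q}}|\bra{\Psi}W\ket{\Psi}|^2 \ge |\mathbb{E}_{\ket{\Psi}\sim\mathcal{Q}}\bra{\Psi}W\ket{\Psi}|^2 = |\Tr[W\rho^{(1)}_{\mathcal{Q}}]|^2 = |\Tr W|^2/d^2$, hence $R_{\mathcal{Q}}(\alv) \le 1 - |\Tr W|^2/d^2$.

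The lower bound is where the real work is. I would write $\Tr[(W\otimes W^\dagger)\rho^{(2)}_{\mathcal{Q}}] = \mathbb{E}_{U}\Tr[(W_U\otimes W_U^\dagger)\sigma]$ with $W_U := U^\dagger W U$, expand $W_U = \sum_{\vec P} b_{\vec P}P_{\vec P}$ in the $n$-qubit Pauli basis (with $b_{\vec P} = \tfrac1d\Tr[P_{\vec P}W_U]$, so $\sum_{\vec P}|b_{\vec P}|^2 = \norm{W_U}_2^2/d = 1$), and use that $\sigma$ factorizes over qubits while the single-qubit trace $\Tr[(\sigma_a\otimes\sigma_b)\tfrac16(\I+\mathrm{SWAP})]$ equals $1$ when both Paulis are the identity, equals $\tfrac13$ when they are equal and non-identity, and vanishes otherwise. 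Only the diagonal Pauli terms survive, giving $\Tr[(W_U\otimes W_U^\dagger)\sigma] = \sum_{\vec P}|b_{\vec P}|^2\,3^{-|\mathrm{supp}(\vec P)|}$. Since the weight-zero term is $|b_{\I^{\otimes n}}|^2 = |\Tr W|^2/d^2$ and $3^{-|\mathrm{supp}(\vec P)|} \le \tfrac13$ for every non-identity string, this is at most $|\Tr W|^2/d^2 + \tfrac13(1 - |\Tr W|^2/d^2) = \tfrac13 + \tfrac23|\Tr W|^2/d^2 \le \tfrac12 + \tfrac12|\Tr W|^2/d^2$; the estimate is $U$-independent, so averaging over $U$ yields $R_{\mathcal{Q}}(\alv) \ge \tfrac12\big(1 - |\Tr W|^2/d^2\big)$, which together with the previous paragraph gives the lemma. (The argument in fact delivers the sharper constant $\tfrac23$, tight already at $n=1$.) The main obstacle is the Pauli-weight bookkeeping: seeing that $\sigma$ annihilates all off-diagonal Pauli pairs, that the surviving weight of a string of support size $k$ is exactly $3^{-k}$, and that the crude bound $3^{-k}\le\tfrac13$ on non-identity strings is all that is required.
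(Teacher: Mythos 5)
Your argument is correct, and it reaches both inequalities by a genuinely different route from the paper's proof of Lemma~\ref{lem:supplementary-locally-scrambled-cost-versus-hst-cost}. For the direction $R_{\mathcal{Q}}\leq 1-\lvert\Tr W\rvert^2/d^2$ the paper first derives the partial-trace formula of Lemma~\ref{lem:locally-scrambled-cost-explicit-expression} and then lower-bounds each term $\norm{\Tr_{A^c}[\tilde U^\dagger W\tilde U]}_F^2$ via Jensen on singular values plus H\"older; your one-line Jensen argument $\mathbb{E}\lvert\bra{\Psi}W\ket{\Psi}\rvert^2\geq\lvert\Tr[W\rho^{(1)}_{\mathcal{Q}}]\rvert^2=\lvert\Tr W\rvert^2/d^2$, which needs only the first moment $\rho^{(1)}_{\mathcal{Q}}=\mathds{1}/d$, is more elementary. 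For the other direction the paper expands $\lvert\Tr[U^\dagger V]\rvert^2$ as a Pauli average (Lemma~\ref{lem:hst-cost-pauli-average}), spectrally decomposes each Pauli string into stabilizer product states, and invokes the eigenvector-overlap bound of Lemma~\ref{lem:auxiliary-overlap-bounds} together with the $2$-design property of single-qubit stabilizers; you instead diagonalize the locally scrambled second moment in the Pauli basis, obtaining $\mathbb{E}_{\ket{\Psi}\sim\mathcal{Q}}\lvert\bra{\Psi}W\ket{\Psi}\rvert^2=\mathbb{E}_{U}\sum_{\vec P}\lvert b_{\vec P}(U)\rvert^2\, 3^{-\lvert\operatorname{supp}\vec P\rvert}$, which is the Pauli-weight avatar of the paper's subset-sum formula (the two are related by a binomial resummation). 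Isolating the identity component and using $3^{-k}\leq 1/3$ for $k\geq 1$ then gives $R_{\mathcal{Q}}\geq\tfrac{2}{3}\left(1-\lvert\Tr W\rvert^2/d^2\right)$, strictly stronger than the stated factor $\tfrac{1}{2}$ and saturated, e.g., by $W$ a traceless single-qubit unitary tensored with the identity and $\mathcal{Q}=\mathcal{S}_{{\rm Haar}_1^{\otimes n}}$; this is consistent with the paper's own remark that its constant $1/2$ is an artifact of its proof. Your bookkeeping checks out: the single-qubit trace $\tfrac{1}{6}(\Tr P\,\Tr Q+\Tr[PQ])$ indeed annihilates all off-diagonal Pauli pairs, unitarity of $W_U$ gives $\sum_{\vec P}\lvert b_{\vec P}\rvert^2=1$ (this is where unitarity enters, exactly as the paper notes for its Lemma~\ref{lem:auxiliary-overlap-bounds}), and the reduction from $\mathbb{S}_{\rm LS}^{(2)}$ to genuinely locally scrambled ensembles is the paper's Remark~\ref{remark:second-moments-suffice}. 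What your route gives up is the stabilizer-state reformulation of Corollary~\ref{corol:hst-cost-bound-local-haar-average} and the explicit partial-trace expression, both of which the paper reuses elsewhere (e.g., in its numerics); what it buys is a shorter proof with a sharp constant.
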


This result establishes that learning over any locally scrambled distribution is effectively equivalent (up to a constant multiplicative factor) to learning over the uniform distribution over the entire Hilbert space.
We note that the factor of $1/2$ in the lower bound emerges from the structure of our proof, and for typical cases we expect the relation between the costs to be tighter still. 
We explore this numerically in Appendix~\ref{ap:TightnessNumerics} for the special case of training on random product states, i.e. $\mathcal{Q} = \mathcal{S}_{\rm Haar_1^{\otimes n}}$.

A direct consequence of Lemma~\ref{lem:locally-scrambled-cost-versus-hst-cost} is that the risks arising from any two locally scrambled ensembles are related as follows. 

\begin{thm}[Equivalence of locally scrambled ensembles for comparing unitaries]\label{thm:comparing-different-locally-scrambled-costs}
    Let $\mathcal{P} \in \mathbb{S}_{\rm LS}^{(2)}$ and $\mathcal{Q} \in \mathbb{S}_{\rm LS}^{(2)}$, then for any parameter setting $\alv$,
    \begin{equation}
        \frac{1}{2} R_{\mathcal{Q}} \left( \vec{\alpha} \right)
        \leq R_{\mathcal{P}} \left( \vec{\alpha}\right)
        \leq 2 R_{\mathcal{Q}} \left(\vec{\alpha} \right) \, .
    \end{equation}
\end{thm}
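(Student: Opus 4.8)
The plan is to obtain Theorem~\ref{thm:comparing-different-locally-scrambled-costs} as an immediate corollary of Lemma~\ref{lem:locally-scrambled-cost-versus-hst-cost}, using the Haar risk $R_{\mathcal{S}_{{\rm Haar}_n}}(\alv)$ as a common reference point through which to compare $R_{\mathcal{P}}(\alv)$ and $R_{\mathcal{Q}}(\alv)$. Since $\mathcal{Q}\in\mathbb{S}_{\rm LS}^{(2)}$ by hypothesis, Lemma~\ref{lem:locally-scrambled-cost-versus-hst-cost} applies to it and gives in particular the lower bound $\tfrac12 R_{\mathcal{S}_{{\rm Haar}_n}}(\alv)\le \tfrac{d}{d+1}R_{\mathcal{Q}}(\alv)$, i.e.\ $R_{\mathcal{S}_{{\rm Haar}_n}}(\alv)\le \tfrac{2d}{d+1}R_{\mathcal{Q}}(\alv)$. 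Since $\mathcal{P}\in\mathbb{S}_{\rm LS}^{(2)}$ by hypothesis, the same lemma applies to it and gives the upper bound $\tfrac{d}{d+1}R_{\mathcal{P}}(\alv)\le R_{\mathcal{S}_{{\rm Haar}_n}}(\alv)$, i.e.\ $R_{\mathcal{P}}(\alv)\le \tfrac{d+1}{d}R_{\mathcal{S}_{{\rm Haar}_n}}(\alv)$.

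Chaining these two inequalities through $R_{\mathcal{S}_{{\rm Haar}_n}}(\alv)$, the factors $\tfrac{d+1}{d}$ and $\tfrac{2d}{d+1}$ multiply to exactly $2$, which yields $R_{\mathcal{P}}(\alv)\le 2R_{\mathcal{Q}}(\alv)$, with a constant independent of $d=2^n$ and hence of the number of qubits. The remaining inequality $\tfrac12 R_{\mathcal{Q}}(\alv)\le R_{\mathcal{P}}(\alv)$ is obtained by running precisely the same argument with the roles of $\mathcal{P}$ and $\mathcal{Q}$ interchanged, which is legitimate because the hypotheses on the two ensembles are symmetric: both are assumed to lie in $\mathbb{S}_{\rm LS}^{(2)}$.

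I do not expect a substantive obstacle at this stage: all of the technical content---the Weingarten/second-moment evaluation of the Haar-averaged trace-distance risk, the observation that only the (complex) second moments of the ensemble enter $R_{\mathcal{Q}}(\alv)$, and the verification that the listed ensembles ($\mathcal{S}_{{\rm Haar}_1^{\otimes n}}$, $\mathcal{S}_{{\rm Stab}_1^{\otimes n}}$, and so on) belong to $\mathbb{S}_{\rm LS}^{(2)}$---is packaged into Lemma~\ref{lem:locally-scrambled-cost-versus-hst-cost}, which we take as given. One small remark worth recording in the write-up: the constant $2$ here is exactly the one forced by the constants appearing in Lemma~\ref{lem:locally-scrambled-cost-versus-hst-cost}, so any sharpening of that lemma's lower bound (which, as noted there, is expected to be loose in typical cases) immediately tightens Theorem~\ref{thm:comparing-different-locally-scrambled-costs} as well.
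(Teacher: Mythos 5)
Your proposal is correct and matches the paper's own derivation: the paper likewise obtains Theorem~\ref{thm:comparing-different-locally-scrambled-costs} as an immediate consequence of Lemma~\ref{lem:locally-scrambled-cost-versus-hst-cost} (stated in the appendix as $\tfrac12 C_{\mbox{\tiny\rm HST}}\le R_{\mathcal{P}}\le C_{\mbox{\tiny\rm HST}}$, which is the same statement since $C_{\mbox{\tiny\rm HST}}=\tfrac{d+1}{d}R_{\mathcal{S}_{{\rm Haar}_n}}$), chaining the two ensembles through the common Haar reference and using the symmetry of the hypotheses. Your bookkeeping of the $d$-dependent factors cancelling to the dimension-independent constant $2$ is exactly right.
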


Theorem~\ref{thm:comparing-different-locally-scrambled-costs} establishes an equivalence (up to a constant multiplicative factor) between all locally scrambled testing distributions for the task of learning an unknown unitary on average. In particular, even simple locally scrambled ensembles, such as tensor products of Haar-random single-qubit states or of random single-qubit stabilizer states, are for this purpose effectively equivalent to seemingly more complex locally scrambled ensembles. The latter include the output states of random quantum circuits or, indeed, globally Haar-random states.

\subsubsection{Out-of-Distribution Generalization for QNNs Trained on Locally Scrambled States}

Theorem~\ref{thm:comparing-different-locally-scrambled-costs} gives rise to a general template for lifting in-distribution generalization bounds for QNNs to out-of-distribution generalization guarantees in unitary learning. This is captured by the following corollary:

\begin{corol}[Locally scrambled out-of-distribution generalization from in-distribution generalization]\label{corol:ood-generalization-from-id-generalization-unitary-learning}
    Let $\mathcal{P} \in \mathbb{S}_{\rm LS}^{(2)}$ and $\mathcal{Q} \in \mathbb{S}_{\rm LS}^{(2)}$.
    Let $U$ be an unknown $n$-qubit unitary.
    Let $V(\alv)$ be an $n$-qubit unitary QNN that is trained using training data ${\mathcal{D}_{\mathcal{Q}}(N)}$ containing $N$ input-output pairs, with inputs drawn from the ensemble $\mathcal{Q}$. 
    Then, for any parameter setting $\alv$,
    \begin{equation}
        R_{\mathcal{P}} (\alv)
        \leq 2\left( C_{\mathcal{D}_{\mathcal{Q}}(N)} (\alv) + \operatorname{gen}_{\mathcal{Q}, \mathcal{D}_{\mathcal{Q}}(N)} \left(\alv \right) \right) \, .
    \end{equation}
\end{corol}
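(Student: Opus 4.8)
The plan is to recognize that this corollary is an immediate consequence of Theorem~\ref{thm:comparing-different-locally-scrambled-costs} combined with the bare definition of the generalization error, so the proof is essentially a two-line substitution. First I would unpack the right-hand side. By definition, for \emph{any} parameter setting $\alv$,
\[
\operatorname{gen}_{\mathcal{Q}, \mathcal{D}_{\mathcal{Q}}(N)}(\alv) = R_{\mathcal{Q}}(\alv) - C_{\mathcal{D}_{\mathcal{Q}}(N)}(\alv),
\]
so that $C_{\mathcal{D}_{\mathcal{Q}}(N)}(\alv) + \operatorname{gen}_{\mathcal{Q}, \mathcal{D}_{\mathcal{Q}}(N)}(\alv) = R_{\mathcal{Q}}(\alv)$. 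I would emphasize that this identity is uniform in $\alv$ (it does not rely on $\alv$ being the trained parameters $\alv_{\rm opt}$) and uses only that the training cost $C_{\mathcal{D}_{\mathcal{Q}}(N)}$ and the risk $R_{\mathcal{Q}}$ are defined with respect to the same target unitary $U$ and the same state ensemble $\mathcal{Q}$ (the training inputs $\ket{\Psi^{(j)}}$ being i.i.d.\ draws from $\mathcal{Q}$).

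Second, I would invoke the upper bound half of Theorem~\ref{thm:comparing-different-locally-scrambled-costs}. This applies exactly because both the testing ensemble $\mathcal{P}$ and the training ensemble $\mathcal{Q}$ are assumed to lie in $\mathbb{S}_{\rm LS}^{(2)}$, giving $R_{\mathcal{P}}(\alv) \leq 2\, R_{\mathcal{Q}}(\alv)$. Chaining this with the identity from the previous paragraph yields
\[
R_{\mathcal{P}}(\alv) \leq 2\, R_{\mathcal{Q}}(\alv) = 2\left( C_{\mathcal{D}_{\mathcal{Q}}(N)}(\alv) + \operatorname{gen}_{\mathcal{Q}, \mathcal{D}_{\mathcal{Q}}(N)}(\alv) \right),
\]
which is the claim.

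Since each step is a direct substitution, there is no genuine technical obstacle here; all the real work has already been done in proving Lemma~\ref{lem:locally-scrambled-cost-versus-hst-cost} and hence Theorem~\ref{thm:comparing-different-locally-scrambled-costs}. The only point worth flagging is interpretive rather than mathematical: the corollary is the \emph{deterministic} bridge that turns any in-distribution generalization bound for the (practically convenient) training ensemble $\mathcal{Q}$ — that is, a probabilistic bound on $\operatorname{gen}_{\mathcal{Q}, \mathcal{D}_{\mathcal{Q}}(N)}(\alv_{\rm opt})$ in terms of $N$ and the complexity of the QNN class $\{V(\alv)\}$ — into a guarantee on the out-of-distribution risk $R_{\mathcal{P}}(\alv_{\rm opt})$ over a potentially much more entangled testing ensemble $\mathcal{P}$, at the cost of only a harmless factor of $2$.
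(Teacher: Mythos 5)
Your proposal is correct and matches the paper's proof exactly: the appendix proof of Corollary~\ref{corol:supplementary-ood-generalization-from-id-generalization-unitary-learning} likewise applies the upper bound $R_{\mathcal{P}}(\alv) \leq 2 R_{\mathcal{Q}}(\alv)$ from Theorem~\ref{thm:supplementary-comparing-different-locally-scrambled-costs} and then rewrites $R_{\mathcal{Q}}(\alv)$ as the training cost plus the in-distribution generalization error. No gaps; the argument is the same two-line substitution.
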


Thus, when training using training data $\mathcal{D}_{\mathcal{Q}}(N)$, the out-of-distribution risk $R_{\mathcal{P}} (\alv_{\rm opt})$ of the optimized parameters $\alv_{\rm opt}$ after training is controlled in terms of the optimized training cost $C_{\mathcal{D}_{\mathcal{Q}}(N)} (\alv_{\rm opt})$ and the in-distribution generalization error $\operatorname{gen}_{\mathcal{Q}, \mathcal{D}_{\mathcal{Q}}(N)} \left(\alv_{\rm opt} \right)$.
We can now bound the in-distribution generalization error using already known QML in-distribution generalization bounds~\cite{caro2020pseudo, abbas2020power, huang2021power, bu2021onthestatistical, sharma2020reformulation, banchi2021generalization, du2021efficient, gyurik2021structural, caro2021encodingdependent, caro2021generalization, chen2021expressibility, popescu2021learning, cai2022sample} (or, indeed, any such bounds that are derived in the future). 
We point out that our results up to this point do not require any assumptions on the QNN architecture underlying $V(\alv)$, except for overall unitarity.
As a concrete example of guarantees that can be obtained this way, we combine Corollary~\ref{corol:ood-generalization-from-id-generalization-unitary-learning} with an in-distribution generalization bound established in~\cite{caro2021generalization} to prove:

\begin{corol}[Locally scrambled out-of-distribution generalization for QNNs]\label{corol:locally-scrambled-ood-generalization-qnn}
    Let $\mathcal{P} \in \mathbb{S}_{\rm LS}^{(2)}$ and $\mathcal{Q} \in \mathbb{S}_{\rm LS}^{(2)}$.
    Let $U$ be an unknown $n$-qubit unitary.
    Let $V(\alv)$ be an $n$-qubit unitary QNN with $T$ parameterized local gates.
    When trained with the cost $C_{\mathcal{D}_{\mathcal{Q}}(N)}$ using training data $\mathcal{D}_{\mathcal{Q}}(N)$, the out-of-distribution risk w.r.t.~$\mathcal{P}$ of the parameter setting $\alv_{\rm opt}$ after training satisfies
    \small
    \begin{equation}
        R_{\mathcal{P}} (\alv_{\rm opt})
        \leq 2 C_{\mathcal{D}_{\mathcal{Q}}(N)} (\alv_{\rm opt}) + \mathcal{O} \left( \sqrt{\frac{T \log (T)}{N}}\right)\, ,
    \end{equation}
    \normalsize
    with high probability over the choice of training data of size $N$ according to $\mathcal{Q}$.
\end{corol}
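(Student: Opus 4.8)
The plan is to obtain this corollary by chaining the two ingredients already available: the generic out-of-distribution template of Corollary~\ref{corol:ood-generalization-from-id-generalization-unitary-learning}, which reduces everything to an \emph{in-distribution} quantity, and an off-the-shelf in-distribution generalization bound for QNNs from Ref.~\cite{caro2021generalization}. First I would apply Corollary~\ref{corol:ood-generalization-from-id-generalization-unitary-learning} with the given ensembles $\mathcal{P},\mathcal{Q}\in\mathbb{S}_{\rm LS}^{(2)}$ and the trained parameters $\alv_{\rm opt}$, which yields
\[
    R_{\mathcal{P}}(\alv_{\rm opt})
    \;\le\; 2\left( C_{\mathcal{D}_{\mathcal{Q}}(N)}(\alv_{\rm opt}) + \operatorname{gen}_{\mathcal{Q},\,\mathcal{D}_{\mathcal{Q}}(N)}(\alv_{\rm opt}) \right).
\]
It then remains only to control $\operatorname{gen}_{\mathcal{Q},\,\mathcal{D}_{\mathcal{Q}}(N)}(\alv_{\rm opt})$, and here the testing and training distributions genuinely coincide (both are $\mathcal{Q}$), so this is exactly an in-distribution generalization error of the QNN $V(\alv)$ trained with the cost $C_{\mathcal{D}_{\mathcal{Q}}(N)}$.

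Second, I would invoke the in-distribution generalization bound of Ref.~\cite{caro2021generalization}. After checking its hypotheses, that result gives, with high probability over the $N$ i.i.d.\ training pairs drawn according to $\mathcal{Q}$, a uniform-over-parameters bound of the form
\[
    \left| \operatorname{gen}_{\mathcal{Q},\,\mathcal{D}_{\mathcal{Q}}(N)}(\alv) \right|
    \;\le\; \mathcal{O}\!\left( \sqrt{\frac{T\log T}{N}} \right),
\]
where $T$ is the number of parameterized local gates in $V(\alv)$. Applying this at $\alv=\alv_{\rm opt}$ and substituting into the inequality from the first step, the overall factor of $2$ is absorbed into the $\mathcal{O}(\cdot)$, delivering the claimed bound $R_{\mathcal{P}}(\alv_{\rm opt}) \le 2\,C_{\mathcal{D}_{\mathcal{Q}}(N)}(\alv_{\rm opt}) + \mathcal{O}(\sqrt{T\log T/N})$ with high probability.

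\textbf{Main obstacle.} The only real work is verifying that the cost used throughout this paper falls within the loss class for which the bound of Ref.~\cite{caro2021generalization} was established. Concretely one must check: (i) the per-example loss $\tfrac14\,\norm{U\dya{\Psi^{(j)}}U^\dagger - V(\alv)\dya{\Psi^{(j)}}V(\alv)^\dagger}_1^2$ is bounded in $[0,1]$ --- which is precisely the role of the $1/4$ normalization introduced in Eq.~\eqref{eq:test-cost-general}; (ii) this loss depends on the $T$ trainable local gates of $V(\alv)$ regularly enough (e.g.\ in a Lipschitz manner after the reformulation \eqref{eq:costfid} in terms of fidelities) for the covering-number / metric-entropy estimate over the parameterized gate set to apply, giving the $T\log T$ scaling; and (iii) the statement is a \emph{uniform} convergence bound over all $\alv$, hence valid at the data-dependent minimizer $\alv_{\rm opt}$. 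I expect this bookkeeping to be routine: if the fidelity-based cost here is not literally an instance of their loss family, one simply re-runs their covering-number argument for this specific loss, which leaves the $\sqrt{T\log T/N}$ rate unchanged. Everything else is substitution and absorbing constants.
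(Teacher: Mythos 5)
Your proposal is correct and follows exactly the paper's own route: the paper likewise applies the template of Corollary~\ref{corol:ood-generalization-from-id-generalization-unitary-learning} and then bounds the in-distribution generalization error $\operatorname{gen}_{\mathcal{Q},\mathcal{D}_{\mathcal{Q}}(N)}(\alv_{\rm opt})$ by $\mathcal{O}\bigl(\sqrt{T\log(T)/N}+\sqrt{\log(1/\delta)/N}\bigr)$ via Theorem~11 of Ref.~\cite{caro2021generalization}. The hypothesis-checking you flag as the ``main obstacle'' is indeed only routine bookkeeping here, since the fidelity-based cost of Eq.~\eqref{eq:costfid} is precisely the loss treated in that reference.
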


The out-of-distribution generalization guarantee of Corollary~\ref{corol:locally-scrambled-ood-generalization-qnn} is particularly interesting if the training data is drawn from a distribution composed only of products of single-qubit Haar-random or random stabilizer states, i.e. $\mathcal{Q} = \mathcal{S}_{{\rm Haar}_1^{\otimes n}}$ or $\mathcal{Q} = \mathcal{S}_{{\rm Stab}_1^{\otimes n}}$, but the testing data is drawn from more complex distributions such as the Haar ensemble or the outputs of random circuits, i.e. $\mathcal{P} = \mathcal{S}_{\rm Haar_n}$ or $\mathcal{P} = \mathcal{S}_{\rm RandCirc}$. 
In this case, Corollary~\ref{corol:locally-scrambled-ood-generalization-qnn} implies that efficiently implementable unitaries can be learned using a small number of simple unentangled training states. More precisely, if $U$ can be approximated via a QNN with $\poly (n)$ trainable local gates, then only $\poly (n)$ unique product training states suffice to learn the action of $U$ on the Haar distribution, i.e. across the entire Hilbert space. 

To understand why out-of-distribution generalization is possible, recall that any state is linearly spanned by $n$-qubit Pauli observables $P \in \{I, X, Y, Z\}^{\otimes n}$, and each Pauli observable $P$ can be written as a linear combination of product states $\ket{s}\!\bra{s} = \bigotimes_{i=1}^n \ket{s_i}\!\bra{s_i}$, where $s_i \in \{0, 1, +, -, y+, y-\}$.
These two facts imply that for any state $\ket{\phi}\!\bra{\phi}$, there exists coefficients $\alpha_s$, such that $\ket{\phi}\!\bra{\phi} = \sum_{s} \alpha_s \ket{s}\!\bra{s}$.
Hence, if we exactly know $U \ket{s}\!\bra{s} U^\dagger$ for all $6^n$ product states $\ket{s}\!\bra{s}$, then we can figure out $U \ket{\phi}\!\bra{\phi} U^\dagger$ for any state $\ket{\phi}\!\bra{\phi}$ by linear combination.
However, this requires an exponential number of product states in the training data.
In our prior work~\cite{caro2021generalization}, we show that one only needs $\poly (n)$ training product states to approximately know $U \ket{s}\!\bra{s} U^\dagger$ for most of the $6^n$ product states, assuming $U$ is efficiently implementable.
The key insight in this work is that one can predict $U \ket{\phi}\!\bra{\phi} U^\dagger$ as long as the coefficients $\alpha_s$ in $\ket{\phi}\!\bra{\phi} = \sum_{s} \alpha_s \ket{s}\!\bra{s}$ are sufficiently random and spread out across the $6^n$ product states.
We make this condition precise by defining locally scrambled ensembles and proving that the action of $U$ on a state sampled from any such ensemble can be predicted.
In Appendix \ref{ap:role-of-linearity}, we further discuss the role that linearity plays in our results.

\medskip

We can immediately extend our results for out-of-distribution to local variants of costs. Such local costs are essential to avoid cost-function-dependent barren plateaus~\cite{cerezo2021cost} when training a shallow QNN.
As a concrete example, when taking $\mathcal{S}_{{\rm Haar}_1^{\otimes n}}$ as the training ensemble, we can consider the local training cost
\small
\begin{equation}\label{eq:product-local-training-cost}
\begin{aligned}
        &C_{\rm \footnotesize Prod, N}^{L}(\alv) 
        \\ &= 1 -\frac{1}{N}\sum\limits_{j=1}^N \Tr \left[  \ket{\Psi_{\rm \footnotesize Prod}^{(j)}}\bra{\Psi_{\rm \footnotesize Prod}^{(j)}}  U^\dagger V(\alv) H_{\rm L}^{(j)} V(\alv)^\dagger U\right] \, ,
\end{aligned}
\end{equation}
\normalsize
where $\ket{\Psi_{\rm \footnotesize Prod}^{(j)}} = \bigotimes_{i=1}^n \ket{\psi_i^{(j)} } \sim \mathcal{S}_{{\rm Haar}_1^{\otimes n}}$ for all $j$ and we have introduced the local measurement operator $ H_{\rm L}^{(j)} = \frac{1}{n} \sum_{i=1}^n \ket{\psi_i^{(j)}} \bra{\psi_i^{(j)} }\otimes \mathds{1}_{\bar{i}}$. This local cost is faithful to its global variant for product state training in the sense that it vanishes under the same conditions~\cite{khatri2019quantum}, but crucially, in contrast to the global case, may be trainable~\cite{cerezo2021cost}.
In Appendix~\ref{ap:genproofs}, we prove a version of Corollary~\ref{corol:locally-scrambled-ood-generalization-qnn} when training on the local cost from Eq.~\eqref{eq:product-local-training-cost}. Specifically we find: 
\begin{corol}[Locally scrambled out-of-distribution generalization for QNNs via a local cost]\label{corol:locally-scrambled-ood-generalization-qnn-local-cost}
    Let $\mathcal{P} \in \mathbb{S}_{\rm LS}^{(2)}$ and
    let $U$ be an unknown $n$-qubit unitary.
    Let $V(\alv)$ be an $n$-qubit unitary QNN with $T$ parameterized local gates.
    When trained with the cost $C^{L}_{\rm \footnotesize Prod, N}$, the out-of-distribution risk w.r.t.~$\mathcal{P}$ of the parameter setting $\alv_{\rm opt}$ after training satisfies
    \small
    \begin{equation}
    \begin{aligned}
         R_{\mathcal{P}} (\alv_{\rm opt})
        \leq \, &2 n C^{L}_{\rm \footnotesize Prod, N}(\alv_{\rm opt}) + \mathcal{O} \left( n\sqrt{\frac{T \log (T)}{N}} \right)\, , 
    \end{aligned}
    \end{equation}
    \normalsize
    with high probability over the choice of training data of size $N$.
\end{corol}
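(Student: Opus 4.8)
The plan is to compose three ingredients: the equivalence of locally scrambled risks (Theorem~\ref{thm:comparing-different-locally-scrambled-costs}), a comparison between the global and local product-state costs, and an in-distribution generalization bound for the local cost. For \emph{Step 1}, note that the product-state ensemble $\mathcal{S}_{{\rm Haar}_1^{\otimes n}}$ lies in $\mathbb{S}_{\rm LS}^{(2)}$, so Theorem~\ref{thm:comparing-different-locally-scrambled-costs} applied with $\mathcal{Q}=\mathcal{S}_{{\rm Haar}_1^{\otimes n}}$ gives, for every $\alv$,
\begin{equation}
R_{\mathcal{P}}(\alv) \leq 2\, R_{\mathcal{S}_{{\rm Haar}_1^{\otimes n}}}(\alv) \, .
\end{equation}
By the fidelity form of the risk [cf.\ Eq.~\eqref{eq:costfid}], $R_{\mathcal{S}_{{\rm Haar}_1^{\otimes n}}}(\alv) = 1 - \mathbb{E}_{\ket{\Psi}\sim\mathcal{S}_{{\rm Haar}_1^{\otimes n}}}\big[|\bra{\Psi}V(\alv)^\dagger U\ket{\Psi}|^2\big]$ is exactly the population version of the \emph{global} product-state training cost, so it remains to control it by the local cost $C^{L}_{\rm \footnotesize Prod, N}$.

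For \emph{Step 2}, fix a product state $\ket{\Psi}=\bigotimes_{i=1}^n\ket{\psi_i}$, set $W=V(\alv)^\dagger U$, and let $\rho_i=\Tr_{\bar{i}}[W\dya{\Psi}W^\dagger]$ be the reduced state of the predicted output on qubit $i$. A short computation shows the per-sample summand of $C^{L}_{\rm \footnotesize Prod, N}$ equals $1-\tfrac1n\sum_{i=1}^n\bra{\psi_i}\rho_i\ket{\psi_i}$, while the per-sample summand of the global cost is $1-|\bra{\Psi}W\ket{\Psi}|^2$. Reading $|\bra{\Psi}W\ket{\Psi}|^2$ as the probability that a measurement of $W\ket{\Psi}$ in the product basis whose $i$-th factor is $\{\ket{\psi_i},\ket{\psi_i^\perp}\}$ returns all ``correct'' outcomes, and $\bra{\psi_i}\rho_i\ket{\psi_i}$ as the marginal probability of the correct outcome on qubit $i$, the union bound yields $1-|\bra{\Psi}W\ket{\Psi}|^2\leq\sum_i\big(1-\bra{\psi_i}\rho_i\ket{\psi_i}\big)$, i.e.\ the global summand is at most $n$ times the local summand. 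This is the cost-faithfulness relation of~\cite{khatri2019quantum,cerezo2021cost}. Averaging over the $N$ samples (resp.\ taking the expectation over $\mathcal{S}_{{\rm Haar}_1^{\otimes n}}$) gives $R_{\mathcal{S}_{{\rm Haar}_1^{\otimes n}}}(\alv)\leq n\, R^{L}_{\rm \footnotesize Prod}(\alv)$, where $R^{L}_{\rm \footnotesize Prod}$ is the population local cost.

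For \emph{Step 3}, observe that each local loss term $1-\Tr[\dya{\Psi^{(j)}}U^\dagger V(\alv)H_{\rm L}^{(j)}V(\alv)^\dagger U]$ is bounded in $[0,1]$ (since the $1/n$ normalization makes $0\leq H_{\rm L}^{(j)}\leq\mathds{1}$) and depends on $\alv$ only through the $T$ parameterized local gates of $V(\alv)$, with the same Lipschitz-type dependence as the global fidelity cost. Hence the covering-number / metric-entropy argument of~\cite{caro2021generalization} applies essentially unchanged, giving, with high probability over the $N$ training states, the uniform bound $R^{L}_{\rm \footnotesize Prod}(\alv)\leq C^{L}_{\rm \footnotesize Prod, N}(\alv)+\mathcal{O}(\sqrt{T\log(T)/N})$ for all $\alv$, in particular for $\alv_{\rm opt}$. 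Chaining the three steps,
\begin{equation}
R_{\mathcal{P}}(\alv_{\rm opt})\leq 2R_{\mathcal{S}_{{\rm Haar}_1^{\otimes n}}}(\alv_{\rm opt})\leq 2n\, C^{L}_{\rm \footnotesize Prod, N}(\alv_{\rm opt})+\mathcal{O}\!\left(n\sqrt{\frac{T\log T}{N}}\right),
\end{equation}
which is the claim.

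The main obstacle is Step 3: one must check that replacing the fixed observable $\dya{\Psi^{(j)}}$ by the per-sample local operator $H_{\rm L}^{(j)}$ leaves the relevant hypothesis-class complexity (the sup-norm covering number over $\alv$) unchanged and introduces no hidden dimension-dependent factors in the generalization bound, so that the $\sqrt{T\log T/N}$ scaling of~\cite{caro2021generalization} really does transfer verbatim. Once that is secured, the union-bound comparison of Step 2 and the invocation of Theorem~\ref{thm:comparing-different-locally-scrambled-costs} in Step 1 are routine.
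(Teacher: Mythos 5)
Your proposal is correct and follows essentially the same route as the paper: invoke Theorem~\ref{thm:comparing-different-locally-scrambled-costs} with $\mathcal{Q}=\mathcal{S}_{{\rm Haar}_1^{\otimes n}}$, use the faithfulness relation $R_{\mathcal{S}_{{\rm Haar}_1^{\otimes n}}}\leq n\,R^{L}_{\rm Prod}$ (which the paper imports directly from \cite[Appendix C]{khatri2019quantum} rather than re-deriving via your union-bound argument, though that derivation is sound), and then apply the in-distribution generalization bound of \cite{caro2021generalization} to the local cost. The only point you flag as an obstacle --- that the covering-number argument transfers to the local cost with per-sample observables $H_{\rm L}^{(j)}$ --- is likewise left implicit in the paper's Remark on local costs, which simply substitutes $nC^{L}$ for the global cost and scales the generalization term by $n$.
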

Clearly, analogous local variants of the training cost can be defined whenever the respective ensemble has a tensor product structure (such as $\mathcal{S}_{{\rm Stab}_1^{\otimes n}}$).  However, if the training data is highly entangled, constructing such local costs in this manner is not possible. Thus, this is another important consequence of our results: The ability to train solely on product state inputs makes it straightforward to generate the local costs that are necessary for efficient training. 

The results presented thus far concern the number of unique training states required for generalization, but in practice multiple copies of each training state will be needed for successful training. As $\mathcal{O}(1/\epsilon^2)$ shots are required to evaluate a cost to precision $\epsilon$ and since for gradient based training methods one needs to evaluate the partial derivative of the cost with respect to each of the $T$ trainable parameters, one would expect to need on the order of $\mathcal{O}(T M_{\rm opt}/\epsilon^2)$ copies of each of the $N$ input states and output states to reduce the cost to $\mathcal{\epsilon}$.  
Here $M_{\rm opt}$ is the number of optimization steps.
Classical shadow tomography~\cite{huang2020predicting, elben2022randomized, huang2021provably} provides a way towards a copy complexity bound that is independent of the number of optimization steps.
Namely, exploiting covering number bounds for the space of pure output states of polynomial-size quantum circuits (compare~\cite{caro2021generalization, huang2021quantum}), polynomial-size classical shadows can be used to perform tomography among such states. 
In the case of an efficiently implementable target unitary $U$ and QNN $V(\alv)$ that both admit a circuit representation with $T\in\mathcal{O}(\poly (n))$ local gates, $\mathcal{O} (T\log(T/\epsilon)/\epsilon^2)\leq\Tilde{\mathcal{O}}(\poly (n)/\epsilon^2)$ copies of each of the input states $\ket{\Psi^{(j)}}$ and output states $\ket{\Phi^{(j)}}$ suffice to approximately evaluate the cost (both the global and local variants) and its partial derivatives arbitrarily often.

\medskip 

\subsection{Numerical Results}\label{sec:numerical-results}

Here we provide numerical evidence to support our analytical results showing that out-of-distribution generalization is possible for the learning of quantum dynamics. We focus on the task of learning the parameters of an unknown target Hamiltonian by studying the evolution of product states under it. 

For concreteness, we suppose that the target Hamiltonian is of the form
\small
\begin{equation}\label{eq:Hamiltonian}
    H(\vec{p},\vec{q},\vec{r}) = \sum_{k=1}^{n-1} \left( Z_k Z_{k+1} + p_k X_k X_{k+1} \right) + \sum_{k=1}^{n}\left( q_k X_k + r_k Z_k \right) \ ,
\end{equation}
\normalsize
with the specific parameter setting $(\vec{p}^\ast, \vec{q}^\ast, \vec{r}^\ast)$ given by $p_k^\ast = \sin \left( \frac{\pi k}{2n} \right)$ for $1 \leq k \leq n-1$ and $q_k^\ast = \sin  \left( \frac{\pi k}{n} \right)$, $r_k^\ast = \cos \left( \frac{\pi k}{n} \right)$ for $1 \leq k \leq n$.
The learning is performed by comparing the exact evolution under $e^{-i H(\vec{p}^\ast,\vec{q}^\ast,\vec{r}^\ast) t}$ to a Trotterized ansatz. Specifically, we use an $L$ layered ansatz
$V_L(\vec{p},\vec{q},\vec{r}) := \left( U_{\Delta t}(\vec{p},\vec{q},\vec{r})\right)^L$ where $U_{\Delta t}$ is a second order Trotterization of $e^{-i H(\vec{p},\vec{q},\vec{r}) \Delta t}$. That is,
\small
\begin{equation} \label{eq:soT}
    U_{\Delta t}(\vec{p},\vec{q},\vec{r}) 
    = e^{-iH_A(\vec{r}) \Delta t/2} e^{-iH_B(\vec{p},\vec{q}) \Delta t} e^{-iH_A(\vec{r}) \Delta t/2}   
\end{equation}
\normalsize
where the Hamiltonians $H_A(\vec{r}) := \sum_{k=1}^{n-1} Z_k Z_{k+1} + \sum_{k=1}^n r_k Z_k $ and $H_B(\vec{p}, \vec{q}) := \sum_{k=1}^{n-1} p_k X_k X_{k+1} + \sum_{k=1}^n q_k X_k$ contain only commuting terms and so can be readily exponentiated.

We attempt to learn the vectors $\vec{p}^\ast$, $\vec{q}^\ast$, and $\vec{r}^\ast$ by comparing $e^{-i H(\vec{p}^\ast,\vec{q}^\ast,\vec{r}^\ast) t}\ket{\psi_j}$ and $V_L(\vec{p},\vec{q},\vec{r})\ket{\psi_j}$ over $N$ random product states $\ket{\psi_j}$. To do so, we use the training data $ \mathcal{D}_{\mathcal{Q}}(N)$ with $\mathcal{Q} = \mathcal{S}_{{\rm Haar}_1^{\otimes n}}$, and the cost function given in Eq.~\eqref{eq:costfid}.
The learning is performed classically for $n = 4, \ldots, 12$ and $L = 2, \ldots, 5$ and we take the total evolution time to be $t=0.1$. For all values of $n$ we train on two product states, i.e. $N =2$. We repeated the optimization 5 times in each case and kept the best run. While the small training data size $N=2$ was sufficient for the model considered here, in Appendix \ref{ap:additional-numerics} we present a more involved unitary learning setting that requires larger values of $N$.

Fig.~\ref{fig:main_numerics} plots the in-distribution risk and out-of-distribution risk as a function of the final optimized cost function values, $C_{\mathcal{D}_{\mathcal{Q}}(2)}(\vec{\alpha_{\rm opt}})$ with $\mathcal{Q} = \mathcal{S}_{{\rm Haar}_1^{\otimes n}}$. Here the in-distribution risk is the average prediction error over random product states, i.e. $R_{\mathcal{S}_{{\rm Haar}_1^{\otimes n}}}$, and for the out-of-distribution testing we chose to compute the risk over the global Haar distribution, i.e. $R_{\mathcal{S}_{{\rm Haar}_n}}$. 
These risks can be evaluated analytically using Lemma \ref{lem:locally-scrambled-cost-explicit-expression} and Eqs.~\eqref{eq:hst-cost}, \eqref{eq:hst-cost-versus-expected-haar-cost}.
The linear correlation between the cost function and both $R_{\mathcal{S}_{{\rm Haar}_1^{\otimes n}}}$ and $R_{\mathcal{S}_{{\rm Haar}_n}}$ demonstrates that both in-distribution and out-of-distribution generalization have been successfully achieved.

\begin{figure}[t]
\centering
\includegraphics[width =\columnwidth]{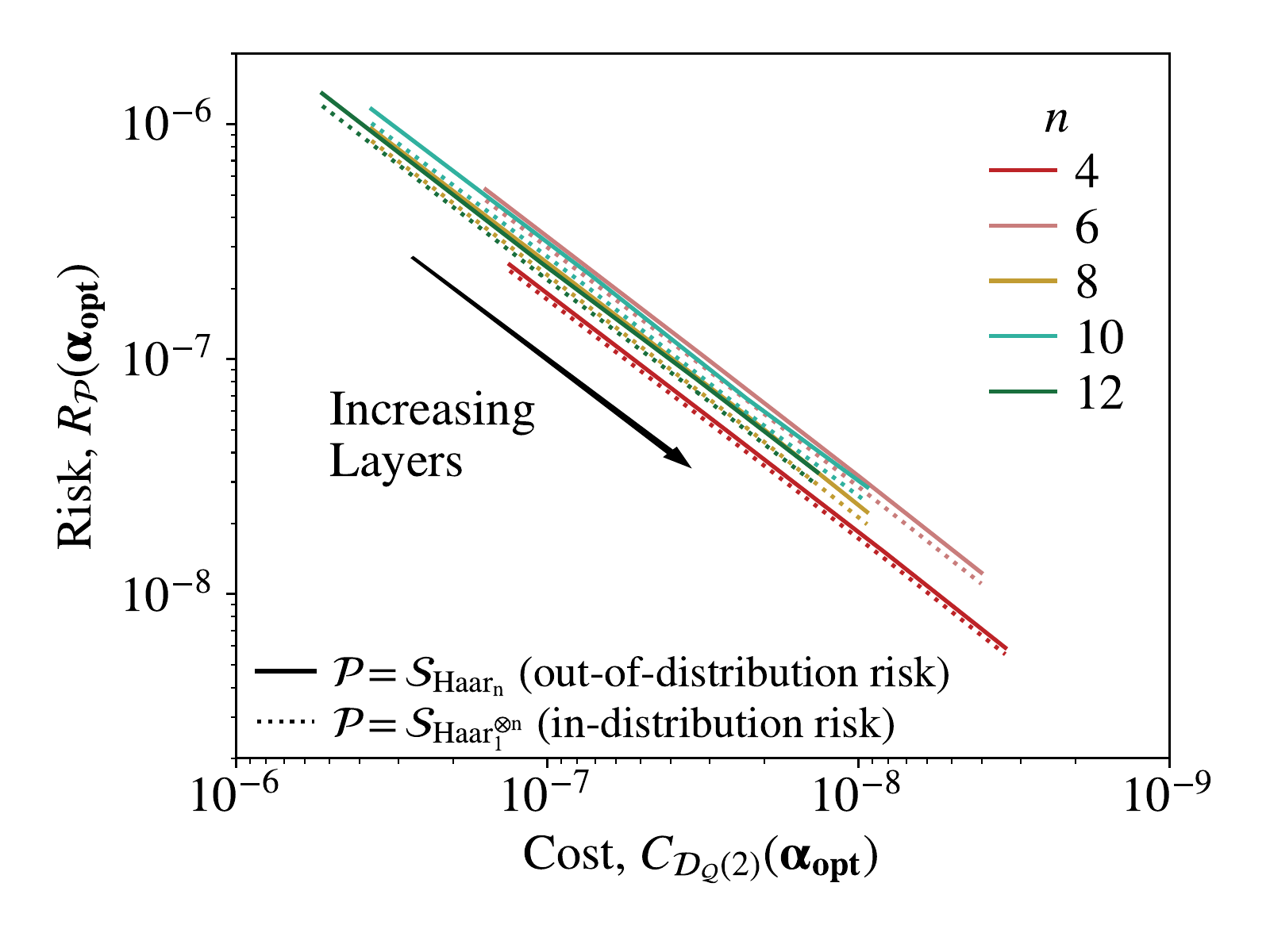}
\vspace{-6mm}
\caption{\small \textbf{Out-of-Distribution Generalization for Hamiltonian Learning.} Here we present our results from learning the Hamiltonian specified in Eq.~\eqref{eq:Hamiltonian} by training on only 2 product states. As the number of layers $L$ in the ansatz is increased the obtainable cost function value decreases. We plot the correlation between the optimized cost $C_{\mathcal{D}_{\mathcal{Q}}(2)}(\vec{\alpha_{\rm opt}})$ with $\mathcal{Q} = \mathcal{S}_{{\rm Haar}_1^{\otimes n}}$, and the (in-distribution) risk over product states, $R_{\mathcal{S}_{{\rm Haar}_1^{\otimes n}}}$, and (out-of-distribution) risk over the Haar measure, $R_{\mathcal{S}_{{\rm Haar}_n}}$. The lines indicate the joined values for $L = 2,3,4,5$ for the different values of $n$ indicated in the legend.}
\label{fig:main_numerics}
\end{figure}

\begin{figure}[t]
\centering
\includegraphics[width =\columnwidth]{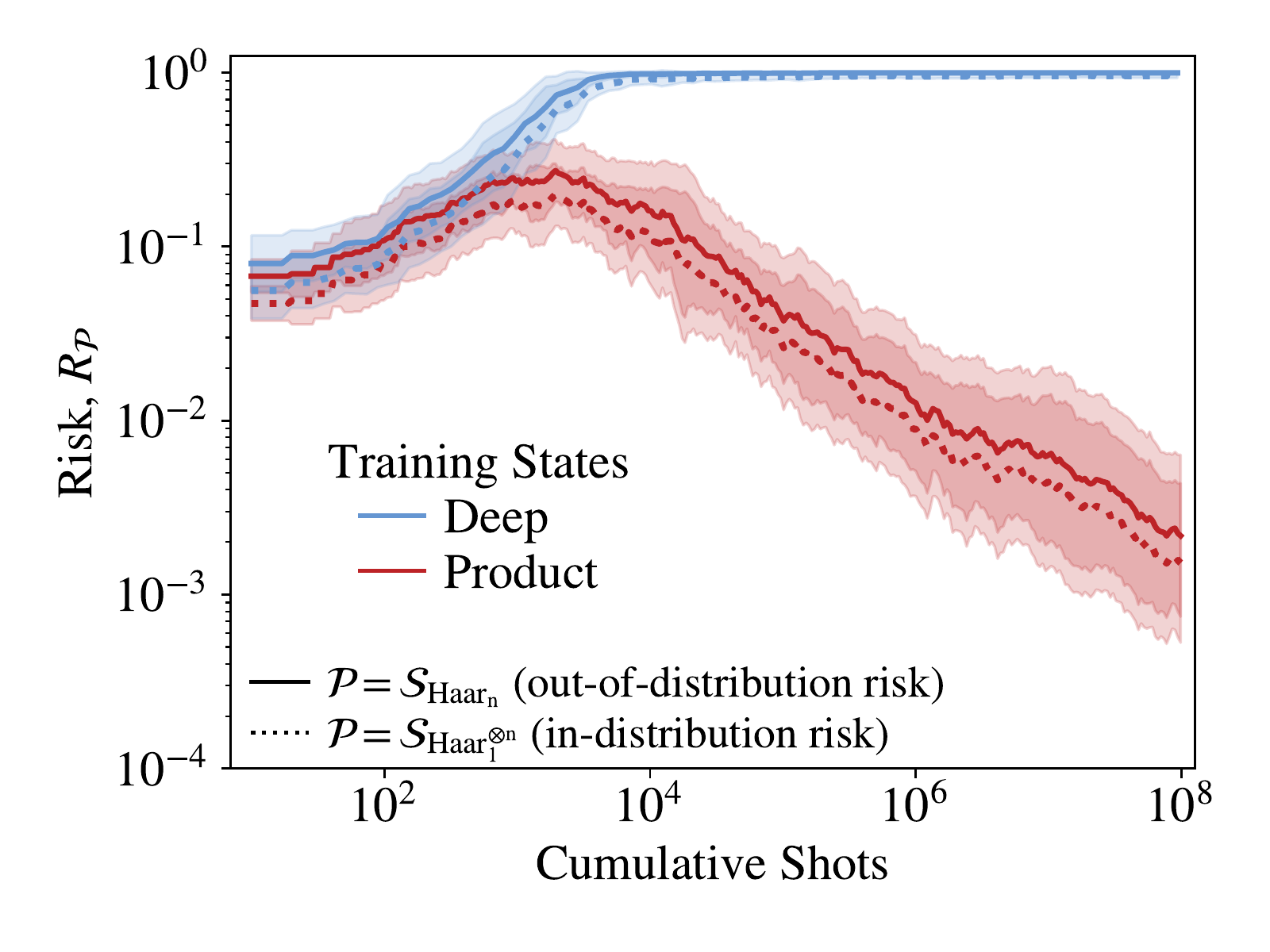}
\vspace{-6mm}
\caption{\small \textbf{Training in the presence of noise.} The cost function is optimized for two types of training data: (i) product states (red lines) and (ii) states prepared with deep circuits (blue lines). We performed 20 independent optimizations, each time initializing the optimization differently and selecting a different random training set. The shaded region represents the standard deviation of all 20 runs. Dotted (solid) lines represent in-distribution (out-of-distribution) risk.}
\label{fig:noisy_numerics}
\end{figure}

Next, we perform noisy simulations to assess the performance of learning the parameters of the Hamiltonian in Eq.~\eqref{eq:Hamiltonian} in two situations: (i) the training is performed on random product states and (ii) the training data is prepared with deep quantum circuits. We expect that the presence of noise will have a different impact depending on the amount of noise that is accumulated during the preparation of the training states.

Our simulations used a realistic noise model based on gate-set tomography on the IBM Ourense superconducting qubit device~\cite{cincio2020machine} but with the experimentally obtained error rates reduced by a factor of 20 to make the difference in training more pronounced. The training set is constructed from just two states (either product states or those prepared with a linear depth hardware efficient circuits). 

The optimizer is a version of the gradient-free Nelder-Mead method~\cite{nelder1965simplex}. The cost function in Eq.~\eqref{eq:costfid} is computed with an increasing number of shots, starting with 10 shots per cost function evaluation. That number is increased by 50\% once the optimizer detects a lack of progress within a specified number of iterations. This optimization procedure is sensitive to flatness of the cost function landscape: The flatter the landscape, the more shots are needed to resolve it and find a minimizing direction. 

Fig.~\ref{fig:noisy_numerics} shows the results of the training procedure performed on an $n=6$ qubit system. Here, we train the $L=2$ ansatz for $V_L(\vec{p},\vec{q},\vec{r})$ and consider total evolution time $t=0.1$. The optimization is repeated 20 times, each time starting with different random initial point $(\vec{p}_0,\vec{q}_0,\vec{r}_0)$. Red (blue) lines indicate the risk obtained for product (deep circuit) training states as a function of total number of shots. 

Training with product states is successful: once the number of shots per cost function evaluation is large enough (total shots above $10^3$), the optimizer detects the downhill direction and the in-distribution risk is gradually decreased, eventually reaching $10^{-3}$. The out-of-distribution risk closely follows the in-distribution risk proving that generalization can be achieved with product training states under realistic noise and finite shot budget conditions. In contrast, the training set built with deep circuits fails to produce successful training for all 20 optimization runs. Even in the limit of very large number of shots, both in-distribution and out-of-distribution risks remain large.
This proof-of-principle numerical experiment shows that our out-of-distribution generalization guarantees can make training and learning feasible in noisier scenarios than otherwise viable.

\section{Discussion}\label{sec:discussion}

Our work establishes that for learning unitaries, QNNs trained on quantum data enjoy out-of-distribution generalization between some physically relevant distributions if the training data size is roughly the number of trainable gates. The class of locally scrambled distributions that our results hold for fall naturally into sub-classes of training ensembles and testing ensembles, characterized by their practicality and generality respectively. The simplest possible training ensemble in this context are products of stabilizer states. Our results show that training on this easy to experimentally prepare and easy to classically simulate ensemble generalizes to the uniform Haar ensemble of states, as well as to practically motivated ensembles such as the output of random circuits. Thus, somewhat surprisingly, we have shown the action of quantum unitaries can be predicted on a wide class of highly entangled states, having only observed their action on relatively few unentangled states.

These results have implications for the practicality of learning quantum dynamics. We are particularly intrigued by the possibility of using quantum hardware or experimental systems to characterize unknown dynamics of quantum experimental systems. This could be done by coherently interacting a quantum system with a quantum computer, or alternatively could be conducted in a more conventional experimental setup. We stress for the latter, the experimental setup may not be equipped with a complete gate set, and so our proof that learning can be done using only products of random single qubits states, which require only simple single-qubit gates to prepare, is particularly important. 

We are also interested in the potential of these results to ease the classical compilation of local short-time evolutions into shorter depth circuits~\cite{khatri2019quantum} and circuits of a particular desired structure~\cite{cirstoiu2020variational,gibbs2021long}. 
Since low-entangling unitaries and product states may be classically simulated using tensor network methods, our results show that the compilation of such unitaries may be performed entirely classically.
This could be used to develop more effective methods for dynamical simulation or to learn more efficient pulse sequences for noise resilient gate implementations. 

An immediate extension of our results would be to investigate whether our proof techniques can be used to more efficiently evaluate Haar integrals, or more generally to relate averages over different locally scrambled ensembles in other settings. For example, one might explore whether they could be used in a DQC1 (Deterministic Quantum Computation with 1 clean qubit) setting where one inputs a maximally mixed state~\cite{knill1998power}. Alternatively, one might investigate whether they could be used to bound the frame potential of an ensemble, an important quantity for evaluating the randomness of a distribution that has links with quantifying chaotic behavior~\cite{roberts2017chaos}.

In this paper we have focused on the learning of quantum dynamics, in particular the learning of unitaries, using locally scrambled distributions. 
Given recent progress on different quantum channel learning questions \cite{flammia2020efficient, harper2020efficient, harper2021fast, flammia2021pauli, chen2022quantum, chung2021sample, caro2021binary, fanizza2022learning, huang2022foundations, huang2022learning, caro2022learning}, it is natural to ask whether out-of-distribution generalization is possible for other QML tasks such as learning quantum channels~\footnote{We note that our proof techniques extend beyond unitary dynamics to doubly stochastic quantum channels, which can be understood as affine combinations of unitary channels~\cite[Theorem 1]{mendl2009unital}.} or, more generally, for performing classification tasks such as classifying phases of matter \cite{caro2021generalization, uvarov2020machine, monaco2022quantum}. It would further be valuable to investigate whether out-of-distribution generalization is viable for other classes of distributions. Such results, if obtainable, would again have important implications for the practicality of QML on near term hardware and restricted experimental settings.

Our approach to out-of-distribution generalization does not rely on specific learning algorithms, nor transfer learning techniques, as is often the case in the classical literature~\cite{quinonero2008dataset, shimodaira2000improving, shen2021towards, pratt1991direct, pan2009survey}. Rather, we establish generalization guarantees that apply to a specific QML task (learning quantum dynamics) with data coming from a specific class of distributions (locally scrambled ensembles). That is, we show that in this context, out-of-distribution generalization is essentially automatic.
In the classical ML literature, a similar-in-spirit focus on properties of the class of distributions of interest can for example be seen in the concepts of invariance~\cite{arjovsky2019invariant, arjovsky2021out-of-distribution} and variation~\cite{ye2021towards} of features, but the nature of these properties is still quite different from the ones that we consider. Nevertheless, we hope that combining such perspectives from classical ML theory with physics-informed choices of distributions, as in our case, will lead to a better understanding of out-of-distribution generalization.

\section{Methods}

In this section, we give an overview over the 
proof strategy leading to our central analytical result contained in Lemma~\ref{lem:locally-scrambled-cost-versus-hst-cost}.
At a high level, our proof boils down to rewriting $R_{\mathcal{S}_{{\rm Haar}_n}}(\vec{\alpha})$ and $R_{\mathcal{Q}} \left( \vec{\alpha} \right)$ with $\mathcal{Q}$ locally scrambled into forms which are comparable by known and newly derived inequalities.

First, we recast the Haar risk $R_{\mathcal{S}_{{\rm Haar}_n}}(\vec{\alpha})$ into an average over Pauli products and upper bound it by a risk over local stabilizer states. 
To do so, we rewrite the Haar risk by recalling the relationship between the (Haar) average gate fidelity between two unitaries $U$ and $V$ and the Hilbert-Schmidt inner product~\cite{nielsen2002simple},  
\small
\begin{equation}
\begin{aligned}
    R_{\mathcal{S}_{{\rm Haar}_n}}(\vec{\alpha}) 
    &= \mathbb{E}_{\ket{\Psi}\sim\mathcal{S}_{{\rm Haar}_n}} \left[1 -  | \bra{\Psi}   U^\dagger V(\alv) \ket{\Psi} |^2 \right] \\
    &= \frac{d}{d+1} \left( 1 -  \frac{1}{d^2} | \Tr[U^\dagger V(\vec{\alpha}) ] |^2 \right) \, .
\end{aligned}
\end{equation}
\normalsize
Next, we use the Pauli basis expansion of the swap operator to write the Haar risk as an average over Pauli operators. That is, as shown explicitly in Lemma~\ref{lem:hst-cost-pauli-average}, we use
\begin{equation}
    \operatorname{SWAP} = \sum_{P\in\{\mathds{1}, X, Y, Z\}^{\otimes n}} P \otimes P
\end{equation}
to show that 
\begin{align}
       | \Tr[U^\dagger V ] |^2 
        &= \frac{1}{d}\sum_{P\in\{\mathds{1}, X, Y, Z\}^{\otimes n}} \Tr [ P U^\dagger V P V^\dagger U ]\, .
\end{align}
This gives an expression for the Haar risk $R_{\mathcal{S}_{{\rm Haar}_n}}(\vec{\alpha})$
in terms of an average over Pauli products. 
This average over Pauli observables can then be upper bounded by an average over products of stabilizer states by introducing a spectral decomposition, as detailed in Lemma~\ref{lem:auxiliary-overlap-bounds} and Corollary~\ref{corol:hst-cost-bound-local-haar-average}. Finally, by the 2-design property of the random single-qubit stabilizer states, we can rewrite this upper bound in terms of a local Haar average,
\begin{equation}
\begin{aligned}
    &\frac{d+1}{d}R_{\mathcal{S}_{{\rm Haar}_n}}(\vec{\alpha}) 
    \leq 2\left( 1 - \chi \right)\, \, \, \, \, \, \text{where,} \\
    &\chi = \mathbb{E}_{\bigotimes_{i=1}^n \ket{\psi_i}\sim {\rm Haar}_1^{\otimes n}} \left[ \left\lvert\left(\bigotimes_{i=1}^n \bra{\psi_i}\right) \Tilde{U}^\dagger W \Tilde{U} \left(\bigotimes_{i=1}^n \ket{\psi_i}\right) \right\rvert^2 \right] \, .
\end{aligned}
\end{equation}
The latter can then be related to $R_{\mathcal{Q}} \left( \vec{\alpha} \right)$ because $\mathcal{Q}$ is locally scrambled, which then leads to the first inequality in Lemma~\ref{lem:locally-scrambled-cost-versus-hst-cost}. Here the choice to bound by $\text{Haar}_1^{\otimes n}$ specifically hints towards our final result that a unitary can be learnt over the Haar average from product state training data.

Second, we recast the generic locally scrambled risk $R_{\mathcal{Q}}$ into a sum of locally scrambled expectation values over different partitions of the system. Specifically, using a well known expression for the complex second moment of the single-qubit Haar measure~(see, e.g., Eq. (2.26) in \cite{roberts2017chaos}), 
\begin{align}
        \mathbb{E}_{\ket{\psi}\sim{\rm Haar}_1}\left[
        \ket{\psi}\bra{\psi}^{\otimes 2} \right] 
        &= \frac{\mathds{1} \otimes \mathds{1} + \operatorname{SWAP}}{6} \, ,
\end{align}
we find that 
\small
\begin{equation}
      R_{\mathcal{Q}} \left( \vec{\alpha} \right) 
      = 1 - \frac{1}{6^n} \sum_{A \subseteq \{1, \ldots, n\}} \mathbb{E}_{\Tilde{U}\sim \tilde{\mathcal{U}}}\norm{\Tr_{A^c}\left[\Tilde{U}^\dagger U^\dagger V \Tilde{U}\right]}_F^2 \, ,
\end{equation}
\normalsize
where $\tilde{U}\sim\Tilde{\mathcal{U}}$ is drawn from the locally scrambled unitary ensemble $\Tilde{\mathcal{U}}$ with $\mathcal{Q}=\Tilde{\mathcal{U}}\ket{0}^{\otimes n}$. 
See Lemma~\ref{lem:locally-scrambled-cost-explicit-expression} for more details. From here, we can use matrix-analytic inequalities to show a lower bound on the Frobenius norm of a partial trace of a matrix in terms of the absolute value of the trace of the original matrix. Plugging this lower bound into the explicit expression for $R_{\mathcal{Q}} \left( \vec{\alpha} \right)$ translates exactly to the second inequality in Lemma~\ref{lem:locally-scrambled-cost-versus-hst-cost}.

\section*{Data availability}

The data generated and analyzed during the current study are available from the authors upon request.

\section*{Code availability}

Further implementation details are available from the authors upon request.

\vspace{0.15in}


\bibliography{quantum}

\begin{acknowledgments}
    We thank Marco Cerezo for helpful conversations.
    We thank the reviewers at Nature Communications for their valuable feedback.
    MCC was supported by the TopMath Graduate Center of the TUM Graduate School at the Technical University of Munich, Germany, the TopMath Program at the Elite Network of Bavaria, by a doctoral scholarship of the German Academic Scholarship Foundation (Studienstiftung des deutschen Volkes), by the BMWK (PlanQK), and by a DAAD PRIME Fellowship. NE was supported by the U.S. DOE, Department of Energy Computational Science Graduate Fellowship under Award Number DE-SC0020347.
    HH is supported by a Google PhD Fellowship.
    PJC and ATS acknowledge initial support from the Los Alamos National Laboratory (LANL) ASC Beyond Moore's Law project. Research presented in this paper (ATS) was supported by the Laboratory Directed Research and Development (LDRD) program of Los Alamos National Laboratory under project number 20210116DR.
    LC acknowledges support from LDRD program of LANL under project number 20230049DR. LC and PJC were also supported by the U.S. DOE, Office of Science, Office of Advanced Scientific Computing Research, under the Accelerated Research in Quantum Computing (ARQC) program. 
    ZH acknowledges support from the LANL Mark Kac Fellowship and from the Sandoz Family Foundation-Monique de Meuron program for Academic Promotion.
\end{acknowledgments}

\clearpage 
\appendix

\setcounter{page}{1}
\renewcommand\thefigure{\thesection\arabic{figure}}
\setcounter{figure}{0} 

\onecolumngrid

\begin{center}
\large{ Supplementary Material for \\ ``Out-of-Distribution Generalization for Learning Quantum Dynamics''
}
\end{center}

\section{Preliminaries}\label{app:Prelim}

\medskip 

\setcounter{thm}{0}
\setcounter{corol}{0}
\setcounter{lem}{0}
\setcounter{propos}{0}
\setcounter{defn}{0}
\setcounter{rmk}{0}
\setcounter{ex}{0}

Before beginning our main discussion, we introduce some notation that will be used throughout the Supplementary Material.
In our discussion, we consider systems consisting of $n$ qubits. Thus, we work with the complex Hilbert space $(\mathbb{C}^2)^{\otimes n}$ of dimension $d=2^n$.
For any $d$, $\mathcal{B}(\mathbb{C}^d)$ denotes the set of bounded linear operators on $\mathbb{C}^d$, which we implicitly identify with the set of $d\times d$ matrices by fixing a basis whenever convenient.
Also, we denote by $\mathcal{U}(\mathbb{C}^d)$ the set of unitary operators on $\mathbb{C}^d$. The sets of operators $\mathcal{B}((\mathbb{C}^2)^{\otimes n})$ and $\mathcal{U}((\mathbb{C}^2)^{\otimes n})$ are defined analogously.
Finally, we use standard bra-ket notation for pure quantum states.

We consider the task of learning an unknown $n$-qubit unitary $U\in\mathcal{U}((\mathbb{C}^{2})^{\otimes n})$ from pairs of input and output states using a \emph{quantum neural network (QNN)}.
For our purposes, we think of a QNN as a (possibly variable-structure) $k$-local quantum circuit on $n$ qubits that contains tunable gates. (Here, $k$ is an $n$-independent constant). 
Mathematically, we describe such a QNN by a parameterized $n$-qubit unitary $V(\alv)$ with classical parameters $\alv$, where the parameterization arises from the QNN structure.
The parameter vector $\alv$ can consist of both continuous parameters (which indeed parameterize the trainable gates, e.g. acting as rotation angles) and discrete parameters (which encode freedom in the chosen quantum circuit structure, e.g. the number of trainable gates). 
The input to the quantum learning procedure is a training data set $\mathcal{D}_{\mathcal{Q}} (N)$ of the form 
\begin{equation}\label{eq:supplementary-training-data-general}
    \mathcal{D}_{\mathcal{Q}} (N) = \{(\ket{\Psi^{(j)}},\ket{\Phi^{(j)}}) \}_{j=1}^{N}\, 
\end{equation}
where the $\ket{\Psi^{(j)}}\in (\mathbb{C}^{2})^{\otimes n}$ are pure $n$-qubit input states drawn i.i.d.~from a probability distribution $\mathcal{Q}$ and $\ket{\Phi^{(j)}} = U \ket{\Psi^{(j)}}$ are the corresponding output states.
The goal is to train the classical parameters $\alv$ in the QNN $V(\alv)$ such that the QNN $V(\alv_{{\rm opt}})$ with the optimized parameters $\alv_{{\rm opt}}$ predicts the output states of $U$ well on average when the input states are drawn from a testing probability distribution $\mathcal{P}$ over pure $n$-qubit states.
That is, the optimized parameters $\alv_{{\rm opt}}$ should be such that 
\begin{align}
    R_{\mathcal{P}} \left( U, V(\alv_{{\rm opt}}) \right)
    &\coloneqq \frac{1}{4}\mathbb{E}_{\ket{\Psi}\sim \mathcal{P}} \left[ \norm{U\ket{\Psi}\bra{\Psi}U^\dagger - V(\alv_{{\rm opt}})\ket{\Psi}\bra{\Psi}V(\alv_{{\rm opt}})^\dagger}_1^2\right]\\ 
    &= 1 - \mathbb{E}_{\ket{\Psi}\sim \mathcal{P}} \left[ \left\lvert \bra{\Psi} U^\dagger V(\alv_{{\rm opt}}) \ket{\Psi} \right\rvert^2 \right]\label{eq:supplementary-test-cost-general}
\end{align}
is small. 
A learner who does not know the testing distribution $\mathcal{P}$ and the target unitary $U$ cannot evaluate the expected testing risk from Eq.~\eqref{eq:supplementary-test-cost-general}. 
Instead, given a training data set as in Eq.~\eqref{eq:supplementary-training-data-general}, the learner may try to evaluate and optimize the training cost 
\begin{align}
    C_{\mathcal{D}_{\mathcal{Q}}(N)} (U, V(\alv))
    &\coloneqq \frac{1}{4N} \sum_{j=1}^N \norm{U\ket{\Psi^{(j)}}\bra{\Psi^{(j)}}U^\dagger - V(\alv)\ket{\Psi^{(j)}}\bra{\Psi^{(j)}}V(\alv)^\dagger}_1^2\\
    &= 1 - \frac{1}{N} \sum_{j=1}^N \left\lvert \bra{\Psi^{(j)}} U^\dagger V(\alv)\ket{\Psi^{(j)}}\right\vert^2 \, .\label{eq:supplementary-training-cost-general}
\end{align}
Here, we rewrite the trace norm distance between two pure states in terms of their fidelity to obtain an expression for the training cost that can be evaluated on a quantum computer with a swap test~\cite{buhrman2001quantum, gottesman2001quantum}.

Optimizing the training cost from Eq.~\eqref{eq:supplementary-training-cost-general}, however, is not automatically a promising avenue towards achieving a small expected testing risk from Eq.~\eqref{eq:supplementary-test-cost-general}.
Such a promise can only be fulfilled when a good performance on the available training data, i.e. a small value of $C_{\mathcal{D}_{\mathcal{Q}}(N)} (U, V(\alv_{{\rm opt}}))$, also leads to a good average performance on previously unseen data points, i.e. to a small value of $R_{\mathcal{P}} \left( U, V(\alv_{{\rm opt}}) \right)$.
This ability to \emph{generalize} from training data to unseen data is of central importance to the viability of (quantum) machine learning. In particular, the generalization behavior often has a determining influence on the amount of training data that a (quantum) machine learning model requires.

For the case of $\mathcal{Q} = \mathcal{P}$, when training and testing data are drawn i.i.d.~from the same distribution, such questions can be studied in the standard framework of \emph{in-distribution generalization} (sometimes also known as \emph{weak generalization}). 
Here, however, we focus on the case of $\mathcal{Q} \neq  \mathcal{P}$, when the QNN is trained on a distribution different from the testing distribution. This scenario is variously known as \textit{out-of-distribution generalization} or \textit{strong generalization}. 
More precisely, we will consider training and testing states coming from (different) locally scrambled ensembles~\cite{kuo2020markovian, hu2021classical}.

\begin{defn}[Locally scrambled ensembles -- Restatement of Definition~\ref{definition:locally-scrambled-ensembles}]
    An ensemble $\mathcal{U}$ of $n$-qubit unitaries is called \emph{locally scrambled} if it is invariant under preprocessing by tensor products of arbitrary local unitaries. 
    That is, if $U\sim\mathcal{U}$, then for any fixed $U_1,\ldots, U_n\in\mathcal{U}(\mathbb{C}^2)$ also $U (\bigotimes_{i=1}^n U_i)\sim\mathcal{U}$.
    Accordingly, an ensemble $\mathcal{S}$ of $n$-qubit quantum states is called locally scrambled if it is of the form $\mathcal{S}=\mathcal{U}\ket{0}^{\otimes n}$ for some locally scrambled ensemble $\mathcal{U}$ of $n$-qubit unitaries.
    
    We use $\mathbb{U}_{\rm LS}$ to denote the class of all locally scrambled unitary ensembles and $\mathbb{S}_{\rm LS}$ to denote the class of all locally scrambled state ensembles. (Here, we suppress the number of qubits in the notation in favor of improved readability.)
\end{defn}

As discussed in more detail below, examples of locally scrambled unitary ensembles include Haar-random $n$-qubit unitaries, tensor products of Haar-random single-qubit unitaries followed by some fixed $n$-qubit unitary, and unitaries implemented by random quantum circuits of some fixed depth, among others.

In fact, our results for locally scrambled ensembles below immediately extend to a slightly broader class of ensembles:

\begin{defn}
    An ensemble $\mathcal{U}$ of $n$-qubit unitaries is called \emph{locally scrambled up to (and including) complex second moments} if there exists an ensemble $\Tilde{\mathcal{U}}$ of $n$-qubit unitaries such that the complex first and second moments of $\mathcal{U}$ agree with those of $\Tilde{\mathcal{U}}$.
    That is, we have $\mathbb{E}_{U\sim\mathcal{U}}[U\rho U^\dagger] = \mathbb{E}_{\tilde{U}\sim\tilde{\mathcal{U}}}[\tilde{U}\rho \tilde{U}^\dagger]$ for all density matrices $\rho\in\mathcal{B}(\mathbb{C}^d)$ and $\mathbb{E}_{U\sim\mathcal{U}}[U^{\otimes 2}\rho (U^\dagger)^{\otimes 2}] = \mathbb{E}_{\tilde{U}\sim\tilde{\mathcal{U}}}[\tilde{U}^{\otimes 2}\rho (\tilde{U}^\dagger)^{\otimes 2}]$ for all density matrices $\rho\in\mathcal{B}((\mathbb{C}^d)^{\otimes 2})$.
    Accordingly, an ensemble $\mathcal{S}$ of $n$-qubit quantum states is called locally scrambled up to (and including) complex second moments if it is of the form $\mathcal{S}=\mathcal{U}\ket{0}^{\otimes n}$ for ensemble $\mathcal{U}$ of $n$-qubit unitaries that is locally scrambled up to (and including) complex second moments.
    
    We use $\mathbb{U}_{\rm LS}^{(2)}$ to denote the class of all unitary ensembles that are locally scrambled up to (and including) complex second moments and $\mathbb{S}_{\rm LS}^{(2)}$ to denote the class of all state ensembles that are locally scrambled up to (and including) complex second moments. (Again, we suppress the number of qubits in the notation in favor of improved readability.)
\end{defn}

To illustrate the above definitions here we list examples of distribution in $\mathbb{S}_{\rm LS}$ and $\mathbb{S}_{\rm LS}^{(2)}$ and explain why they are such distributions.

\begin{ex}[Products of Haar-random single-qubit states, $\mathcal{S}_{{\rm Haar}_1^{\otimes n}}$]\label{ex:product-haar-random}
    Define $\mathcal{S}_{{\rm Haar}_1^{\otimes n}} \coloneqq \mathcal{U}_{{\rm Haar}_1^{\otimes n}}|0\rangle^{\otimes n}$, where $\mathcal{U}_{{\rm Haar}_1^{\otimes n}}$ is the $n$-fold tensor product of the single-qubit Haar measure on $\mathcal{U}(\mathbb{C}^2)$.
    By the right-invariance of $\mathcal{U}_{{\rm Haar}_1}$ under multiplication with an arbitrary fixed single-qubit unitary we have that $\mathcal{U}_{{\rm Haar}_1^{\otimes n}} \in \mathbb{U}_{\rm LS}$ and hence $\mathcal{S}_{{\rm Haar}_1^{\otimes n}} \in \mathbb{S}_{\rm LS}$.
\end{ex}

\begin{ex}[Products of random single-qubit stabilizer states, $\mathcal{S}_{{\rm Stab}_1^{\otimes n}}$]\label{ex:product-random-stabilizer}
Consider the ensemble composed of tensor products of random single-qubit stabilizer states, i.e. $\mathcal{S}_{{\rm Stab}_1^{\otimes n}} := \operatorname{Uniform} (\{\ket{0},\ket{1},\ket{+},\ket{-}, \ket{y+},\ket{y-}\}^{\otimes n})$. This ensemble agrees with $\mathcal{S}_{{\rm Haar}_1^{\otimes n}}$ up to the second moment and hence is in $\mathbb{S}_{\rm LS}^{(2)}$.
\end{ex}

\begin{ex}[Post-processed tensor products of Haar-random single-qubit states]\label{ex:processed-product-haar-random}
    Consider an ensemble of $n$-qubit unitaries of the form $U \mathcal{U}_{{\rm Haar}_1^{\otimes n}}$, where $\mathcal{U}_{{\rm Haar}_1^{\otimes n}}$ denotes the $n$-fold tensor product of the single-qubit Haar measure on $\mathcal{U}(\mathbb{C}^2)$ and $U\in\mathcal{U}((\mathbb{C}^{2})^{\otimes n})$ is some fixed unitary.
    Then, by right-invariance of $\mathcal{U}_{{\rm Haar}_1}$ under multiplication with an arbitrary fixed single-qubit unitary, we have $U \mathcal{U}_{{\rm Haar}_1^{\otimes n}} \left(\bigotimes_{i=1}^n U_i\right) = U \mathcal{U}_{{\rm Haar}_1^{\otimes n}}$, for any fixed unitaries $U_1,\ldots,U_n\in \mathcal{U}(\mathbb{C}^2)$. 
    Thus, $U \mathcal{U}_{{\rm Haar}_1^{\otimes n}}$ is a locally scrambled ensemble of $n$-qubit unitaries and $U \mathcal{U}_{{\rm Haar}_1^{\otimes n}}\ket{0}^{\otimes n}\in \mathbb{S}_{\rm LS}$. 
\end{ex}

\begin{ex}[Haar-random $n$-qubit states, $\mathcal{S}_{{\rm Haar}_n}$]\label{ex:global-haar-random}
    Haar-random $n$-qubit states, i.e. $\mathcal{S}_{\rm Haar}:= \mathcal{U}_{{\rm Haar}_n}|0\rangle^{\otimes n}$, form a locally scrambled ensemble because the Haar measure on $\mathcal{U}((\mathbb{C}^{2})^{\otimes n})$ is (both left- and) right-invariant under multiplication with an arbitrary fixed $n$-qubit unitary.
    So, $\mathcal{U}_{{\rm Haar}_n}\in\mathbb{U}_{\rm LS}$ and $\mathcal{S}_{\rm Haar}\in \mathbb{S}_{\rm LS}$.
\end{ex}

\begin{ex}[$2$-design on $n$-qubit states, $\mathcal{S}_{\rm 2design}$]\label{ex:2-design}
A unitary $2$-design, by definition, agrees with the Haar distribution up to the second moment and hence $\mathcal{S}_{\rm 2design} := \mathcal{U}_{\rm 2} |0 \rangle^{\otimes n} \in \mathbb{S}_{\rm LS}^{(2)}$. Such distributions can be well approximated using a polynomial-depth hardware-efficient ansatz~\cite{brandao2016local, harrow2018approximate, haferkamp2022randomquantum}. 
\end{ex}

\begin{ex}[Products of Haar-random $k$-qubit states, $\mathcal{S}_{{\rm Haar}_k^{\otimes n/k}}$]
    For $n/k \in \mathbb{N}$ we can capture both $\mathcal{S}_{{\rm Haar}_n}$ and $\mathcal{S}_{{\rm Haar}_1^{\otimes n}}$ by more generally defining $\mathcal{S}_{{\rm Haar}_k^{\otimes n/k}} \coloneqq \mathcal{U}_{{\rm Haar}_k^{\otimes n/k}}|0\rangle^{\otimes n}$, where $\mathcal{U}_{{\rm Haar}_1^{\otimes n/k}}$ is the $n/k$-fold tensor product of the $k$-qubit Haar measure on $\mathcal{U}((\mathbb{C}^{2})^{\otimes k})$.
    Again, right-invariance of the Haar measure yields $\mathcal{S}_{{\rm Haar}_k^{\otimes n/k}}\in \mathbb{S}_{\rm LS}$.
\end{ex}

\begin{ex}[Output states of random quantum circuits, $\mathcal{S}_{\rm RandCirc}^{\mathcal{A}_k}$]\label{ex:random-circuit-outputs}
    Let $\mathcal{A}_k$ be a $k$-local $n$-qubit quantum circuit architecture (in which every qubit is acted on non-trivially), with $k\leq n$. 
    Let $\mathcal{U}_{\mathcal{A}_k}$ denote the ensemble of $n$-qubit unitaries obtained by drawing every $k$-qubit unitary in $\mathcal{A}_k$ at random from the $k$-qubit Haar measure. 
    Then, by right-invariance of the $k$-qubit Haar measure, $\mathcal{U}_{\mathcal{A}_k}\in \mathbb{U}_{\rm LS}$.
    Accordingly, the ensemble $\mathcal{S}_{\rm RandCirc} \coloneqq \mathcal{U}_{\mathcal{A}_k} \ket{0}^{\otimes n}$ of output states of a random quantum circuit with architecture $\mathcal{A}_k$ satisfies $\mathcal{S}_{\rm RandCirc}^{\mathcal{A}_k}\in \mathbb{S}_{\rm LS}$.
\end{ex}

\section{Analytical Results}\label{app:AnalyticResults}

\subsection{Equivalence of Locally Scrambled Risks for Comparing Unitaries}\label{subsection:supplementary-equivalence-locally-scrambled-costs}

We begin our analysis by comparing the testing risks obtained from Eq.~\eqref{eq:supplementary-test-cost-general} for different locally scrambled testing distributions.
In this subsection, we show that all such testing risks are equivalent in the sense that they differ by at most a constant factor.
We prove this equivalence by showing that all locally scrambled risks $R_{\mathcal{P}} \left( U, V \right)$ are tightly related to the Hilbert-Schmidt inner product between $U$ and $V$.
To formalize this discussion, we first introduce a cost arising naturally from that inner product:

\begin{defn}[Hilbert-Schmidt test cost]\label{dfn:hst-cost}
    The \emph{Hilbert-Schmidt test (HST) cost} between two $n$-qubit unitaries $U\in\mathcal{U}((\mathbb{C}^{2})^{\otimes n})$ and $V\in\mathcal{U}((\mathbb{C}^{2})^{\otimes n})$ is defined as 
    \begin{equation}
        C_{\mbox {\tiny \rm HST}}(U, V)
        \coloneqq 1-\frac{1}{d^{2}}|\Tr[ U^\dagger V]|^2 \, . \label{eq:hst-cost}
    \end{equation}
\end{defn}

At this point, we note that, as shown in Refs.~\cite{khatri2019quantum, nielsen2002simple}, we can view the HST cost as an expected testing risk as in Eq.~\eqref{eq:supplementary-test-cost-general} via
\begin{equation}\label{eq:hst-cost-versus-expected-haar-cost}
    C_{\mbox {\tiny \rm HST}}(U, V)
    = \frac{d+1}{d} R_{\mathcal{S}_{\rm Haar_n}}(U,V)\, ,
\end{equation}
where $d=2^n$. (Here, the notation $R_{\mathcal{S}_{{\rm Haar}_n}}(U,V)$ indicates an expected testing risk w.r.t.~the Haar ensemble from Example~\ref{ex:global-haar-random}.)
In the main text, for conciseness of presentation, we have used the right hand-side of Eq.~\eqref{eq:hst-cost-versus-expected-haar-cost} instead of its left-hand side.

Our first result is an expression for the squared absolute value of the Hilbert-Schmidt inner product between two matrices -- which in particular gives an expression for the HST cost between two unitaries -- in terms of an average over $n$-qubit Paulis:

\begin{lem}\label{lem:hst-cost-pauli-average}
    Let $n\in\mathbb{N}$ and write $d=2^n$.
    Let $A,B\in \mathcal{B}(\mathbb{C}^{d})$. Then, 
    \begin{equation}
        \lvert \Tr [A^\dagger B]\rvert^2
        = \frac{1}{d} \sum_{P\in\{\mathds{1}, X, Y, Z\}^{\otimes n}} \Tr \left[ P A^\dagger B P B^\dagger A \right] \, .\label{eq:hs-inner-product-pauli-average}
    \end{equation}
    In particular, we can express the HST cost between two $n$-qubit unitaries $U\in\mathcal{U}((\mathbb{C}^{2})^{\otimes n})$ and $V\in\mathcal{U}((\mathbb{C}^{2})^{\otimes n})$ as
    \begin{equation}
        C_{\mbox {\tiny \rm HST}}(U, V)
        = 1 - \frac{1}{d^3} \sum_{P\in\{\mathds{1}, X, Y, Z\}^{\otimes n}} \Tr \left[ P U^\dagger V P V^\dagger U \right] \, .\label{eq:hst-cost-pauli-average}
    \end{equation}
\end{lem}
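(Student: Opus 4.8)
The plan is to deduce everything from the elementary ``swap trick.'' Observe first that, writing $M \coloneqq A^\dagger B$ so that $M^\dagger = B^\dagger A$ and $\overline{\Tr[M]} = \Tr[M^\dagger]$, the asserted identity~\eqref{eq:hs-inner-product-pauli-average} is equivalent to the single-matrix statement
\[
    |\Tr[M]|^2 \;=\; \frac{1}{d}\sum_{P\in\{\mathds{1}, X, Y, Z\}^{\otimes n}} \Tr\!\left[P M P M^\dagger\right]
\]
for an arbitrary $M\in\mathcal{B}(\mathbb{C}^d)$. So it suffices to establish this, and then the HST expression~\eqref{eq:hst-cost-pauli-average} will follow by specializing $A=U$, $B=V$ (hence $M = U^\dagger V$, $M^\dagger = V^\dagger U$) and substituting into the definition $C_{\mbox {\tiny \rm HST}}(U,V) = 1 - \tfrac{1}{d^2}|\Tr[U^\dagger V]|^2$ from Eq.~\eqref{eq:hst-cost}, which produces the extra factor $1/d$ turning $1/d$ into $1/d^{3}$.

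The key ingredient is the Pauli expansion of the swap operator on $\mathbb{C}^d\otimes\mathbb{C}^d$, namely $\operatorname{SWAP} = \tfrac{1}{d}\sum_{P}P\otimes P$, with the sum over all $d^2 = 4^n$ unsigned $n$-qubit Pauli strings. For $n=1$ this is the textbook identity $\operatorname{SWAP} = \tfrac{1}{2}(\mathds{1}\otimes\mathds{1} + X\otimes X + Y\otimes Y + Z\otimes Z)$, and the general case follows because the $n$-qubit swap is the tensor product of single-qubit swaps while $\{\mathds{1}, X, Y, Z\}^{\otimes n}$ is precisely the set of tensor products of single-qubit Paulis. Comparing matrix entries, this amounts to $\sum_{P}P_{ij}P_{kl} = d\,\delta_{il}\delta_{kj}$, and contracting this identity against $M$ yields the Pauli-twirl identity
\[
    \sum_{P}P M P \;=\; d\,\Tr[M]\,\mathds{1}\,,
\]
either directly from the index computation or by noting that $\tfrac{1}{d^2}\sum_P P M P$ is the Pauli twirl channel applied to $M$, which is well known to equal $\tfrac{\Tr[M]}{d}\mathds{1}$.

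With the twirl identity in hand the lemma is immediate: cyclicity of the trace gives
\[
    \sum_{P} \Tr\!\left[P M P M^\dagger\right] = \Tr\!\left[\Big(\sum_{P} P M P\Big) M^\dagger\right] = d\,\Tr[M]\,\Tr[M^\dagger] = d\,|\Tr[M]|^2\,,
\]
and dividing by $d$ yields the displayed single-matrix identity, hence~\eqref{eq:hs-inner-product-pauli-average}, hence~\eqref{eq:hst-cost-pauli-average}. I expect no genuine obstacle here: the content is essentially a repackaging of the swap trick. The only points requiring a little care are the bookkeeping of normalization factors (the $\tfrac{1}{d}$ in the swap expansion versus the $\tfrac{1}{d^2}$ in the definition of $C_{\mbox {\tiny \rm HST}}$), and being explicit that $\{\mathds{1}, X, Y, Z\}^{\otimes n}$ ranges over all $4^n$ unsigned Pauli strings, so that the implicit count $\sum_P 1 = d^2$ entering the twirl identity is correct.
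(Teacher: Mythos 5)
Your proof is correct and rests on the same essential ingredient as the paper's, namely the Pauli expansion $\operatorname{SWAP}=\tfrac{1}{d}\sum_P P\otimes P$; you merely package it as the single-system twirl identity $\sum_P PMP = d\,\Tr[M]\,\mathds{1}$ and apply cyclicity, whereas the paper inserts $\operatorname{SWAP}^2=\mathds{1}$ into $\Tr[C\otimes C^\dagger]$ on the doubled Hilbert space and expands one factor. The normalizations all check out, so this is a valid and essentially equivalent argument.
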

\begin{proof}
    Using the shorthand $C=A^\dagger B\in \mathcal{B}(\mathbb{C}^{d})$, we have
    \begin{align}
        \lvert \Tr [A^\dagger B]\rvert^2
        &= \lvert \Tr [ C ]\rvert^2\\
        &= \Tr [C] \overline{\Tr [C]}\\
        &= \Tr [C] \Tr [C^\dagger]\\
        &= \Tr [C \otimes C^\dagger ]\\
        &= \Tr [ \operatorname{SWAP}^2 (C \otimes C^\dagger) ]\\
        &= \frac{1}{d}\sum_{P\in\{\mathds{1}, X, Y, Z\}^{\otimes n}} \Tr [ \operatorname{SWAP} (PC \otimes PC^\dagger) ]\\
        &= \frac{1}{d} \sum_{P\in\{\mathds{1}, X, Y, Z\}^{\otimes n}} \Tr [ PC PC^\dagger ]\, .
    \end{align}
    Here, the first line is due to our shorthand, the second line is $\lvert z\rvert^2 = z\overline{z}$ for $z\in\mathbb{C}$, the third line uses $\Tr [C^\dagger] = \overline{\Tr [C]}$, the fourth line uses $\Tr [D \otimes E] = \Tr [D] \otimes \Tr [E]$, the fifth line uses $\operatorname{SWAP}^2 = \mathds{1}$, the sixth line uses the basis expansion $\operatorname{SWAP} = \frac{1}{d} \sum_{P\in\{\mathds{1}, X, Y, Z\}^{\otimes n}} P\otimes P$ of the swap operator in the Pauli string basis, and the last line uses $\Tr [\operatorname{SWAP} (D\otimes E)]=\Tr [DE]$. 
    This establishes Eq.~\eqref{eq:hs-inner-product-pauli-average}.
    Now, plugging Eq.~\eqref{eq:hs-inner-product-pauli-average} into the definition of the HST cost gives Eq.~\eqref{eq:hst-cost-pauli-average}.
\end{proof}

Next, we present a technical lemma which we later use to control the Pauli average in the expression for the HST cost between two unitaries:

\begin{lem}\label{lem:auxiliary-overlap-bounds}
    Let $n\in\mathbb{N}$ and write $d=2^n$.
    Let $W\in\mathcal{U}((\mathbb{C}^{2})^{\otimes n})$, let $P\in \{\mathds{1}, X, Y, Z\}^{\otimes n}$, and let $\ket{s}\in \{\ket{0},\ket{1},\ket{+},\ket{-}, \ket{y+},\ket{y-}\}^{\otimes n}$ be an eigenvector of $P$.
    Then, 
    \begin{equation}
        0
        \leq 1 - \bra{s}P\ket{s}\cdot \bra{s} W^\dagger P W\ket{s}
        \leq 2\left( 1 - \lvert\bra{s}W\ket{s}\rvert^2\right)\, .
    \end{equation}
\end{lem}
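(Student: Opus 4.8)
The plan is to exploit that every Pauli string $P$ is a Hermitian unitary with spectrum contained in $\{+1,-1\}$, so the hypotheses say precisely that $\lambda:=\bra{s}P\ket{s}\in\{+1,-1\}$ and $P\ket{s}=\lambda\ket{s}$. I abbreviate $\ket{\psi}:=W\ket{s}$, a unit vector, so that $\bra{s}W^\dagger P W\ket{s}=\bra{\psi}P\ket{\psi}$, which is real since $P$ is Hermitian. I also let $Q$ denote the orthogonal projector onto the $\lambda$-eigenspace of $P$; then, splitting $P$ into its $\pm1$ eigenprojectors, one gets the identity $\lambda P=2Q-\mathds{1}$, and of course $Q\ket{s}=\ket{s}$ since $\ket{s}$ lies in that eigenspace. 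The lower bound is then immediate: $\bra{s}P\ket{s}\cdot\bra{s}W^\dagger P W\ket{s}=\lambda\,\bra{\psi}P\ket{\psi}$, and since $P$ has eigenvalues $\pm1$ this quantity lies in $[-1,1]$, so subtracting it from $1$ yields a nonnegative number.

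For the upper bound I would substitute $\lambda P=2Q-\mathds{1}$ to rewrite $\bra{s}P\ket{s}\cdot\bra{s}W^\dagger P W\ket{s}=\lambda\,\bra{\psi}P\ket{\psi}=2\bra{\psi}Q\ket{\psi}-1$, hence $1-\bra{s}P\ket{s}\cdot\bra{s}W^\dagger P W\ket{s}=2\bigl(1-\bra{\psi}Q\ket{\psi}\bigr)$. Since the right-hand side of the claimed inequality equals $2\bigl(1-\lvert\langle s|\psi\rangle\rvert^2\bigr)$, the whole bound reduces to the single inequality $\lvert\langle s|\psi\rangle\rvert^2\le\bra{\psi}Q\ket{\psi}$. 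This is a one-line Cauchy--Schwarz estimate: using $Q=Q^\dagger=Q^2$ and $Q\ket{s}=\ket{s}$ we get $\lvert\langle s|\psi\rangle\rvert=\lvert\bra{s}Q\ket{\psi}\rvert\le\norm{Q\ket{s}}\,\norm{Q\ket{\psi}}=\norm{Q\ket{\psi}}=\sqrt{\bra{\psi}Q\ket{\psi}}$, which squares to the desired bound.

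There is no real obstacle here; the argument is short and elementary once the right objects are set up. The only points that need (minor) care are the eigenprojector identity $\lambda P=2Q-\mathds{1}$, and the remark that a tensor product $\ket{s}=\bigotimes_{i=1}^n\ket{s_i}$ of single-qubit stabilizer states, each a $\pm1$-eigenstate of the corresponding single-qubit tensor factor of $P$, is genuinely a $\pm1$-eigenstate of $P$ with eigenvalue the product of the local signs — both routine. Conceptually, the lemma just says that the two inequalities are statements comparing how much weight $W\ket{s}$ places on the $\lambda$-eigenspace of $P$ against how much it places on $\ket{s}$ itself.
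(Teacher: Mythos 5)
Your proof is correct and rests on the same core idea as the paper's: use the spectral decomposition of $P$ together with the fact that $\ket{s}$ lies in the $\lambda$-eigenspace, where $\lambda=\bra{s}P\ket{s}$. The paper expands $W\ket{s}$ in an explicit eigenbasis of $P$ and bounds the resulting signed probability vector to land in $[2p_1-1,1]$, whereas you package the identical estimate basis-free via the identity $\lambda P=2Q-\mathds{1}$ and a Cauchy--Schwarz step giving $\lvert\braket{s|\psi}\rvert^2\le\bra{\psi}Q\ket{\psi}$ --- a slightly cleaner phrasing of the same argument.
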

\begin{proof}
    Let $\ket{s_1}=\ket{s},\ket{s_2},\ldots,\ket{s_{d}}$ be an orthonormal basis consisting of eigenvectors of $P$. Then, by plugging in the spectral decomposition $P = \sum_{i=1}^{d} \bra{s_i}P\ket{s_i}\cdot\ket{s_i}\bra{s_i}$, we get, using that the eigenvalues of $P$ lie in $\{-1,1\}$,
    \begin{align}
        \bra{s}P\ket{s}\cdot \bra{s} W^\dagger P W\ket{s}
        &= \sum_{i=1}^d \bra{s}P\ket{s}\cdot \bra{s_i}P\ket{s_i}\cdot \bra{s}W^\dagger \ket{s_i}\bra{s_i}W\ket{s}\\
        &= \bra{s}P\ket{s}^2\cdot \bra{s} W^\dagger \ket{s}\bra{s} W\ket{s} + \sum_{i=2}^d \bra{s}P\ket{s}\cdot \bra{s_i}P\ket{s_i}\cdot \bra{s}W^\dagger \ket{s_i}\bra{s_i}W\ket{s}\\
        &= \underbrace{\lvert \bra{s} W\ket{s}\rvert^2}_{=: p_1} + \sum_{i=2}^d \underbrace{\bra{s}P\ket{s}\cdot \bra{s_i}P\ket{s_i}}_{\in\{-1,1\}}\cdot \underbrace{\bra{s}W^\dagger \ket{s_i}\bra{s_i}W\ket{s}}_{=: p_i} \label{eq:proof-overlap-bounds}\, .
    \end{align}
    With this notation, we have $0\leq p_i\leq 1$ for all $1\leq i\leq d$, where the upper bound holds by Cauchy-Schwarz and unitarity of $W$, as well as $\sum_{i=1}^d p_i = \bra{s}W^\dagger W\ket{s} = \braket{s|s}=1$, by unitarity of $W$. 
    Thus, $\{p_i\}_{i=1}^d$ is a probability vector of length $d$.
    Therefore, from Eq.~\eqref{eq:proof-overlap-bounds}, we conclude
    \begin{equation}
        \bra{s}P\ket{s}\cdot \bra{s} W^\dagger P W\ket{s}\in [p_1 - (1-p_1), p_1 + (1-p_1)]=[2p_1-1,1]\, .
    \end{equation}
    Accordingly, we obtain
    \begin{equation}
        1 - \bra{s}P\ket{s}\cdot \bra{s} W^\dagger P W\ket{s} \in [0,1-(2p_1-1)] = [0,2(1-p_1)]\, ,
    \end{equation}
    as claimed.
\end{proof}

We emphasize that this proof, in contrast to those of Lemma~\ref{lem:hst-cost-pauli-average} (and also Lemma~\ref{lem:locally-scrambled-cost-explicit-expression} below), explicitly uses the unitarity of the matrix $W$. 
This is also why we assume unitarity of $U$ and $V$ in Lemma~\ref{lem:supplementary-locally-scrambled-cost-versus-hst-cost} below, since we will again consider $W=U^\dagger V$.

Lemmas~\ref{lem:hst-cost-pauli-average} and~\ref{lem:auxiliary-overlap-bounds} allow us to prove the following upper bound on the HST cost in terms of an average over tensor products of Haar-random single-qubit states:

\begin{corol}\label{corol:hst-cost-bound-local-haar-average}
    Let $n\in\mathbb{N}$ and write $d=2^n$.
    Let $U,V,\tilde{U}\in\mathcal{U}((\mathbb{C}^{2})^{\otimes n})$.
    Then,
    \begin{equation}
        C_{\mbox {\tiny \rm HST}}(U, V)
        \leq  2\left( 1 - \mathbb{E}_{\bigotimes_{i=1}^n \ket{\psi_i}\sim {\rm Haar}_1^{\otimes n}} \left[ \left\lvert\left(\bigotimes_{i=1}^n \bra{\psi_i}\right) \Tilde{U}^\dagger W \Tilde{U} \left(\bigotimes_{i=1}^n \ket{\psi_i}\right) \right\rvert^2 \right]\right)\, ,
    \end{equation}
    where we again use the shorthand $W=U^\dagger V$.
\end{corol}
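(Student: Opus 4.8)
\emph{Proof proposal.} I would prove this in three steps, stringing together Lemma~\ref{lem:hst-cost-pauli-average}, Lemma~\ref{lem:auxiliary-overlap-bounds}, and a single-qubit $2$-design identity. \textbf{Step 1 (reduce to a Pauli average).} Since the trace is invariant under conjugation, $C_{\mbox{\tiny \rm HST}}(U,V) = 1 - \frac{1}{d^2}\lvert\Tr[U^\dagger V]\rvert^2 = 1 - \frac{1}{d^2}\lvert\Tr[\tilde U^\dagger W \tilde U]\rvert^2$ with $W = U^\dagger V$. Applying Lemma~\ref{lem:hst-cost-pauli-average} with $A = U\tilde U$ and $B = V\tilde U$ (so that $A^\dagger B = \tilde U^\dagger W \tilde U$) turns this into
\begin{equation}
    C_{\mbox{\tiny \rm HST}}(U,V) = 1 - \frac{1}{d^3}\sum_{P\in\{\mathds{1},X,Y,Z\}^{\otimes n}}\Tr\big[P\,(\tilde U^\dagger W \tilde U)\,P\,(\tilde U^\dagger W^\dagger \tilde U)\big]\, .
\end{equation}
The operator $\tilde W := \tilde U^\dagger W \tilde U$ is unitary, which is exactly what is needed to invoke Lemma~\ref{lem:auxiliary-overlap-bounds} in the next step.

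\textbf{Step 2 (termwise stabilizer bound).} For each $P$, I would expand $\Tr[P\tilde W P\tilde W^\dagger] = \Tr[\tilde W^\dagger P \tilde W P]$ over an orthonormal eigenbasis $\{\ket s\}$ of $P$ built from products of single-qubit stabilizer states $\{\ket 0,\ket 1,\ket +,\ket -,\ket{y+},\ket{y-}\}$; using $P\ket s = \bra s P\ket s\,\ket s$ this becomes $\sum_s \bra s P\ket s\,\bra s \tilde W^\dagger P \tilde W\ket s$. Lemma~\ref{lem:auxiliary-overlap-bounds}, applied to each such $\ket s$, lower-bounds every summand by $2\lvert\bra s \tilde W\ket s\rvert^2 - 1$. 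Summing over all $4^n$ Paulis and all $d$ eigenvectors and using the normalization $4^n d/d^3 = 1$ for the $(-1)$ terms, one gets
\begin{equation}
    C_{\mbox{\tiny \rm HST}}(U,V) \leq 2\Big(1 - \tfrac{1}{d^3}\sum_{P}\sum_{s}\lvert\bra s \tilde W\ket s\rvert^2\Big)\, .
\end{equation}

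\textbf{Step 3 (pass to the local Haar average).} On the tensor factors where $P$ acts as $\mathds{1}$ the eigenbasis is unconstrained; the clean choice is to \emph{average} over the three single-qubit stabilizer eigenbases on each such factor (equivalently, to weight each of the six single-qubit stabilizer states by $\tfrac13$), which leaves $\Tr[P\tilde W P\tilde W^\dagger]$ unchanged and keeps the bound of Step 2 valid. Writing $\lvert\bra s \tilde W\ket s\rvert^2 = \Tr[(\ket s\bra s)^{\otimes 2}(\tilde W\otimes\tilde W^\dagger)]$ and regrouping the two copies qubit-by-qubit, the weighted double sum factorizes across qubits; on each qubit the relevant operator is $\tfrac43\sum_{\ket t}(\ket t\bra t)^{\otimes 2} = \tfrac43(\mathds{1}+\operatorname{SWAP}) = 2^3\,\mathbb{E}_{\ket\psi\sim{\rm Haar}_1}[(\ket\psi\bra\psi)^{\otimes 2}]$, where the sum over $\ket t$ runs over the six single-qubit stabilizer states and I used the single-qubit $2$-design identity $\tfrac16\sum_{\ket t}(\ket t\bra t)^{\otimes 2} = \tfrac{\mathds{1}+\operatorname{SWAP}}{6} = \mathbb{E}_{{\rm Haar}_1}[(\ket\psi\bra\psi)^{\otimes 2}]$. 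Taking the $n$-fold tensor product and using $d^3 = 2^{3n}$ gives $\tfrac{1}{d^3}\sum_{P}\sum_s \lvert\bra s\tilde W\ket s\rvert^2 = \mathbb{E}_{\bigotimes_i\ket{\psi_i}\sim{\rm Haar}_1^{\otimes n}}\big[\lvert(\bigotimes_i\bra{\psi_i})\tilde W(\bigotimes_i\ket{\psi_i})\rvert^2\big]$, which is precisely the claimed inequality once we recall $\tilde W = \tilde U^\dagger W \tilde U$.

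I expect the only real obstacle to be Step 3's handling of the identity factors of $P$: committing to a single eigenbasis there (say, always the computational basis) produces a single-qubit ensemble biased toward one axis that is \emph{not} a $2$-design, so the comparison with the Haar average would fail; averaging over the three stabilizer eigenbases is what restores the $2$-design property while preserving the Pauli-average identity. The rest — the normalization bookkeeping with the factors $1/d$, $1/d^2$, $1/d^3$ and $4^n = d^2$, and the per-term application of Lemma~\ref{lem:auxiliary-overlap-bounds} (whose nonnegative lower bound incidentally reconfirms $C_{\mbox{\tiny\rm HST}}\geq 0$, though that is not needed here) — is routine.
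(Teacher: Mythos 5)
Your proposal is correct and follows essentially the same route as the paper's proof: the Pauli-average identity of Lemma~\ref{lem:hst-cost-pauli-average} applied to the conjugated unitary $\tilde U^\dagger W\tilde U$, a termwise application of Lemma~\ref{lem:auxiliary-overlap-bounds} over stabilizer-product eigenbases of each Pauli string, and the single-qubit $2$-design identity to convert the stabilizer average into the ${\rm Haar}_1^{\otimes n}$ average. The only (cosmetic) difference is bookkeeping: where the paper phrases the stabilizer average as a swap of nested expectations over Pauli--eigenvector pairs, you carry out an explicit per-qubit tensor factorization with the $\tfrac13$ weights on identity factors, which handles the normalization over the $3^{k}$ eigenbasis choices transparently and correctly identifies why a fixed eigenbasis on the identity factors would break the $2$-design step.
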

\begin{proof}
    We begin with the expression for the HST cost in terms of a Pauli average derived in Lemma~\ref{lem:hst-cost-pauli-average}. 
    For any fixed $\Tilde{U}\in \mathcal{U}((\mathbb{C}^{2})^{\otimes n})$, the definition of $C_{\mbox {\tiny \rm HST}}(U, V )$ and  Lemma~\ref{lem:hst-cost-pauli-average} imply:
    \begin{equation}
        C_{\mbox {\tiny \rm HST}}(U, V )
        = C_{\mbox {\tiny \rm HST}}(\Tilde{U}^\dagger U \Tilde{U}, \Tilde{U}^\dagger V \Tilde{U})
        = 1 - \frac{1}{d^3} \sum_{P\in\{\mathds{1}, X, Y, Z\}^{\otimes n}} \Tr \left[ P \Tilde{U}^\dagger W \Tilde{U} P  \Tilde{U}^\dagger W^\dagger \Tilde{U} \right] \, .
    \end{equation}
    Now, we can consider a spectral decomposition for $P\in\{\mathds{1}, X, Y, Z\}^{\otimes n}$, which -- since we are dealing with Pauli strings -- we can write as follows:
    \begin{equation}\label{eq:pauli-spectral-decomposition}
        P
        = \sum_{\ket{s}\in\{\ket{0},\ket{1},\ket{+},\ket{-}, \ket{y+},\ket{y-}\}^{\otimes n}:\bra{s}P\ket{s}\neq 0} \bra{s}P\ket{s}\cdot\ket{s}\bra{s}\, .
    \end{equation}
    Plugging this spectral decomposition into Eq.~\eqref{eq:hst-cost-pauli-average} to evaluate the trace, we obtain
    \begin{align}
        C_{\mbox {\tiny \rm HST}}(U, V)
        &= 1 - \frac{1}{d^3} \sum_{P\in\{\mathds{1}, X, Y, Z\}^{\otimes n}} \Tr \left[ P \Tilde{U}^\dagger W \Tilde{U} P \Tilde{U}^\dagger W^\dagger \Tilde{U} \right]\\
        &= 1 - \frac{1}{d^2} \sum_{P\in\{\mathds{1}, X, Y, Z\}^{\otimes n}} \frac{1}{d}\sum_{\ket{s}\in\{\ket{0},\ket{1},\ket{+},\ket{-}, \ket{y+},\ket{y-}\}^{\otimes n}:\bra{s}P\ket{s}\neq 0} \Tr\left[ P \Tilde{U}^\dagger W \Tilde{U} \left(\bra{s}P\ket{s}\cdot\ket{s}\bra{s}\right) \Tilde{U}^\dagger W^\dagger \Tilde{U} \right]\\
        &= 1 - \frac{1}{d^2} \sum_{P\in\{\mathds{1}, X, Y, Z\}^{\otimes n}} \frac{1}{d}\sum_{\ket{s}\in\{\ket{0},\ket{1},\ket{+},\ket{-}, \ket{y+},\ket{y-}\}^{\otimes n}:\bra{s}P\ket{s}\neq 0} \bra{s}P\ket{s} \bra{s} \Tilde{U}^\dagger W^\dagger \Tilde{U} P \Tilde{U}^\dagger W \Tilde{U}\ket{s}\\
        &= \mathbb{E}_{P\sim\{\mathds{1}, X, Y, Z\}^{\otimes n}} \mathbb{E}_{\ket{s}\sim\{\ket{0},\ket{1},\ket{+},\ket{-}, \ket{y+},\ket{y-}\}^{\otimes n}:\bra{s}P\ket{s}\neq 0}\left[ 1 - \bra{s}P\ket{s}\cdot \bra{s} \Tilde{U}^\dagger W^\dagger \Tilde{U} P \Tilde{U}^\dagger W \Tilde{U} \ket{s} \right] \\
        &= \mathbb{E}_{\ket{s}\sim\{\ket{0},\ket{1},\ket{+},\ket{-}, \ket{y+},\ket{y-}\}^{\otimes n}}\mathbb{E}_{P\sim\{\mathds{1}, X, Y, Z\}^{\otimes n}:\bra{s}P\ket{s}\neq 0}\left[ 1 - \bra{s}P\ket{s}\cdot \bra{s} \Tilde{U}^\dagger W^\dagger \Tilde{U} P \Tilde{U}^\dagger W \Tilde{U} \ket{s} \right] \, .
    \end{align}
    Here, we denote by $\mathbb{E}_{P\sim\{\mathds{1}, X, Y, Z\}^{\otimes n}}$ the expectation over uniformly random Pauli strings of length $n$, and $\mathbb{E}_{\ket{s}\sim\{\ket{0},\ket{1},\ket{+},\ket{-}, \ket{y+},\ket{y-}\}^{\otimes n}:\bra{s}P\ket{s}\neq 0}$ denotes the expectation over uniformly random tensor products of single-qubit stabilizer states which have non-vanishing overlap with $P$. Equivalently, the latter is the expectation over uniformly random eigenvectors of $P$. 
    Similarly, $\mathbb{E}_{\ket{s}\sim\{\ket{0},\ket{1},\ket{+},\ket{-}, \ket{y+},\ket{y-}\}^{\otimes n}}$ denotes the expectation over uniformly random tensor products of single-qubit stabilizer states, and $\mathbb{E}_{P\sim\{\mathds{1}, X, Y, Z\}^{\otimes n}:\bra{s}P\ket{s}\neq 0}$ denotes the expectation over uniformly random Pauli strings of length $n$ that have non-vanishing overlap with $\ket{s}$. Equivalently, the latter is the expectation over uniformly random Pauli strings that have $\ket{s}$ as an eigenvector.
    Note that the expectation values involved here are w.r.t.~uniform distributions over finite sets, which in particular justifies the last step in the above computation (since this then becomes a mere reordering of a finite sum).
    Plugging in the upper bound of Lemma~\ref{lem:auxiliary-overlap-bounds}, applied for the unitary $\Tilde{U}^\dagger W \Tilde{U}$, we further obtain:
    \begin{align}
        C_{\mbox {\tiny \rm HST}}(U, V)
        &\leq 2\mathbb{E}_{\ket{s}\sim\{\ket{0},\ket{1},\ket{+},\ket{-}, \ket{y+},\ket{y-}\}^{\otimes n}}\mathbb{E}_{P\sim\{\mathds{1}, X, Y, Z\}^{\otimes n}:\bra{s}P\ket{s}\neq 0} \left[ 1 - \lvert\bra{s} \Tilde{U}^\dagger W \Tilde{U} \ket{s}\rvert^2\right]\\
        &= 2\left( 1 - \mathbb{E}_{\ket{s}\sim\{\ket{0},\ket{1},\ket{+},\ket{-}, \ket{y+},\ket{y-}\}^{\otimes n}} \left[ \lvert\bra{s} \Tilde{U}^\dagger W \Tilde{U} \ket{s}\rvert^2 \right]\right)\\
        &= 2\left( 1 - \mathbb{E}_{\bigotimes_{i=1}^n \ket{\psi_i}\sim {\rm Haar}_1^{\otimes n}} \left[ \left\lvert\left(\bigotimes_{i=1}^n \bra{\psi_i}\right) \Tilde{U}^\dagger W \Tilde{U} \left(\bigotimes_{i=1}^n \ket{\psi_i}\right) \right\rvert^2 \right]\right)\, ,
    \end{align}
    where the last equality uses that single-qubit stabilizer states form a $2$-design (compare, e.g.,~\cite{gross2007evenly}, or see~\cite{kueng2015qubit, webb2016clifford, zhu2017multiqubit} for a stronger statement).
\end{proof}

To facilitate the comparison between the HST cost and a locally scrambled risk, we next show how to rewrite a general locally scrambled risk:

\begin{lem}\label{lem:locally-scrambled-cost-explicit-expression}
    Let $n\in\mathbb{N}$.
    Let $\mathcal{P}$ be a locally scrambled ensemble of $n$-qubit quantum states, with $\mathcal{U}_{{\rm test}}$ the corresponding locally scrambled unitary ensemble.
    Then, for any $n$-qubit unitaries $U$ and $V$, using the shorthand $W = U^\dagger V$,
    \begin{equation}
        R_{\mathcal{P}} \left( U, V \right)
        =  1 - \frac{1}{6^n} \sum_{A \subseteq \{1, \ldots, n\}} \mathbb{E}_{\Tilde{U}\sim \mathcal{U}_{{\rm test}}} \left[ \norm{\Tr_{A^c}\left[ \Tilde{U}^\dagger W \Tilde{U}\right]}_F^2 \right]\, ,
    \end{equation}
    where $\Tr_{A^c}$ denotes partial trace over all systems with index not in the set $A$ and $\norm{\cdot}_F$ denotes the Frobenius norm (which is the norm induced by the Hilbert-Schmidt inner product).  
\end{lem}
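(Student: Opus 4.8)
The plan is to turn $R_{\mathcal{P}}(U,V)$ into an $n$-fold tensor product of single-qubit second moments and then expand that product over subsets of qubits. Write $W = U^\dagger V$. Using the fidelity form of the risk from Eq.~\eqref{eq:supplementary-test-cost-general} together with the fact that $\ket{\Psi}\sim\mathcal{P}$ can be generated as $\ket{\Psi} = \Tilde{U}\ket{0}^{\otimes n}$ with $\Tilde{U}\sim\mathcal{U}_{\rm test}$, one has
\begin{equation*}
R_{\mathcal{P}}(U,V) = 1 - \mathbb{E}_{\Tilde{U}\sim\mathcal{U}_{\rm test}}\left[\left\lvert\bra{0}^{\otimes n}\Tilde{U}^\dagger W\Tilde{U}\ket{0}^{\otimes n}\right\rvert^2\right]\,.
\end{equation*}
The first substantive step is to replace the fixed reference state $\ket{0}^{\otimes n}$ by a tensor product of independent Haar-random single-qubit states. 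Since $\mathcal{U}_{\rm test}$ is locally scrambled, $\Tilde{U}\bigl(\bigotimes_{i=1}^n U_i\bigr)\sim\mathcal{U}_{\rm test}$ for every fixed tuple $U_1,\dots,U_n\in\mathcal{U}(\mathbb{C}^2)$; averaging this identity additionally over $U_i\sim{\rm Haar}_1$ (independently in $i$) and abbreviating $M := \Tilde{U}^\dagger W\Tilde{U}$ and $\ket{\psi_i} := U_i\ket{0}$ gives
\begin{equation*}
\mathbb{E}_{\Tilde{U}}\left[\left\lvert\bra{0}^{\otimes n}\Tilde{U}^\dagger W\Tilde{U}\ket{0}^{\otimes n}\right\rvert^2\right] = \mathbb{E}_{\Tilde{U}}\,\mathbb{E}_{\bigotimes_{i=1}^n\ket{\psi_i}\sim{\rm Haar}_1^{\otimes n}}\left[\left\lvert\Tr\!\left[M\bigotimes_{i=1}^n\ket{\psi_i}\bra{\psi_i}\right]\right\rvert^2\right]\,.
\end{equation*}

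For the second step I would expand the modulus squared as a second moment. With $\rho := \bigotimes_{i=1}^n\ket{\psi_i}\bra{\psi_i}$ one has $\lvert\Tr[M\rho]\rvert^2 = \Tr\bigl[(M\otimes M^\dagger)(\rho\otimes\rho)\bigr]$, and by independence of the $\ket{\psi_i}$ together with the single-qubit second-moment formula $\mathbb{E}_{\ket{\psi}\sim{\rm Haar}_1}[\ket{\psi}\bra{\psi}^{\otimes 2}] = (\mathds{1}\otimes\mathds{1}+\operatorname{SWAP})/6$, one obtains — after reordering the two copies of each qubit so that they sit adjacent — $\mathbb{E}[\rho\otimes\rho] = 6^{-n}\bigotimes_{i=1}^n(\mathds{1}_i\otimes\mathds{1}_i+\operatorname{SWAP}_i)$. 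Multiplying out this $n$-fold product yields $\mathbb{E}[\rho\otimes\rho] = 6^{-n}\sum_{A\subseteq\{1,\dots,n\}}\operatorname{SWAP}_A$, where $\operatorname{SWAP}_A$ is the operator that swaps the two copies on the qubits in $A$ and acts as the identity on the qubits in $A^c$. The last ingredient is the partial-swap identity $\Tr\bigl[(M\otimes M^\dagger)\operatorname{SWAP}_A\bigr] = \norm{\Tr_{A^c}[M]}_F^2$, which follows from computing both sides in a product basis adapted to the bipartition $A\,|\,A^c$ (and which, as the remark following this lemma notes, does not use unitarity of $W$). Combining these facts, substituting back $M = \Tilde{U}^\dagger W\Tilde{U}$, and pulling the finite sum over $A$ out of the expectation over $\Tilde{U}$ gives exactly the claimed identity.

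I expect the only genuine friction to be bookkeeping: tracking which tensor factor each operator acts on when the two copies are reordered, and verifying the partial-swap identity $\Tr[(M\otimes M^\dagger)\operatorname{SWAP}_A] = \norm{\Tr_{A^c}[M]}_F^2$ cleanly. Everything else is a direct substitution, and the crucial move — inserting an independent single-qubit Haar average for free — is legitimate precisely because $\mathcal{U}_{\rm test}\in\mathbb{U}_{\rm LS}$. Since $R_{\mathcal{P}}$ depends on $\mathcal{P}$ only through its second moment, the same argument also shows that the statement remains valid, with $\mathcal{U}_{\rm test}$ replaced by the approximating ensemble $\Tilde{\mathcal{U}}$, whenever $\mathcal{P}\in\mathbb{S}_{\rm LS}^{(2)}$.
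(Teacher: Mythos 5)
Your proposal is correct and follows essentially the same route as the paper's proof: inserting a free average over tensor products of Haar-random single-qubit states via local scrambling, applying the single-qubit second-moment identity $\mathbb{E}[\ket{\psi}\bra{\psi}^{\otimes 2}]=(\mathds{1}\otimes\mathds{1}+\operatorname{SWAP})/6$, expanding the resulting product over subsets $A$, and identifying $\Tr[(M\otimes M^\dagger)\operatorname{SWAP}_A]$ with $\norm{\Tr_{A^c}[M]}_F^2$. The only cosmetic difference is that you average $\rho\otimes\rho$ directly against $M\otimes M^\dagger$, whereas the paper first converts $\Tr[\rho M\rho M^\dagger]$ via an explicit full-swap insertion and lets the extra swap cancel against $\operatorname{SWAP}^2=\mathds{1}$; the computations are equivalent.
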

\begin{proof}
    Throughout the proof, we use the shorthand $W = U^\dagger V$.
    We begin by noticing that, since $\mathcal{U}$ is locally scrambled, for any fixed $U_1,\ldots,U_n\in\mathcal{U}(\mathbb{C}^2)$, we have:
    \begin{align}
        R_{\mathcal{P}} \left( U, V \right)
        &= 1 - \mathbb{E}_{\ket{\Psi}\sim \mathcal{P}} \left[ \left\lvert \bra{\Psi} W \ket{\Psi} \right\rvert^2 \right]\\
        &= 1 - \mathbb{E}_{\Tilde{U}\sim\mathcal{U}_{{\rm test}}} \left[ \left\lvert \bra{0}^{\otimes n} \Tilde{U}^\dagger W \Tilde{U}\ket{0}^{\otimes n} \right\rvert^2 \right]\\
        &= 1 - \mathbb{E}_{\Tilde{U}\sim\mathcal{U}_{{\rm test}}} \left[ \left\lvert \bra{0}^{\otimes n} \left( \bigotimes_{i=1}^n U_i^\dagger \right) \Tilde{U}^\dagger W \Tilde{U} \left( \bigotimes_{i=1}^n U_i \right) \ket{0}^{\otimes n} \right\rvert^2 \right] \, . \label{eq:locally-scrambled-cost-fixed-tensor-product-inserted}
    \end{align}
    If we now take an expectation over tensor products of Haar-random single-qubit unitaries $\bigotimes_{i=1}^n U_i\sim {\rm Haar}_1^{\otimes n}$, this yields:
    \begin{align}
        R_{\mathcal{P}} \left( U, V \right)
        &= 1 - \mathbb{E}_{\bigotimes_{i=1}^n U_i\sim {\rm Haar}_1^{\otimes n}} \mathbb{E}_{\Tilde{U}\sim\mathcal{U}_{{\rm test}}} \left[ \left\lvert \bra{0}^{\otimes n} \left( \bigotimes_{i=1}^n U_i^\dagger \right) \Tilde{U}^\dagger W \Tilde{U} \left( \bigotimes_{i=1}^n U_i \right) \ket{0}^{\otimes n} \right\rvert^2 \right]\\
        &= 1 - \mathbb{E}_{\Tilde{U}\sim\mathcal{U}_{{\rm test}}} \mathbb{E}_{\bigotimes_{i=1}^n U_i\sim {\rm Haar}_1^{\otimes n}} \left[ \left\lvert \bra{0}^{\otimes n} \left( \bigotimes_{i=1}^n U_i^\dagger \right) \Tilde{U}^\dagger W \Tilde{U} \left( \bigotimes_{i=1}^n U_i \right) \ket{0}^{\otimes n} \right\rvert^2 \right]\\
        &= \mathbb{E}_{\Tilde{U}\sim\mathcal{U}_{{\rm test}}} \left[ 1 - \mathbb{E}_{\bigotimes_{i=1}^n \ket{\psi_i}\sim {\rm Haar}_1^{\otimes n}} \left[ \left\lvert \left(\bigotimes_{i=1}^n \bra{\psi_i}\right) \Tilde{U}^\dagger W \Tilde{U} \left(\bigotimes_{i=1}^n \ket{\psi_i}\right) \right\rvert^2 \right] \right] \label{eq:locally-scrambled-cost-random-tensor-product-inserted-overlap-notation}\\
        &= \mathbb{E}_{\Tilde{U}\sim\mathcal{U}_{{\rm test}}} \left[ 1 - \mathbb{E}_{\bigotimes_{i=1}^n \ket{\psi_i}\sim {\rm Haar}_1^{\otimes n}} \left[ \Tr \left[ \left( \bigotimes_{i=1}^n \ket{\psi_i} \bra{\psi_i } \right)  \Tilde{U}^\dagger W \Tilde{U} \left( \bigotimes_{i=1}^n \ket{\psi_i} \bra{\psi_i } \right) \Tilde{U}^\dagger W^\dagger \Tilde{U} \right] \right] \right] \, ,\label{eq:locally-scrambled-cost-random-tensor-product-inserted-trace-notation}
    \end{align}
    where used that we can exchange the order of the expectation values by Tonelli's theorem, since the integrand is non-negative, and then slightly abused notation by using ${\rm Haar}_1^{\otimes n}$ to also denote the probability distribution describing a tensor product of Haar-random single-qubit states.
    
    Next, we recall the following Haar identity (see, e.g., Eq.~(2.26) in~\cite{roberts2017chaos}):
    \begin{align}
        \mathbb{E}_{\ket{\psi}\sim{\rm Haar}_1}\left[
        \ket{\psi}\bra{\psi}^{\otimes 2} \right] 
        &= \frac{\mathds{1} \otimes \mathds{1} + \operatorname{SWAP}}{6} \, ,\label{eq:basic-haar-integral}
    \end{align}
    where $\operatorname{SWAP}$ is the swap operator between two qubits and $\mathds{1}$ denotes the identity matrix on a single qubit system.
    Using this identity and the trace equality
    \begin{align}
        \Tr \left[ \operatorname{SWAP} (A\otimes B) \right]
        &= \Tr \left[ AB\right] \, ,
    \end{align}
    we can rewrite, for any fixed $\Tilde{U}\in\mathcal{U}((\mathbb{C}^{2})^{\otimes n})$,
    \begin{align} 
        & \mathbb{E}_{\bigotimes_{i=1}^n \ket{\psi_i} \sim {\rm Haar}_1^{\otimes n}}  \left[ \Tr \left[ \left( \bigotimes_{i=1}^n \ket{\psi_i} \bra{\psi_i } \right)  \Tilde{U}^\dagger W \Tilde{U} \left( \bigotimes_{i=1}^n \ket{\psi_i} \bra{\psi_i } \right) \Tilde{U}^\dagger W^\dagger \Tilde{U}  \right] \right]\\
        &= \mathbb{E}_{\bigotimes_{i=1}^n \ket{\psi_i} \sim {\rm Haar}_1^{\otimes n}}  \left[ \Tr \left[ \left(\bigotimes_{i=1}^n \operatorname{SWAP}_{i,i}\right) \left(\left( \bigotimes_{i=1}^n \ket{\psi_i} \bra{\psi_i } \right) \Tilde{U}^\dagger W \Tilde{U} \right)\otimes \left(\left( \bigotimes_{i=1}^n \ket{\psi_i} \bra{\psi_i } \right) \Tilde{U}^\dagger W^\dagger \Tilde{U}\right) \right] \right]\\
        &= \Tr\left[ \bigotimes_{i=1}^n \left( \operatorname{SWAP}_{i,i} \frac{\mathds{1}_i \otimes \mathds{1}_{i} + \operatorname{SWAP}_{i,i}}{6} \right) (\Tilde{U}^\dagger W \Tilde{U} \otimes \Tilde{U}^\dagger W^\dagger \Tilde{U}) \right]\, .
    \end{align}
    Here, $\operatorname{SWAP}_{i,i}$ denotes the single-qubit swap operator that acts on the $i^{\textrm{th}}$ qubits of the first and second $n$-qubit tensor factors, respectively.
    Next, we use that the swap operator is its own inverse and the definition of the partial trace to continue the computation:
    \begin{align}
        &\Tr\left[ \bigotimes_{i=1}^n \left( \operatorname{SWAP}_{i,i} \frac{\mathds{1}_i \otimes \mathds{1}_i + \operatorname{SWAP}_{i,i}}{6} \right) (\Tilde{U}^\dagger W \Tilde{U} \otimes \Tilde{U}^\dagger W^\dagger \Tilde{U}) \right]\\
        &= \Tr\left[ \bigotimes_{i=1}^n \left( \frac{\mathds{1}_i \otimes \mathds{1}_i + \operatorname{SWAP}_{i,i}}{6} \right) (\Tilde{U}^\dagger W \Tilde{U} \otimes \Tilde{U}^\dagger W^\dagger \Tilde{U}) \right] \\
        &= \frac{1}{6^n} \sum_{A \subseteq \{1, \ldots, n\}} \Tr\left[ \left(\left(\bigotimes_{i\not\in A}(\mathds{1}_i \otimes \mathds{1}_i)\right)\otimes\left(\bigotimes_{i\in A}\operatorname{SWAP}_{i,i}\right)\right)(\Tilde{U}^\dagger W \Tilde{U} \otimes \Tilde{U}^\dagger W^\dagger \Tilde{U})\right]\\
        &= \frac{1}{6^n} \sum_{A \subseteq \{1, \ldots, n\}} \Tr\left[ \left(\bigotimes_{i\in A}\operatorname{SWAP}_{i,i}\right) \Tr_{A^c,A^c}\left[\Tilde{U}^\dagger W \Tilde{U} \otimes \Tilde{U}^\dagger W^\dagger \Tilde{U}\right]\right]\, .
    \end{align}
    In the last step, we have used $\Tr_{A^c,A^c}$ to denote the partial trace over the $A^c$-subsystems of both the first and the second $n$-qubit tensor factors. 
    Observing that $\Tr_{A^c,A^c}\left[\Tilde{U}^\dagger W \Tilde{U} \otimes \Tilde{U}^\dagger W^\dagger \Tilde{U}\right] = \Tr_{A^c}[\Tilde{U}^\dagger W \Tilde{U}]\otimes\Tr_{A^c}[\Tilde{U}^\dagger W^\dagger \Tilde{U}]$ and writing $\operatorname{SWAP}_{A,A} = \bigotimes_{i\in A}\operatorname{SWAP}_{i,i}$, we finish the computation and obtain
    \begin{align}
        &\frac{1}{6^n} \sum_{A \subseteq \{1, \ldots, n\}} \Tr\left[ \left(\bigotimes_{i\in A}\operatorname{SWAP}_{i,i}\right) \Tr_{A^c,A^c}\left[\Tilde{U}^\dagger W \Tilde{U} \otimes \Tilde{U}^\dagger W^\dagger \Tilde{U}\right]\right]\\
        &= \frac{1}{6^n} \sum_{A \subseteq \{1, \ldots, n\}} \Tr\left[ \operatorname{SWAP}_{A,A}(\Tr_{A^c}[\Tilde{U}^\dagger W \Tilde{U}]\otimes\Tr_{A^c}[\Tilde{U}^\dagger W^\dagger \Tilde{U}])\right]\\
        &= \frac{1}{6^n} \sum_{A \subseteq \{1, \ldots, n\}} \Tr\left[ \Tr_{A^c}\left[\Tilde{U}^\dagger W \Tilde{U}\right] \Tr_{A^c}\left[\Tilde{U}^\dagger W^\dagger \Tilde{U} \right] \right] \\
        &= \frac{1}{6^n} \sum_{A \subseteq \{1, \ldots, n\}} \norm{\Tr_{A^c}\left[\Tilde{U}^\dagger W \Tilde{U}\right]}_F^2 \, .
    \end{align}
    Plugging this observation into Eq.~\eqref{eq:locally-scrambled-cost-random-tensor-product-inserted-trace-notation} finishes the proof.
\end{proof}

As an aside, note that the proof of Lemma~\ref{lem:locally-scrambled-cost-explicit-expression} does not use the unitarity of $U$ and $V$. That is, Lemma~\ref{lem:locally-scrambled-cost-explicit-expression}, just like Lemma~\ref{lem:hst-cost-pauli-average}, is valid for arbitrary $U,V\in\mathcal{B}(\mathbb{C}^d)$. However, Lemma~\ref{lem:auxiliary-overlap-bounds}, Corollary~\ref{corol:hst-cost-bound-local-haar-average}, and the results from here on make use of the unitarity assumption.


Now, we are ready to combine the tools developed so far to establish the main technical result of this subsection.
We show that all locally scrambled risks are equivalent to the HST cost:

\begin{lem}[Restatement of Lemma~\ref{lem:locally-scrambled-cost-versus-hst-cost}]\label{lem:supplementary-locally-scrambled-cost-versus-hst-cost}
    Let $n\in\mathbb{N}$ and write $d=2^n$.
    Let $\mathcal{P}$ be a locally scrambled ensemble of $n$-qubit quantum states, with $\mathcal{U}_{{\rm test}}$ the corresponding locally scrambled unitary ensemble.
    Then, for any $n$-qubit unitaries $U\in\mathcal{U}((\mathbb{C}^{2})^{\otimes n})$ and $V\in\mathcal{U}((\mathbb{C}^{2})^{\otimes n})$,
    \begin{equation}
        \frac{1}{2} C_{\mbox {\tiny \rm HST}}(U, V)
        \leq R_{\mathcal{P}} \left( U, V \right)
        \leq C_{\mbox {\tiny \rm HST}}(U, V) \, .
    \end{equation}
\end{lem}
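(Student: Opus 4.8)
The plan is to establish the two inequalities separately. The lower bound will come from averaging Corollary~\ref{corol:hst-cost-bound-local-haar-average} over the locally scrambled unitary ensemble, while the upper bound will come from the explicit partial-trace formula of Lemma~\ref{lem:locally-scrambled-cost-explicit-expression} combined with a Cauchy--Schwarz estimate. Throughout I would write $W = U^\dagger V$ and let $\mathcal{U}_{\rm test}$ be the locally scrambled unitary ensemble with $\mathcal{P} = \mathcal{U}_{\rm test}\ket{0}^{\otimes n}$.

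For the lower bound, I would start from the intermediate identity appearing in the proof of Lemma~\ref{lem:locally-scrambled-cost-explicit-expression}, namely Eq.~\eqref{eq:locally-scrambled-cost-random-tensor-product-inserted-overlap-notation},
\[
    R_{\mathcal{P}}(U,V) = \mathbb{E}_{\tilde{U}\sim\mathcal{U}_{\rm test}}\left[ 1 - \mathbb{E}_{\bigotimes_{i=1}^n\ket{\psi_i}\sim{\rm Haar}_1^{\otimes n}}\left[ \left\lvert \left(\bigotimes_{i=1}^n\bra{\psi_i}\right)\tilde{U}^\dagger W \tilde{U}\left(\bigotimes_{i=1}^n\ket{\psi_i}\right)\right\rvert^2\right]\right].
\]
Corollary~\ref{corol:hst-cost-bound-local-haar-average} says exactly that, for \emph{every} fixed $\tilde{U}$, the quantity inside the outer expectation is at least $\tfrac{1}{2}C_{\mbox {\tiny \rm HST}}(U,V)$; since this lower bound does not involve $\tilde{U}$, taking $\mathbb{E}_{\tilde{U}\sim\mathcal{U}_{\rm test}}$ preserves it, so $R_{\mathcal{P}}(U,V)\geq\tfrac{1}{2}C_{\mbox {\tiny \rm HST}}(U,V)$. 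This is the step where unitarity of $W=U^\dagger V$ gets used, via Lemma~\ref{lem:auxiliary-overlap-bounds}.

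For the upper bound, I would instead use the partial-trace form
\[
    R_{\mathcal{P}}(U,V) = 1 - \frac{1}{6^n}\sum_{A\subseteq\{1,\ldots,n\}}\mathbb{E}_{\tilde{U}\sim\mathcal{U}_{\rm test}}\left[\norm{\Tr_{A^c}\left[\tilde{U}^\dagger W \tilde{U}\right]}_F^2\right],
\]
and lower bound each Frobenius norm. The elementary ingredient I would invoke is that for any $m\times m$ matrix $N$, Cauchy--Schwarz for the Hilbert--Schmidt inner product gives $\lvert\Tr N\rvert=\lvert\Tr[\mathds{1}_m^\dagger N]\rvert\leq\norm{\mathds{1}_m}_F\,\norm{N}_F=\sqrt{m}\,\norm{N}_F$, i.e.\ $\norm{N}_F^2\geq\lvert\Tr N\rvert^2/m$. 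Applying this with $N=\Tr_{A^c}[\tilde{U}^\dagger W\tilde{U}]$ and $m=2^{\lvert A\rvert}$, and noting $\Tr\,\Tr_{A^c}[\tilde{U}^\dagger W\tilde{U}]=\Tr[W]$ independently of $\tilde{U}$ and $A$, I get $\mathbb{E}_{\tilde{U}}[\norm{\Tr_{A^c}[\tilde{U}^\dagger W\tilde{U}]}_F^2]\geq 2^{-\lvert A\rvert}\lvert\Tr[W]\rvert^2$. Summing over subsets with the binomial identity $\sum_{A\subseteq\{1,\ldots,n\}}2^{-\lvert A\rvert}=\sum_{k=0}^n\binom{n}{k}2^{-k}=(3/2)^n$ and using $(3/2)^n/6^n=4^{-n}=1/d^2$ collapses the sum to $\lvert\Tr[U^\dagger V]\rvert^2/d^2$, giving $R_{\mathcal{P}}(U,V)\leq 1-\lvert\Tr[U^\dagger V]\rvert^2/d^2=C_{\mbox {\tiny \rm HST}}(U,V)$.

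I do not expect a serious obstacle once the preparatory lemmas are granted — both directions are essentially one-line arguments. The only place that needs care is the upper bound, where one must verify that the dimension factors $2^{-\lvert A\rvert}$ produced by Cauchy--Schwarz, weighted by $6^{-n}$ and summed over all $2^n$ subsets via $(1+\tfrac12)^n$, collapse to \emph{exactly} $1/d^2$, so that the HST cost is recovered with constant $1$ rather than a weaker constant; keeping the bipartitions $(A,A^c)$ and their dimensions straight is the main bookkeeping task. For the lower bound the only thing to notice is that the free unitary $\tilde{U}$ in Corollary~\ref{corol:hst-cost-bound-local-haar-average} is precisely the random unitary generating the locally scrambled state ensemble, so averaging over it is both legitimate and costs nothing.
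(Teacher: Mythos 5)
Your proposal is correct and follows essentially the same route as the paper: the lower bound by averaging Corollary~\ref{corol:hst-cost-bound-local-haar-average} over $\tilde{U}\sim\mathcal{U}_{\rm test}$ and identifying the result with $2R_{\mathcal{P}}(U,V)$, and the upper bound from the partial-trace formula of Lemma~\ref{lem:locally-scrambled-cost-explicit-expression} together with $\norm{N}_F^2\geq\lvert\Tr N\rvert^2/m$ and the binomial sum collapsing to $1/d^2$. The only cosmetic difference is that you obtain the Frobenius-vs-trace inequality by Cauchy--Schwarz against the identity, whereas the paper derives the same bound via Jensen's inequality on the singular values followed by H\"older.
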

\begin{proof}
    Throughout the proof, we use the shorthand $W = U^\dagger V$.
    We start with the proof of the first inequality, $\frac{1}{2} C_{\mbox {\tiny \rm HST}}(U, V)\leq R_{\mathcal{P}} \left( U, V \right)$.
    To this end, we apply Corollary~\ref{corol:hst-cost-bound-local-haar-average} to see that, for any $\tilde{U}\in\mathcal{U}((\mathbb{C}^{2})^{\otimes n})$,
    \begin{equation}
        C_{\mbox {\tiny \rm HST}}(U, V)
        \leq  2\left( 1 - \mathbb{E}_{\bigotimes_{i=1}^n \ket{\psi_i}\sim {\rm Haar}_1^{\otimes n}} \left[ \left\lvert\left(\bigotimes_{i=1}^n \bra{\psi_i}\right) \Tilde{U}^\dagger W \Tilde{U} \left(\bigotimes_{i=1}^n \ket{\psi_i}\right) \right\rvert^2 \right]\right)\, .
    \end{equation}
    Taking an expectation over $\Tilde{U}\sim \mathcal{U}_{{\rm test}}$, we obtain:
    \begin{align}
        C_{\mbox {\tiny \rm HST}}(U, V)
        &=  \mathbb{E}_{\Tilde{U}\sim\mathcal{U}_{{\rm test}}} \left[C_{\mbox {\tiny \rm HST}}(\Tilde{U}^\dagger U \Tilde{U}, \Tilde{U}^\dagger V \Tilde{U})\right] \\
        &\leq 2  \mathbb{E}_{\Tilde{U}\sim\mathcal{U}_{{\rm test}}} \left[ 1 - \mathbb{E}_{\bigotimes_{i=1}^n \ket{\psi_i}\sim {\rm Haar}_1^{\otimes n}} \left[ \left\lvert \left(\bigotimes_{i=1}^n \bra{\psi_i}\right) \Tilde{U}^\dagger W \Tilde{U} \left(\bigotimes_{i=1}^n \ket{\psi_i}\right) \right\rvert^2 \right] \right]
        \\&= 2 R_{\mathcal{P}} \left( U, V \right)\, ,
    \end{align}
    where the last equality uses that $\mathcal{U}$ is locally scrambled and was already derived previously in Eq.~\eqref{eq:locally-scrambled-cost-random-tensor-product-inserted-overlap-notation}.
    This finishes the proof of the first inequality.
    
    \medskip
    
    We now turn our attention to the second inequality, $R_{\mathcal{P}} \left( U, V \right)\leq C_{\mbox {\tiny \rm HST}}(U, V)$. 
    To prove this inequality, we rely on the expression for $R_{\mathcal{P}} \left( U, V \right)$ derived in Lemma~\ref{lem:locally-scrambled-cost-explicit-expression}.
    So, let $A\subseteq\{1,\ldots,n\}$ be a subset of cardinality $\lvert A\rvert = k$. 
    Then, we have, again for any fixed $\Tilde{U}\in\mathcal{U}(\mathbb{C}^{d})$:
    \begin{align}
        \norm{\Tr_{A^c}\left[ \Tilde{U}^\dagger W \Tilde{U}\right]}_F^2
        &= \sum_{i=1}^{2^k} \left( s_i \left(\Tr_{A^c}\left[\Tilde{U}^\dagger W \Tilde{U}\right]\right) \right)^2\\
        &\geq \frac{1}{2^k} \left( \sum_{i=1}^{2^k} s_i \left(\Tr_{A^c}\left[\Tilde{U}^\dagger W \Tilde{U}\right]\right) \right)^2\\
        &= \frac{1}{2^k} \norm{\Tr_{A^c}\left[\Tilde{U}^\dagger W \Tilde{U}\right]}_1^2\\
        &\geq \frac{1}{2^k} \left\lvert \Tr \left[ \Tr_{A^c}\left[\Tilde{U}^\dagger W \Tilde{U}\right] \right]\right\rvert^2\\
        &= \frac{1}{2^k} \left\lvert \Tr \left[ \Tilde{U}^\dagger W \Tilde{U} \right]\right\rvert^2\\
        &= \frac{1}{2^k} \left\lvert \Tr \left[  W \right]\right\rvert^2\, ,
    \end{align}
    where the first line is the definition of the Frobenius norm (also known as Schatten $2$-norm) as $2$-norm of the vector of singular values, the second line uses Jensen's inequality, the third line uses the definition of the trace norm (also known as Schatten $1$-norm) as the sum of singular values, the fourth line uses Hölder's inequality (for $p=1$ and $q=\infty$, i.e. $\Tr[AB] \leq ||A||_1 ||B||_\infty$ with $B = \I$), the fifth line uses that the trace of a partial trace equals the trace of the original matrix, and the last line uses unitarity of $\Tilde{U}$ and the basis-invariance of the trace.
    Taking an expectation over $\Tilde{U}\sim \mathcal{U}_{{\rm test}}$ and combining the resulting inequality with Lemma~\ref{lem:locally-scrambled-cost-explicit-expression}, we get
    \begin{align}
        R_{\mathcal{P}} \left( U, V \right)
        &= 1 - \frac{1}{6^n} \sum_{A \subseteq \{1, \ldots, n\}} \mathbb{E}_{\Tilde{U}\sim \mathcal{U}_{{\rm test}}} \left[ \norm{\Tr_{A^c}\left[ \Tilde{U}^\dagger W \Tilde{U}\right]}_F^2 \right]\\
        &\leq 1 - \frac{1}{6^n} \sum_{k=0}^n \binom{n}{k}\cdot \frac{1}{2^{k}}\lvert\Tr\left[W\right]\rvert^2 \\
        &=  1 - \frac{1}{4^n} \lvert\Tr\left[W\right] \rvert^2\\
        &= C_{\mbox {\tiny \rm HST}}(U, V) \, .
    \end{align}
    This is the second inequality that we set out to prove.
\end{proof}

As an immediate consequence of Lemma~\ref{lem:supplementary-locally-scrambled-cost-versus-hst-cost}, since all locally scrambled risks are equivalent to the HST cost, we also see that all locally scrambled risks are equivalent to each other up to a constant multiplicative factor. That is, the following holds.

\begin{thm}[Equivalence of locally scrambled ensembles for comparing unitaries -- Restatement of Theorem~\ref{thm:comparing-different-locally-scrambled-costs}]\label{thm:supplementary-comparing-different-locally-scrambled-costs}
    Let $\mathcal{P}$ and $\mathcal{Q}$ be two locally scrambled ensembles of $n$-qubit quantum states.
    Then, for any $n$-qubit unitaries $U$ and $V$,
    \begin{equation}
        \frac{1}{2} R_{\mathcal{Q}} \left( U, V \right)
        \leq R_{\mathcal{P}} \left( U, V \right)
        \leq 2 R_{\mathcal{Q}} \left( U, V \right) \, .
    \end{equation}
\end{thm}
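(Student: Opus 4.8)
The plan is to obtain this as an immediate corollary of Lemma~\ref{lem:supplementary-locally-scrambled-cost-versus-hst-cost} (the restatement of Lemma~\ref{lem:locally-scrambled-cost-versus-hst-cost}), which asserts that \emph{every} locally scrambled risk is sandwiched as $\tfrac12 C_{\mbox{\tiny\rm HST}}(U,V) \leq R_{\mathcal{P}}(U,V) \leq C_{\mbox{\tiny\rm HST}}(U,V)$. The crucial feature is that the reference quantity $C_{\mbox{\tiny\rm HST}}(U,V)$ in this two-sided bound is the \emph{same} for every locally scrambled ensemble, so two applications of the lemma — one for $\mathcal{P}$ and one for $\mathcal{Q}$ — can be composed to eliminate the HST cost entirely.

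Concretely, I would first apply Lemma~\ref{lem:supplementary-locally-scrambled-cost-versus-hst-cost} to $\mathcal{P}$ to get $R_{\mathcal{P}}(U,V) \leq C_{\mbox{\tiny\rm HST}}(U,V)$, and then apply it to $\mathcal{Q}$ to get $\tfrac12 C_{\mbox{\tiny\rm HST}}(U,V) \leq R_{\mathcal{Q}}(U,V)$, i.e.\ $C_{\mbox{\tiny\rm HST}}(U,V) \leq 2 R_{\mathcal{Q}}(U,V)$; chaining these yields the upper bound $R_{\mathcal{P}}(U,V) \leq 2 R_{\mathcal{Q}}(U,V)$. The lower bound is the symmetric statement obtained by interchanging the roles of $\mathcal{P}$ and $\mathcal{Q}$: $R_{\mathcal{Q}}(U,V) \leq C_{\mbox{\tiny\rm HST}}(U,V) \leq 2 R_{\mathcal{P}}(U,V)$, hence $\tfrac12 R_{\mathcal{Q}}(U,V) \leq R_{\mathcal{P}}(U,V)$. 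Together these are exactly the two claimed inequalities.

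There is essentially no obstacle at this stage: all of the analytic content — the Pauli-average rewriting of the HST cost (Lemma~\ref{lem:hst-cost-pauli-average}), the unitarity-based overlap bound (Lemma~\ref{lem:auxiliary-overlap-bounds}), the single-qubit $2$-design reduction to a local Haar average (Corollary~\ref{corol:hst-cost-bound-local-haar-average}), and the partition-sum expression for a locally scrambled risk (Lemma~\ref{lem:locally-scrambled-cost-explicit-expression}) — has already been carried out in proving Lemma~\ref{lem:supplementary-locally-scrambled-cost-versus-hst-cost}, and the present theorem merely reshuffles its conclusion. The only point worth a remark is the origin of the constant $2$: it is the product of the two slacks, $\tfrac12$ from the lower half and $1$ from the upper half of the sandwich, and one expects it to be loose in typical cases (as noted after Lemma~\ref{lem:locally-scrambled-cost-versus-hst-cost}). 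Finally, to upgrade the statement from $\mathbb{S}_{\rm LS}$ to the broader class $\mathbb{S}_{\rm LS}^{(2)}$ of the main-text version, I would observe that $R_{\mathcal{P}}(U,V) = 1 - \Tr\big[\mathbb{E}_{\ket{\Psi}\sim\mathcal{P}}[(\ket{\Psi}\!\bra{\Psi})^{\otimes 2}]\,(W\otimes W^\dagger)\big]$ with $W = U^\dagger V$ depends on $\mathcal{P}$ only through its complex second moment, so any ensemble in $\mathbb{S}_{\rm LS}^{(2)}$ has the same risk as its genuinely locally scrambled partner and the bound transfers verbatim.
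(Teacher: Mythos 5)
Your proposal is correct and is exactly the paper's argument: the paper derives Theorem~\ref{thm:supplementary-comparing-different-locally-scrambled-costs} as an ``immediate consequence'' of Lemma~\ref{lem:supplementary-locally-scrambled-cost-versus-hst-cost} by chaining the two-sided sandwich $\tfrac12 C_{\mbox{\tiny\rm HST}}(U,V)\leq R_{\mathcal{P}}(U,V)\leq C_{\mbox{\tiny\rm HST}}(U,V)$ applied to $\mathcal{P}$ and $\mathcal{Q}$, just as you do. Your closing observation about extending to $\mathbb{S}_{\rm LS}^{(2)}$ via second moments likewise matches Remark~\ref{remark:second-moments-suffice}.
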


Thus, for the purposes of comparing unitaries, all locally scrambled ensembles are in effect equivalent.
In the next subsection, we combine this insight with known in-distribution generalization bounds for learning unitaries via QNNs to establish out-of-distribution generalization guarantees, if both the training and the testing distribution are locally scrambled.
In particular, we will use Theorem~\ref{thm:supplementary-comparing-different-locally-scrambled-costs} to show that, if we train on input states coming from a ``simple'' locally scrambled ensemble, such as random product states (Example~\ref{ex:product-haar-random}), and have good in-distribution-generalization there, then we generalize to any other ``more complicated'' locally scrambled ensemble, such as fully Haar-random (and thus highly entangled) states or output states of random circuits (Examples~\ref{ex:global-haar-random} and~\ref{ex:random-circuit-outputs}).

\begin{rmk}\label{remark:second-moments-suffice}
    By definition, the expected risk in Eq.~\eqref{eq:supplementary-test-cost-general} depends only on the complex second moment of the testing distribution $\mathcal{P}$.
    Therefore, we can directly extend Theorem~\ref{thm:supplementary-comparing-different-locally-scrambled-costs} beyond locally scrambled ensembles, i.e. elements of $\mathbb{S}_{\rm LS}$, to ensembles whose complex second moments agree with those of some locally scrambled ensembles, i.e. to elements of $\mathbb{S}_{\rm LS}^{(2)}$.
    As a concrete example (discussed also in Example~\ref{ex:product-random-stabilizer}): Single-qubit stabilizer states form a $2$-design and tensor products of Haar-random single-qubit states are locally scrambled. Thus, also the testing risk obtained by taking tensor products of random single-qubit stabilizer states for $\mathcal{P}$ in Eq.~\eqref{eq:supplementary-test-cost-general} is equivalent to any locally scrambled risk up to factors of $2$.
\end{rmk}

\subsection{Out-of-Distribution Generalization for QNNs Trained on Locally Scrambled States}\label{ap:genproofs}

In this subsection, we use our results of Subsection~\ref{subsection:supplementary-equivalence-locally-scrambled-costs} to strengthen in-distribution generalization bounds for learning unitaries via QNNs to out-of-distribution generalization bounds for the same task, where we allow for arbitrary locally scrambled training and testing distributions. 
The following theorem, where we denote the in-distribution generalization error by $\operatorname{gen}_{\mathcal{Q}, \mathcal{D}_{\mathcal{Q}}(N)} \left(\alv \right) = R_\mathcal{Q} \left(U, V(\alv) \right) - C_{\mathcal{D}_{\mathcal{Q}}(N)} \left(U, V(\alv) \right)$, serves as a general template for such a strengthening:

\begin{corol}[Out-of-distribution generalization from in-distribution generalization in unitary learning -- Restatement of Corollary~\ref{corol:ood-generalization-from-id-generalization-unitary-learning}]\label{corol:supplementary-ood-generalization-from-id-generalization-unitary-learning}
    Let $n\in\mathbb{N}$ and write $d=2^n$.
    Let $\mathcal{Q}$ and $\mathcal{P}$ be two locally scrambled ensembles of $n$-qubit quantum states.
    Let $U$ be an unknown $n$-qubit unitary.
    Let $V(\alv)$ be an $n$-qubit unitary QNN.
    For any parameter setting $\alv$, we have
    \begin{equation}
        R_{\mathcal{P}} (U, V(\alv))
        \leq 2\left( C_{\mathcal{D}_{\mathcal{Q}}(N)} (U, V(\alv)) + \operatorname{gen}_{\mathcal{Q}, \mathcal{D}_{\mathcal{Q}}(N)} \left(\alv \right) \right) \, .
    \end{equation}
\end{corol}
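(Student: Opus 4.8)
The plan is to derive this corollary as an essentially immediate consequence of Theorem~\ref{thm:supplementary-comparing-different-locally-scrambled-costs} together with the definition of the in-distribution generalization error. The only slightly non-trivial point is to recall, via Remark~\ref{remark:second-moments-suffice}, that the testing risk $R_{\mathcal{P}}(U,V(\alv))$ depends on $\mathcal{P}$ only through its complex second moment, so the equivalence of locally scrambled risks proved for ensembles in $\mathbb{S}_{\rm LS}$ automatically carries over to ensembles in $\mathbb{S}_{\rm LS}^{(2)}$; hence the hypotheses $\mathcal{P},\mathcal{Q}\in\mathbb{S}_{\rm LS}^{(2)}$ are enough to invoke Theorem~\ref{thm:supplementary-comparing-different-locally-scrambled-costs}.

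Concretely, I would first apply the upper bound in Theorem~\ref{thm:supplementary-comparing-different-locally-scrambled-costs} (in the extended form of Remark~\ref{remark:second-moments-suffice}) to get
\begin{equation}
    R_{\mathcal{P}}(U, V(\alv)) \leq 2\, R_{\mathcal{Q}}(U, V(\alv)) \, .
\end{equation}
Then I would rewrite $R_{\mathcal{Q}}$ using the definition $\operatorname{gen}_{\mathcal{Q}, \mathcal{D}_{\mathcal{Q}}(N)}(\alv) = R_{\mathcal{Q}}(U, V(\alv)) - C_{\mathcal{D}_{\mathcal{Q}}(N)}(U, V(\alv))$, i.e.
\begin{equation}
    R_{\mathcal{Q}}(U, V(\alv)) = C_{\mathcal{D}_{\mathcal{Q}}(N)}(U, V(\alv)) + \operatorname{gen}_{\mathcal{Q}, \mathcal{D}_{\mathcal{Q}}(N)}(\alv) \, ,
\end{equation}
and substitute this into the previous display to obtain exactly
\begin{equation}
    R_{\mathcal{P}}(U, V(\alv)) \leq 2\left( C_{\mathcal{D}_{\mathcal{Q}}(N)}(U, V(\alv)) + \operatorname{gen}_{\mathcal{Q}, \mathcal{D}_{\mathcal{Q}}(N)}(\alv) \right) \, .
\end{equation}
Since this chain of inequalities holds for every fixed parameter setting $\alv$ (no optimality of $\alv$ is used anywhere), the claim follows in full generality, in particular for $\alv = \alv_{\rm opt}$.

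There is essentially no obstacle here: the entire content of the corollary is packaged into Theorem~\ref{thm:supplementary-comparing-different-locally-scrambled-costs}, whose proof (via Lemma~\ref{lem:supplementary-locally-scrambled-cost-versus-hst-cost} and the HST cost) is where the real work was done. The only things to be careful about are (i) invoking the $\mathbb{S}_{\rm LS}^{(2)}$ version rather than the $\mathbb{S}_{\rm LS}$ version of the equivalence, and (ii) noting that the bound is stated for arbitrary $\alv$, so that it can subsequently be specialized to the trained parameters in the downstream corollaries; no high-probability statement or assumption on the QNN architecture beyond unitarity enters at this stage.
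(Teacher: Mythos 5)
Your proposal is correct and follows essentially the same route as the paper's own proof: apply the upper bound $R_{\mathcal{P}} \leq 2 R_{\mathcal{Q}}$ from Theorem~\ref{thm:supplementary-comparing-different-locally-scrambled-costs} and then decompose $R_{\mathcal{Q}}$ as training cost plus in-distribution generalization error. Your additional care in invoking Remark~\ref{remark:second-moments-suffice} to cover ensembles in $\mathbb{S}_{\rm LS}^{(2)}$ is consistent with how the paper handles the main-text version of the statement.
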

\begin{proof}
    As both $\mathcal{Q}$ and $\mathcal{P}$ are locally scrambled ensembles of $n$-qubit quantum states, Theorem~\ref{thm:supplementary-comparing-different-locally-scrambled-costs} yields
    \begin{equation}
        R_{\mathcal{P}} (U, V(\alv))
        \leq 2 R_{\mathcal{Q}} (U, V(\alv))\, .
    \end{equation}
    After rewriting
    \begin{equation}
        R_{\mathcal{Q}} (U, V(\alv))
        = C_{\mathcal{D}_{\mathcal{Q}}(N)} (U, V(\alv)) + \left( R_\mathcal{Q} \left( U, V(\alv) \right) - C_{\mathcal{D}_{\mathcal{Q}}(N)} \left( U, V(\alv) \right) \right) \, ,
    \end{equation}
    this gives the statement of the theorem.
\end{proof}

Corollary~\ref{corol:supplementary-ood-generalization-from-id-generalization-unitary-learning} has the following implication for out-of-distribution generalization after training: 
When training the QNN $V(\alv)$ with the cost $C_{\mathcal{D}_{\mathcal{Q}}(N)}$ using training data $\mathcal{D}_{\mathcal{Q}}(N)$, the out-of-distribution testing risk $R_{\mathcal{P}} (U, V(\alv_{\rm opt}))$ of the parameter setting $\alv_{\rm opt}$ after training is controlled in terms of the training cost and the in-distribution generalization error.
Here, the only assumption on the training and testing distributions is that they are both locally scrambled.

To demonstrate the usefulness of Corollary~\ref{corol:supplementary-ood-generalization-from-id-generalization-unitary-learning}, we next show the concrete form it takes when combined with the QNN generalization guarantees of~\cite{caro2021generalization}:

\begin{corol}[Locally scrambled out-of-distribution generalization for QNNs - Restatement of Corollary~\ref{corol:locally-scrambled-ood-generalization-qnn}]\label{corol:supplementary-locally-scrambled-ood-generalization-qnn}
    Let $n\in\mathbb{N}$ and write $d=2^n$. Let $\delta\in (0,1)$.
    Let $\mathcal{Q}$ and $\mathcal{P}$ be two locally scrambled ensembles of $n$-qubit quantum states.
    Let $U$ be an unknown $n$-qubit unitary.
    Let $V(\alv)$ be an $n$-qubit unitary QNN with $T$ parameterized local gates.
    When trained with the cost $C_{\mathcal{D}_{\mathcal{Q}}(N)}$ using training data $\mathcal{D}_{\mathcal{Q}}(N)$, the out-of-distribution testing risk w.r.t.~$\mathcal{P}$ of the parameter setting $\alv_{\rm opt}$ after training satisfies
    \begin{equation}
        R_{\mathcal{P}} (U, V(\alv_{\rm opt}))
        \leq 2 C_{\mathcal{D}_{\mathcal{Q}}(N)} (U, V(\alv_{\rm opt})) + \mathcal{O} \left( \sqrt{\frac{T \log (T)}{N}} + \sqrt{\frac{\log (\nicefrac{1}{\delta})}{N}}\right)\, ,
    \end{equation}
    with probability $\geq 1-\delta$ over the choice of training data of size $N$ according to $\mathcal{Q}$.
\end{corol}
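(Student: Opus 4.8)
The plan is to obtain the bound by composing two ingredients already available: the distribution-change reduction of Corollary~\ref{corol:supplementary-ood-generalization-from-id-generalization-unitary-learning}, and an off-the-shelf in-distribution generalization bound for QNNs. First I would apply Corollary~\ref{corol:supplementary-ood-generalization-from-id-generalization-unitary-learning} at the trained parameter value $\alv_{\rm opt}$. Since both $\mathcal{Q}$ and $\mathcal{P}$ are locally scrambled (and, invoking Remark~\ref{remark:second-moments-suffice}, it in fact suffices that $\mathcal{Q},\mathcal{P}\in\mathbb{S}_{\rm LS}^{(2)}$, because the expected risk depends only on complex second moments), this gives
\begin{equation}
    R_{\mathcal{P}}(U, V(\alv_{\rm opt})) \leq 2\, C_{\mathcal{D}_{\mathcal{Q}}(N)}(U, V(\alv_{\rm opt})) + 2\operatorname{gen}_{\mathcal{Q}, \mathcal{D}_{\mathcal{Q}}(N)}(\alv_{\rm opt})\, ,
\end{equation}
where $\operatorname{gen}_{\mathcal{Q}, \mathcal{D}_{\mathcal{Q}}(N)}(\alv_{\rm opt}) = R_{\mathcal{Q}}(U, V(\alv_{\rm opt})) - C_{\mathcal{D}_{\mathcal{Q}}(N)}(U, V(\alv_{\rm opt}))$. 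The multiplicative factor $2$ here is $N$-independent, so it does not degrade the sample-complexity scaling; everything now reduces to bounding the in-distribution generalization error.

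Second, I would insert the QNN generalization bound of Ref.~\cite{caro2021generalization}. The crucial feature of that bound is that it is \emph{uniform} over the hypothesis class: with probability at least $1-\delta$ over the i.i.d.~draw of the $N$ training inputs from $\mathcal{Q}$, every parameter setting $\alv$ of a QNN built from $T$ parameterized local gates simultaneously satisfies $R_{\mathcal{Q}}(U, V(\alv)) - C_{\mathcal{D}_{\mathcal{Q}}(N)}(U, V(\alv)) \leq \mathcal{O}(\sqrt{T\log(T)/N} + \sqrt{\log(\nicefrac{1}{\delta})/N})$. Because the bound holds for all $\alv$ at once, it applies to the data-dependent minimizer $\alv_{\rm opt}$ as well. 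Two compatibility checks are needed: (i) the loss used in~\cite{caro2021generalization} is the bounded, $[0,1]$-valued per-sample infidelity $1 - |\bra{\Psi^{(j)}}U^\dagger V(\alv)\ket{\Psi^{(j)}}|^2$, which is exactly the summand of our cost in Eq.~\eqref{eq:costfid}; and (ii) the only assumption on the input distribution there is i.i.d.~sampling, so drawing inputs from a locally scrambled $\mathcal{Q}$ is admissible. Plugging this estimate into the displayed inequality and absorbing the constant factor $2$ into the $\mathcal{O}(\cdot)$ yields the stated bound with probability $\geq 1-\delta$.

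The main obstacle is purely at this interface with~\cite{caro2021generalization}: one must confirm that reference's result is genuinely a one-sided uniform bound on $R_{\mathcal{Q}} - C_{\mathcal{D}_{\mathcal{Q}}(N)}$ over the \emph{entire} parameter space -- continuous rotation angles and the discrete structural parameters that fix the $T$ local gates alike -- rather than a bound for a single, fixed hypothesis, since otherwise it could not be evaluated at the empirically chosen $\alv_{\rm opt}$. The $T\log T$ term is precisely the metric-entropy/covering-number contribution of this class, and checking that the covering is taken in a metric compatible with the infidelity loss is the one nontrivial bookkeeping point; the rest is routine. I would close by remarking that the local-cost variant, Corollary~\ref{corol:locally-scrambled-ood-generalization-qnn-local-cost}, follows by the same two-step argument, with the extra factor of $n$ arising from the faithfulness relation between the local product-state cost of Eq.~\eqref{eq:product-local-training-cost} and its global counterpart, applied before Corollary~\ref{corol:supplementary-ood-generalization-from-id-generalization-unitary-learning}.
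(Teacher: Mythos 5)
Your proposal is correct and follows exactly the paper's own two-step argument: apply Corollary~\ref{corol:supplementary-ood-generalization-from-id-generalization-unitary-learning} at $\alv_{\rm opt}$ and then insert the in-distribution generalization bound of~\cite[Theorem 11]{caro2021generalization}. The additional checks you flag (uniformity over the parameter space and compatibility of the infidelity loss) are sensible due diligence but do not change the route, which matches the paper's proof.
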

\begin{proof}
    According to~\cite[Theorem 11]{caro2021generalization}, the in-distribution generalization error in this setting is bounded as
    \begin{equation}
        R_\mathcal{Q} \left( U, V(\alv) \right) - C_{\mathcal{D}_{\mathcal{Q}}(N)} \left( U, V(\alv) \right)
        \leq \mathcal{O} \left( \sqrt{\frac{T \log (T)}{N}} + \sqrt{\frac{\log (\nicefrac{1}{\delta})}{N}}\right) \, .
    \end{equation}
    Plugging this in-distribution generalization error bound into Corollary~\ref{corol:supplementary-ood-generalization-from-id-generalization-unitary-learning} yields the claim.
\end{proof}

Corollary~\ref{corol:supplementary-locally-scrambled-ood-generalization-qnn} has the following implication for training data requirements:
To ensure that, with high probability, the expected risk does not exceed twice the training cost by more than a specified accuracy $\varepsilon$, it suffices to have training data of size $\sim \nicefrac{T\log (T)}{\varepsilon^2}$. This sufficient training data size scales only slightly superlinearly in the number of trainable gates in the QNN.

For the special case of the locally scrambled training ensemble being tensor products of Haar-random states, with corresponding product state training data given as
\begin{equation}\label{eq:supplementary-training-data-product}
    \mathcal{D}_{{\rm Haar}_1^{\otimes n}}(N)
    = \{(\ket{\Psi_{{\rm Haar}_1^{\otimes n}}^{(j)}},\ket{\Phi^{(j)}_{{\rm Haar}_1^{\otimes n}}}) \}_{j=1}^{N}
    = \left\{\left(\bigotimes_{i=1}^n \ket{\psi^{(j)}_i}, U\left(\bigotimes_{i=1}^n \ket{\psi^{(j)}_i}\right)\right)\right\}_{j=1}^N \, ,
\end{equation}
where the $\ket{\psi^{(j)}_i}$ are independent Haar-random single-qubit states and $U$ is the unknown target unitary, Corollary~\ref{corol:supplementary-locally-scrambled-ood-generalization-qnn} becomes:

\begin{corol}[Out-of-distribution generalization for QNNs trained on random product states]\label{corol:supplementary-ood-generalization-random-product-states-qnn}
    Let $n\in\mathbb{N}$ and write $d=2^n$. Let $\delta\in (0,1)$.
    Let $\mathcal{P}$ be a locally scrambled ensemble of $n$-qubit quantum states.
    Let $U$ be an unknown $n$-qubit unitary.
    Let $V(\alv)$ be an $n$-qubit unitary QNN with $T$ parameterized local gates.
    When trained with the cost $C_{\mathcal{D}_{{\rm Haar}_1^{\otimes n}}(N)}$ using training data $\mathcal{D}_{{\rm Haar}_1^{\otimes n}}(N)$, the out-of-distribution risk w.r.t.~$\mathcal{P}$ of the parameter setting $\alv_{\rm opt}$ after training satisfies
    \begin{equation}
        R_{\mathcal{P}} (U, V(\alv_{\rm opt}))
        \leq 2 C_{\mathcal{D}_{{\rm Haar}_1^{\otimes n}}(N)} (U, V(\alv_{\rm opt})) + \mathcal{O} \left( \sqrt{\frac{T \log (T)}{N}} + \sqrt{\frac{\log (\nicefrac{1}{\delta})}{N}}\right)\, ,
    \end{equation}
    with probability $\geq 1-\delta$ over the choice of training data of size $N$ according to $\mathcal{S}_{{\rm Haar}_1^{\otimes n}}$.
\end{corol}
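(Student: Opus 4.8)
The plan is to obtain this statement as an immediate specialization of Corollary~\ref{corol:supplementary-locally-scrambled-ood-generalization-qnn} to the particular choice of training ensemble $\mathcal{Q}=\mathcal{S}_{{\rm Haar}_1^{\otimes n}}$. First I would recall from Example~\ref{ex:product-haar-random} that $\mathcal{S}_{{\rm Haar}_1^{\otimes n}}$ is genuinely a locally scrambled ensemble of $n$-qubit states: it is of the form $\mathcal{U}_{{\rm Haar}_1^{\otimes n}}\ket{0}^{\otimes n}$, and $\mathcal{U}_{{\rm Haar}_1^{\otimes n}}\in\mathbb{U}_{\rm LS}$ by right-invariance of the single-qubit Haar measure under multiplication by any fixed single-qubit unitary. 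Hence $\mathcal{S}_{{\rm Haar}_1^{\otimes n}}\in\mathbb{S}_{\rm LS}$, so it is an admissible choice of $\mathcal{Q}$ in Corollary~\ref{corol:supplementary-locally-scrambled-ood-generalization-qnn}.

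Second, I would observe that with $\mathcal{Q}=\mathcal{S}_{{\rm Haar}_1^{\otimes n}}$ the generic training data set $\mathcal{D}_{\mathcal{Q}}(N)$ of Eq.~\eqref{eq:supplementary-training-data-general} is precisely the product-state training set $\mathcal{D}_{{\rm Haar}_1^{\otimes n}}(N)$ of Eq.~\eqref{eq:supplementary-training-data-product}: drawing $\ket{\Psi^{(j)}}\sim\mathcal{S}_{{\rm Haar}_1^{\otimes n}}$ is the same as drawing each single-qubit factor $\ket{\psi^{(j)}_i}$ independently from the single-qubit Haar measure. Consequently the training cost $C_{\mathcal{D}_{\mathcal{Q}}(N)}$ (the fidelity-based cost of Eq.~\eqref{eq:supplementary-training-cost-general}) coincides with $C_{\mathcal{D}_{{\rm Haar}_1^{\otimes n}}(N)}$, and the in-distribution generalization bound from~\cite[Theorem 11]{caro2021generalization} that feeds into Corollary~\ref{corol:supplementary-locally-scrambled-ood-generalization-qnn} applies verbatim, since it only requires that the $N$ training inputs be drawn i.i.d.~from the training distribution and in particular covers the case of product-state inputs.

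Third, I would simply instantiate Corollary~\ref{corol:supplementary-locally-scrambled-ood-generalization-qnn} with these identifications to read off
\[
R_{\mathcal{P}}(U,V(\alv_{\rm opt})) \leq 2\,C_{\mathcal{D}_{{\rm Haar}_1^{\otimes n}}(N)}(U,V(\alv_{\rm opt})) + \mathcal{O}\!\left(\sqrt{\tfrac{T\log(T)}{N}} + \sqrt{\tfrac{\log(\nicefrac{1}{\delta})}{N}}\right)
\]
with probability at least $1-\delta$ over the draw of the $N$ product training states according to $\mathcal{S}_{{\rm Haar}_1^{\otimes n}}$. Since no new estimate is needed beyond these substitutions, there is no real obstacle in the argument; the only point requiring care is to confirm that the cited in-distribution generalization guarantee is indeed stated (or trivially seen to hold) for the specific setting of product-state training data together with the fidelity cost of Eq.~\eqref{eq:supplementary-training-cost-general}, rather than only for a more restricted class of input states.
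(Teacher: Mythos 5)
Your proposal is correct and matches the paper exactly: the paper obtains this corollary as an immediate specialization of Corollary~\ref{corol:supplementary-locally-scrambled-ood-generalization-qnn} to the training ensemble $\mathcal{Q}=\mathcal{S}_{{\rm Haar}_1^{\otimes n}}$, which is locally scrambled by Example~\ref{ex:product-haar-random}, with the training data and cost identified as in Eq.~\eqref{eq:supplementary-training-data-product}. No further argument is given or needed.
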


Notice that, following Remark~\ref{remark:second-moments-suffice}, we can simplify the training data even further and still achieve the same performance.
Namely, if we train on tensor products of random stabilizer states (instead of on tensor products of random product states), then exactly the same out-of-distribution generalization bound as in Corollary~\ref{corol:supplementary-ood-generalization-random-product-states-qnn} holds.

\begin{rmk}\label{remark:local}
    We can extend our results for out-of-distribution generalization when training on tensor products of Haar-random states to local variants of our risks and costs. 
    Such local costs are essential to avoid cost function dependent barren plateaus~\cite{cerezo2021cost} when training a shallow QNN, thereby facilitating optimization.
    As a concrete example, when taking $\mathcal{S}_{{\rm Haar}_1^{\otimes n}}$ from Example~\ref{ex:product-haar-random} as testing ensemble, we can consider the local expected testing risk
    \begin{equation}
        R_{\mathcal{S}_{{\rm Haar}_1^{\otimes n}}}^{L}(U, V(\alv)) 
        = 1 - \mathbb{E}_{\ket{\Psi_{\rm P}}=\bigotimes_{i=1}^n \ket{\psi_i} \sim {\mathcal{S}_{{\rm Haar}_1^{\otimes n}}}}\left[ \frac{1}{n}\sum\limits_{i=1}^n  \Tr \left[ U \ket{\Psi_{\rm P}^{(j)}}\bra{\Psi_{\rm P}^{(j)}}  U^\dagger V(\alv) \left( \ket{\psi_i^{(j)}} \bra{\psi_i^{(j)} }\otimes \mathds{1}_{\bar{i}} \right)  V(\alv)^\dagger\right]\right]\label{eq:supplementary-product-local-cost}
    \end{equation}
    and, for a training data set as in Eq.~\eqref{eq:supplementary-training-data-product}, the local training cost
    \begin{equation}\label{eq:supplementary-product-local-training-cost}
        C_{ \mathcal{D}_{{\rm Haar}_1^{\otimes n}}(N)}^{L}(U, V(\alv))
        = 1 -\frac{1}{nN}\sum\limits_{j=1}^N\sum\limits_{i=1}^n \Tr \left[ U \ket{\Psi_P^{(j)}}\bra{\Psi_P^{(j)}}  U^\dagger V(\alv)\left(\ket{\psi_i^{(j)}} \bra{\psi_i^{(j)} }\otimes \mathds{1}_{\bar{i}}\right) V(\alv)^\dagger\right] \, .
    \end{equation}
    Clearly, analogous local variants of expected risk and training cost can be defined whenever the respective ensemble has a tensor product structure. Among the examples presented in the main text, both $\mathcal{S}_{{\rm Haar}_1^{\otimes n}}$ from Example~\ref{ex:product-haar-random} and $\mathcal{S}_{{\rm Stab}_1^{\otimes n}}$ from Example~\ref{ex:product-random-stabilizer} have that form. However, if the training data is highly entangled constructing such local costs in this manner is not possible. Thus, this is another important consequence of our proof that training on product state training data enjoys out-of-distribution generalization.
    
    According to~\cite[Appendix C]{khatri2019quantum}, we know that
    \begin{equation}
        R_{\mathcal{S}_{{\rm Haar}_1^{\otimes n}}}^{L}(U, V(\alv))
        \leq R_{\mathcal{S}_{{\rm Haar}_1^{\otimes n}}} (U, V(\alv))
        \leq n\cdot R_{\mathcal{S}_{{\rm Haar}_1^{\otimes n}}}^{L}(U, V(\alv)) \, .
    \end{equation}
    We can combine this with Theorem~\ref{thm:supplementary-comparing-different-locally-scrambled-costs} to obtain: 
    If $\mathcal{P}$ is any locally scrambled ensemble of $n$-qubit quantum states, then for any $n$-qubit unitaries $U$ and $V$,
    \begin{equation}
        \frac{1}{2} R_{\mathcal{S}_{{\rm Haar}_1^{\otimes n}}}^{L}(U, V(\alv))
        \leq R_{\mathcal{P}} \left( U, V(\alv) \right)
        \leq 2 n R_{\mathcal{S}_{{\rm Haar}_1^{\otimes n}}}^{L}(U, V(\alv)) \, .
    \end{equation}
    With this observation, we can obtain Corollary~\ref{corol:locally-scrambled-ood-generalization-qnn-local-cost} as a version of Corollary~\ref{corol:supplementary-ood-generalization-random-product-states-qnn} when training on products of Haar-random single-qubit states, but now with the local cost from Eq.~\eqref{eq:supplementary-product-local-training-cost}, simply replacing $C_{\mathcal{D}_{{\rm Haar}_1^{\otimes n}}(N)} (U, V(\alv_{\rm opt}))$ by $n C_{\mathcal{D}_{{\rm Haar}_1^{\otimes n}}(N)}^{L} (U, V(\alv_{\rm opt}))$ and $\mathcal{O} \left( \sqrt{\frac{T \log (T)}{N}} + \sqrt{\frac{\log (\nicefrac{1}{\delta})}{N}}\right)$ by $\mathcal{O} \left( n\sqrt{\frac{T \log (T)}{N}} + \sqrt{\frac{\log (\nicefrac{1}{\delta})}{N}}\right)$.
\end{rmk}

\subsection{Remarks on the Role of Linearity}\label{ap:role-of-linearity}

As outlined in the discussion between Corollary \ref{corol:locally-scrambled-ood-generalization-qnn} and Corollary \ref{corol:locally-scrambled-ood-generalization-qnn-local-cost}, linearity is important in enabling our out-of-distribution generalization.  
Intuitively, as long as the training states span the space on which one wishes to learn the action of the target unitary, it ought to be possible to train on those states and by linearity extrapolate to the entire space. 
However, this line of argument alone is insufficient to explain out-of-distribution generalization. The random ensembles of states also have to be ``well-behaved'' to ensure good generalization from a manageable number of training states.

One way of highlighting this subtlety is to note that even an exponential number of computational basis states cannot be used to learn an unknown unitary using a cost formulated in terms of the $1$-norm distance between the guess output and true output (or equivalently the fidelity between the guess and true outputs). Namely, computational basis states do not allow to learn relative phases.
This can be illustrated by the following concrete example: Suppose the unknown unitary is the single-qubit unitary $U=e^{-i\varphi Z}$ for some $\varphi\in [0,2\pi)$. That is, we consider 
\begin{equation}
    U=e^{-i\varphi}\begin{pmatrix}  1 & 0 \\ 0 & e^{2i\varphi}\end{pmatrix} \, .
\end{equation}
The action of $U$ on the computational basis states is thus given by $U\ket{0} = e^{-i\varphi}\ket{0}$ and $U\ket{1} = e^{i\varphi}\ket{1}$, all the relevant information lies in the relative phase between the two output states. 
As our notions of risk are (as is physically reasonable) independent of global phases in the output states, the unitary $V=\mathds{1}_2$ achieves a perfect training cost on the training data set $\mathcal{D}=\{(\ket{0},U\ket{0}), (\ket{1}, U\ket{1})\}$, namely
\begin{align}
    C_{\mathcal{D}} (U,V)
    &= \frac{1}{8}\left(\norm{U\ket{0}\bra{0}U^\dagger - V\ket{0}\bra{0}V^\dagger}_1^2 + \norm{U\ket{1}\bra{1}U^\dagger - V\ket{1}\bra{1}V^\dagger}_1^2\right)\\
    &= \frac{1}{8}\left(\norm{\ket{0}\bra{0} - \ket{0}\bra{0}}_1^2 + \norm{\ket{1}\bra{1} - \ket{1}\bra{1}}_1^2\right)\\
    &= 0\, ,
\end{align}
and therefore, since the training data set in this case consists of exactly the two single-qubit computational basis states, a perfect expected testing risk over randomly drawn computational basis states, namely
\begin{align}
    R_{\mathcal{S}_{\mathrm{CompBasis}}} (U,V) 
    &= \frac{1}{4}\mathbb{E}_{\ket{\Psi}\sim \mathcal{S}_{\mathrm{CompBasis}}}\left[\norm{U\ket{\Psi}\bra{\Psi}U^\dagger - V\ket{\Psi}\bra{\Psi}V^\dagger}_1^2\right]\\
    &= \frac{1}{8}\left(\norm{U\ket{0}\bra{0}U^\dagger - V\ket{0}\bra{0}V^\dagger}_1^2 + \norm{U\ket{1}\bra{1}U^\dagger - V\ket{1}\bra{1}V^\dagger}_1^2\right)\\
    &= C_{\mathcal{D}} (U,V)\\
    &= 0 \, .
\end{align}
However, $V$ clearly fails to capture the relative phase between $U\ket{0}$ and $U\ket{1}$. In particular, if we consider the testing risk over uniformly random states in the $X$-basis, we see that
\begin{align}
    R_{\mathcal{S}_{X-\mathrm{Basis}}} (U,V) 
    &= \frac{1}{4}\mathbb{E}_{\ket{\Psi}\sim \mathcal{S}_{X-\mathrm{Basis}}}\left[\norm{U\ket{\Psi}\bra{\Psi}U^\dagger - V\ket{\Psi}\bra{\Psi}V^\dagger}_1^2\right]\\
    &= \frac{1}{8}\left(\norm{U\ket{+}\bra{+}U^\dagger - V\ket{+}\bra{+}V^\dagger}_1^2 + \norm{U\ket{-}\bra{-}U^\dagger - V\ket{-}\bra{-}V^\dagger}_1^2\right)\\
    &= \frac{1}{8}\left(\norm{U\ket{+}\bra{+}U^\dagger - \ket{+}\bra{+}}_1^2 + \norm{U\ket{-}\bra{-}U^\dagger - \ket{-}\bra{-}}_1^2\right)\\
    &= \frac{1}{8}\left( \norm{\frac{1}{2}\left((e^{2i\varphi}-1)\ket{1}\bra{0} + (e^{-2i\varphi}-1)\ket{0}\bra{1}\right)}_1^2 + \norm{\frac{1}{2}\left((1 - e^{2i\varphi})\ket{1}\bra{0} + (1 - e^{-2i\varphi})\ket{0}\bra{1}\right)}_1^2 \right)\\
    &= \frac{1}{16} \norm{ \begin{pmatrix} 0 & e^{-2i\varphi}-1\\ e^{2i\varphi}-1 & 0 \end{pmatrix}}_1^2\\
    &= \frac{1-\cos(2\varphi)}{2}\\
    &= \sin^2(\varphi) \, , 
\end{align}
which is strictly bigger than zero whenever $\varphi$ is not an integer multiple of $\pi$.
Also, using Eq.~\eqref{eq:hst-cost-versus-expected-haar-cost}, we obtain
\begin{align}
    R_{\mathcal{S}_{\rm Haar}}(U,V)
    = \frac{2}{3} \left(1 - \frac{1}{4} \lvert \tr[U^\dagger V]\rvert^2\right)
    = \frac{2}{3}\sin^2(\varphi) \, ,
\end{align}
which is strictly bigger than zero whenever $\varphi$ is not an integer multiple of $\pi$.
Thus, despite perfect training error, perfect in-distribution generalization error, and perfect in-distribution testing error, the out-of-distribution generalization and testing errors can be non-zero.
This single-qubit example shows that no analogue of Theorem \ref{thm:comparing-different-locally-scrambled-costs} can hold without additional assumptions on shared properties between the training and testing ensembles (such as both being locally scrambled). In particular, a good training and testing performance on randomly drawn computational basis states does not imply a good testing performance over (for example) random $X$-basis states or Haar-random states.
While this counterexample to out-of-distribution generalization from training on computational basis states is specific to our (physically motivated) choice of cost function, the above argument still emphasizes that linearity alone does not trivially imply out-of-distribution generalization.

It is further worth stressing that an argument based on linearity places no guarantees on how many training states are required/sufficient for convergence. The argument that `as long as the training states span the space on which you wish to learn the action of the target unitary on, it ought to be possible to train on those states and by linearity extrapolate to the entire space' crucially only applies if you train on an exponentially large training ensemble. In general, how many states are required/sufficient to ensure good generalization will depend on the types of states in the training ensemble. In our work, we combine Theorem \ref{thm:comparing-different-locally-scrambled-costs} with the recent in-distribution generalization bounds of \cite{caro2021generalization} and thereby show that the worst-case training data requirements for random product states cannot be significantly worse than those for fully random states.

\section{Additional Numerical Results}

\subsection{Numerical Test of Lemma~\ref{lem:locally-scrambled-cost-versus-hst-cost}}\label{ap:TightnessNumerics}

We numerically probe the validity of Lemma~\ref{lem:locally-scrambled-cost-versus-hst-cost} for the NISQ friendly scenario $\mathcal{Q}= \mathcal{S}_{\text{Haar}_1^{\otimes n}}$, i.e. the set of Haar-random product states. In this case, the relation between the average Haar-random product state cost and the general n-qubit Haar-random state cost according to Lemma~\ref{lem:locally-scrambled-cost-versus-hst-cost} is
\begin{equation}
    \label{eq:nisq-cost-comp}
    R_{\mathcal{S}_{{\rm Haar}_1^{\otimes n}}}(U, V(\vec{\alpha})) 
    \leq \frac{d + 1}{d}R_{\mathcal{S}_{\text{Haar}_n}}(U, V(\vec{\alpha})) \leq 2 R_{\mathcal{S}_{{\rm Haar}_1^{\otimes n}}}(U, V(\vec{\alpha})),
\end{equation}
where $\vec{\alpha}$ contains all parameters that define the QNN. To make contact with NISQ applications, we will probe the validity of Ineq.~\eqref{eq:nisq-cost-comp} by sampling $W = V^{\dagger} U(\vec{\alpha})$ from random low-depth quantum circuits. Given $W$, we can evaluate $\frac{d + 1}{d}R_{\mathcal{S}_{\text{Haar}_n}}(U, V(\vec{\alpha}))$ from Eq.~\ref{eq:hst-cost-versus-expected-haar-cost} and $R_{\mathcal{S}_{{\rm Haar}_1^{\otimes n}}}(U, V(\vec{\alpha}))$ from Lemma~\ref{lem:locally-scrambled-cost-explicit-expression} with straightforward matrix operations. To ensure that we can quickly sample from a large range of cost values without the need for optimization, we implement an ansatz of the form
\begin{equation}
    \label{eq:cost-comp-ansatz}
    W(\vec{\alpha} = r \cdot \vec{\theta}) = \prod_{k=1}^l \left( \prod_{i=1}^T G_{ik}(r \cdot \vec{\vec{\theta_{ik}}}) \right).
\end{equation}

Here, each $G_{ik}$ is an arbitrary 2-qubit gate, and the inner product over $i$ represents a hardware-efficient tiling of these 2-qubit gates. That is, we apply $\lfloor n / 2\rfloor$ 2-qubit gates from even qubits to odd qubits (i.e. between $(0, 1), (2, 3), \text{etc}\ldots)$ in parallel and then $\lfloor (n-1) / 2\rfloor$ 2-qubit gates from odd to even (i.e. between $(1, 2), (3, 4), \text{etc}\ldots)$ in parallel. The outer product just means we are applying $l$ layers. So far, we have just described a familiar hardware-efficient tiling of arbitrary 2-qubit gates used in many variational quantum algorithms~\cite{cerezo2021cost, cirstoiu2020variational, gibbs2021long}, but there is one crucial difference between our implementation and the standard one.
Rather than use the minimal 15 single-qubit gate and 3 CNOT gate decomposition of $G$~\cite{vatan2004optimal, shende2004minimal} (aka the KAK decomposition~\cite{tucci2005introduction}), we use a slightly larger 21 single-qubit gate and 4 CNOT gate decomposition. Though our choice has more parameters than necessary, it is defined so that $G(\vec{0}) = I$, which is not true for the KAK decomposition. This has the desirable property that $W(\vec{\alpha} = \vec{0}) = I$. Of course, all risks/costs comparing $U$ and $V$ are defined so that when $W = V^{\dagger} U = I$, they vanish (i.e. $R = 0$ and $C = 0$ for any sensible risk $R$ and cost $C$). By writing $\vec{\alpha} = r \vec{\theta}$ we emphasize the point that regardless of the choice of $\vec{\theta}$, setting $r = 0$ samples the point $(0, 0)$ where both risks vanish.

\begin{figure}[H]
    \centering
    \includegraphics[width=0.6\textwidth]{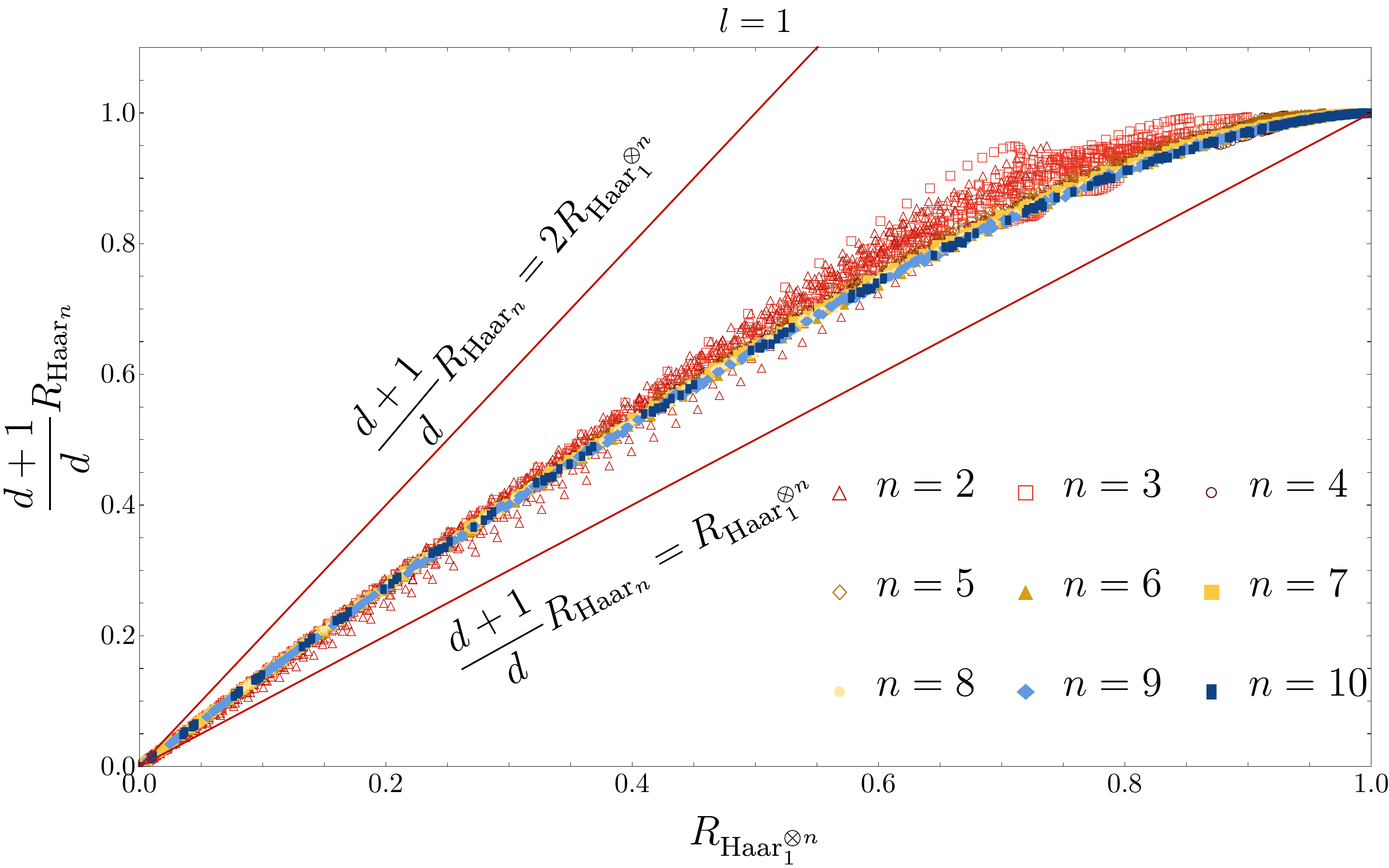}
    \includegraphics[width=0.6\textwidth]{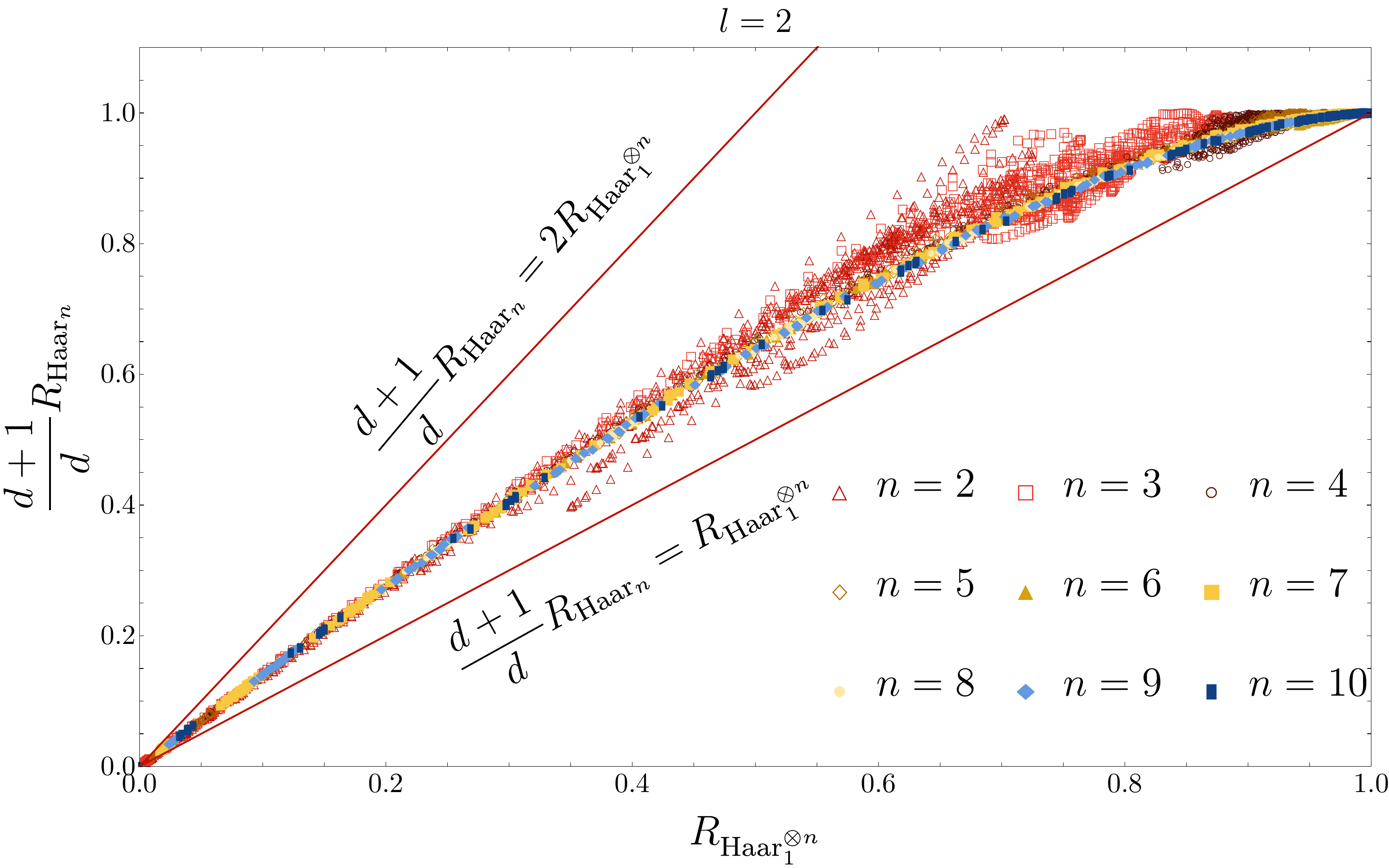}
    \includegraphics[width=0.6\textwidth]{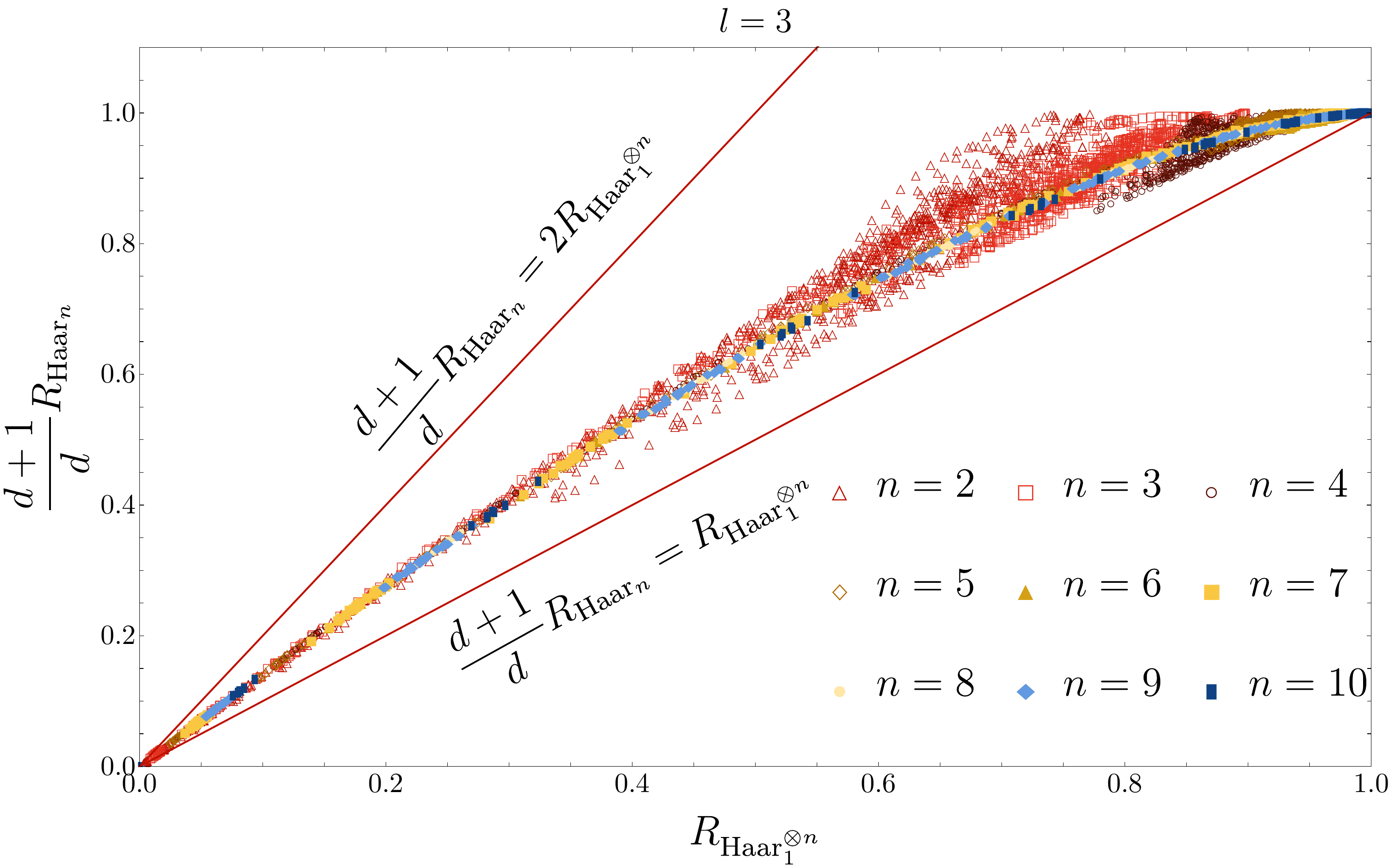}
    \caption{
    We provide numerical verification of Lemma~\ref{lem:locally-scrambled-cost-versus-hst-cost} for $\mathcal{Q} = \mathcal{S}_{{\rm Haar}_1^{\otimes n}}$, i.e. the ensemble of tensor products of Haar-random single-qubit states. In this case, the claim is that Ineq.~\eqref{eq:nisq-cost-comp} should be satisfied for any circuit parameters $\vec{\alpha}$. Graphically, a sampled point satisfies the claimed Ineq.~\eqref{eq:nisq-cost-comp} if it lies between the two reference lines, which as expected is true for all generated points. Hence, learning the action of a unitary on random product states is sufficient to generalize to the entire Hilbert space in the very practical setting of low depth variational quantum circuits as shown here. It is also interesting to note that the upper-bound is empirically not very tight in these generic variational quantum circuits. In fact, the scaling of the locally trained cost to the global cost is empirically linear after sufficient training.}
    \label{fig:cost-comp}
\end{figure}

By defining the ansatz in this way, we can easily sample a large range of cost values. In particular, for fixed $l$ and $ n$, we     randomly sample $(\vec{\theta_{ik}})_p \sim \mathcal{N}(0, 2 \pi) \ \forall i, k, p$, generating a random initial parameter vector $\vec{\theta}^{(0)}$ for the entire ansatz from which we then compute the starting risks, $(R_{\mathcal{S}_{{\rm Haar}_1^{\otimes n}}}(U, V(\vec{\theta}^{(0)})), \frac{d + 1}{d}R_{\mathcal{S}_{\text{Haar}_n}}(U, V(\vec{\theta}^{(0)})))$. Note that this is the same initialization procedure for many variational quantum algorithms (VQAs)~\cite{cirstoiu2020variational, gibbs2021long, cerezo2021cost}, and for a deep enough circuit with a random choice of $\vec{\theta}^{(0)}$, any risk or cost will be approximately maximized (i.e. $R \approx 1$ and $C \approx 1$). If we were truly running a VQA to learn $U(\vec{\theta})$ given a known unitary $V$, we would need to iteratively estimate $R_{\mathcal{S}_{{\rm Haar}_1^{\otimes n}}}$ on a quantum computer and update our best guess for $\vec{\theta}^*$ classically. Instead, we exploit the form of our toy ansatz directly: we simply re-scale each angle by $r$, i.e. $\vec{\theta}^{(0)} \rightarrow r \vec{\theta}^{(0)}$ for different values of $r$. By construction, the risk will be minimized when $r = 0$. Thus, by sampling values of $r \in [0, 1]$, we can explore the empirical cost relationship $(R_{\mathcal{S}_{{\rm Haar}_1^{\otimes n}}}(U, V(r \cdot \vec{\theta}^{(0)})), \frac{d + 1}{d}R_{\mathcal{S}_{\text{Haar}_n}}(U, V(r \cdot \vec{\theta}^{(0)})))$ between the two extremes. In Fig.~\ref{fig:cost-comp},  we show this empirical relationship for $n = 2, \ldots, 10$ qubits with ansatz depths $l =1, 2, 3$ for 20 random initialization vectors $\vec{\theta}^{(i)}$ and $100$ values $r$ for each random sample. For all points sampled, Ineq.~\eqref{eq:nisq-cost-comp} is satisfied as expected. In fact, the upper-bound is often pretty loose, and it appears that a tighter relationship might be true even for large values of $R_{\mathcal{S}_{{\rm Haar}_1^{\otimes n}}}(U, V(\vec{\alpha}))$.

\subsection{Out-of-Distribution Generalization for Learning Fast Scramblers}\label{ap:additional-numerics}

\textbf{Task and setup:} Here, we consider the task of learning a so-called fast scrambler~\cite{belyansky2020minimal} into an ansatz $V(\vec{\alpha})$ of a similar form.
An $n$-qubit fast scrambler unitary $U$ composed of $t$ time steps is defined as follows:
\begin{equation} \label{eq:Up1}
	U = \prod_{j=1}^{t} U^{\mathrm{I}}_j U^{\mathrm{II}}_j \ , 
\end{equation}
where $U^{\mathrm{I}}_j$ is a product of independent Haar-random single-qubit unitaries, $U^{\mathrm{I}}_j = \prod_{k = 1}^n u_{j,k}$, and $U^{\mathrm{II}}_j$ is given by
\begin{equation} \label{eq:Up2}
	U^{\mathrm{II}}_j = e^{-i \frac{g}{2\sqrt{n}} \sum_{k < \ell} Z_k Z_\ell} \ ,
\end{equation}
where $g$ is a real parameter.

The ansatz $V(\vec{\alpha})$ for learning the scrambler $U$ has the same structure as $U$ with fixed single-qubit gates replaced by parametrized ones. That is, the ansatz $V(\vec{\alpha})$ takes the form
\begin{equation}
	V(\vec{\alpha}) = \prod_{j=1}^{t} V^\mathrm{I}_j(\vec{\alpha}_j) U^{\mathrm{II}}_j \ ,
\end{equation}
where $V^\mathrm{I}(\vec{\alpha}_j) = \prod_{k=1}^n v_{j,k}(\vec{\alpha}_{j,k})$, with parametrized one qubit gates $v_{j,k}(\vec{\alpha}_{j,k})$. 
Here, we view $t$ as the number of time steps. This is a parameter that controls the difficulty of the optimization problem, since the entanglement introduced by $U$ quickly grows with $t$.
The parameter $g$ in Eq.~\eqref{eq:Up2} controls how quickly the learning difficulty grows with $t$. We work with $g=1$, but consider several values of $t$.

The learning is performed as described in the main text. That is, we first build a training set and then optimize a corresponding cost function. First, we generate a training set of size $N$ of the form $\mathcal{D}_{\mathcal{Q}}(N)=\{ \ket{\psi_j}, U\ket{\psi_j} \}_{j=1}^N$, where input states $\ket{\psi_j}$ are random product states. 
Second, we optimize the parameters $\vec{\alpha}$ according to the cost function $C_{\mathcal{D}_{\mathcal{Q}}(N)}(\vec{\alpha})$ introduced in Eq.~\eqref{eq:training-cost-general}.
Optimized parameters $\vec{\alpha}_\mathrm{opt}$ are found by (approximately) solving the optimization problem:
\begin{equation} \label{eq:opt}
    \vec{\alpha}_\mathrm{opt} = \operatorname{argmin}_{\vec{\alpha}} C_{\mathcal{D}_{\mathcal{Q}}(N)}(\vec{\alpha}) \ .
\end{equation}
We measure the learning quality with the (out-of-distribution) risk $R_{\mathcal{S}_{\rm Haar_n}}(\vec{\alpha})$, see Eq.~\eqref{eq:test-cost-general} and also Eq.~\eqref{eq:hst-cost-versus-expected-haar-cost}.
We are interested in generalization error $R_{\mathcal{S}_{\rm Haar_n}}(\Vec{\alpha}_\mathrm{opt}) - C_{\mathcal{D}_{\mathcal{Q}}(N)}(\vec{\alpha}_\mathrm{opt})$ as a function of various parameters in the learning problem.\\

\textbf{Results:} We learn an 8-qubit fast-scrambler unitary $U$ with $t = 3,\ldots,10$ and training data set sizes $N = 1,\ldots,15$. 
The learning is performed by repeating the optimization in Eq.~\eqref{eq:opt} 1000 times. 
Each optimization learns a different randomly generated $U$, works with different randomly drawn training set, and starts with different random initial parameters $\vec{\alpha}_0$. 
The optimization is performed with a variant of the gradient descent method in which the single-qubit unitaries $v_{j,k}(\vec{\beta})$ are spanned by three rotation angles, $v_{j,k}(\vec{\beta}) = e^{-iZ\beta_1} e^{-iX\beta_2} e^{-iZ\beta_3}$.

\begin{figure}[t]
\includegraphics[width=\textwidth]{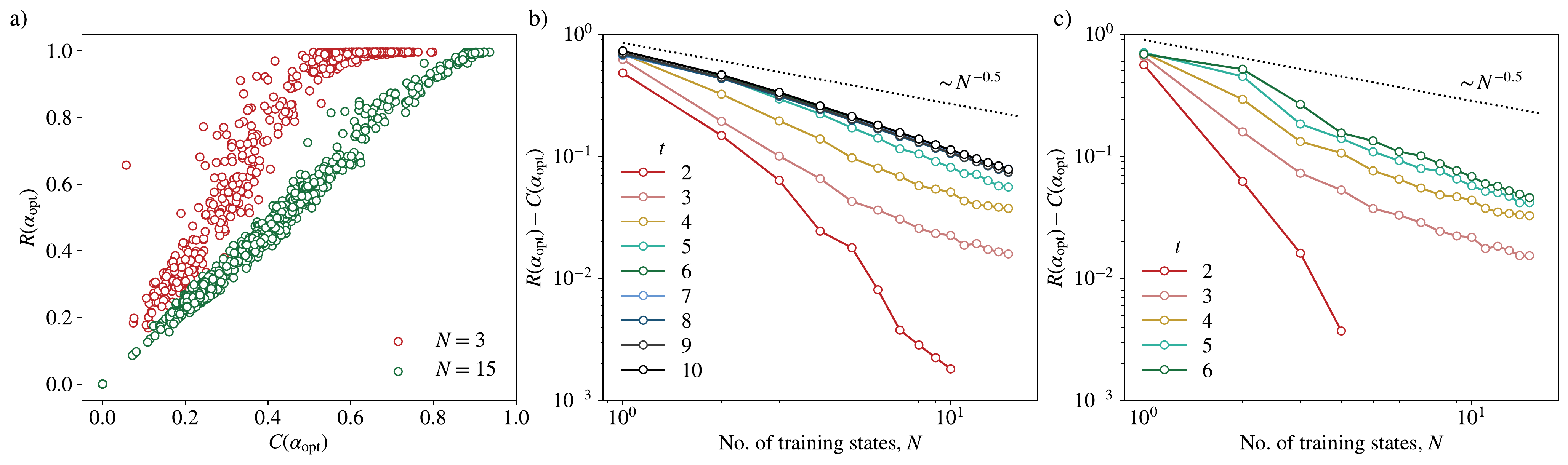}
\caption{\textbf{Learning an 8-qubit fast scrambler.} The individual panels extract scaling properties from the optimization data and show that our theoretical bounds apply, but are outperformed in practical learning tasks even for highly non-trivial, entangling unitaries. 
a) Testing risk $R_{\mathcal{S}_{\rm Haar_n}}(\Vec{\alpha}_\mathrm{opt})$ as a function of the training cost $C_{\mathcal{D}_{\mathcal{Q}}(N)}(\vec{\alpha}_\mathrm{opt})$ calculated for 1000 independently obtained values of $\vec{\alpha}_\textrm{opt}$.
b) Average generalization error $R_{\mathcal{S}_{\rm Haar_n}}(\Vec{\alpha}_\mathrm{opt}) - C_{\mathcal{D}_{\mathcal{Q}}(N)}(\vec{\alpha}_\mathrm{opt})$ over all 1000 optimizations as a function of $N$ for several values of $t$.
c) Average generalization error $R_{\mathcal{S}_{\rm Haar_n}}(\Vec{\alpha}_\mathrm{opt}) - C_{\mathcal{D}_{\mathcal{Q}}(N)}(\vec{\alpha}_\mathrm{opt})$ over successful optimization runs as a function of $N$ for several values of $t$.
See text for an in-depth analysis.}
\label{fig:sc}
\end{figure}

Figure~\ref{fig:sc} summarizes our results. Panel (a) shows the testing risk $R_{\mathcal{S}_{\rm Haar_n}}(\Vec{\alpha}_\mathrm{opt})$ as a function of the training cost $C_{\mathcal{D}_{\mathcal{Q}}(N)}(\vec{\alpha}_\mathrm{opt})$ calculated for 1000 independently obtained values of $\vec{\alpha}_\textrm{opt}$. 
The data was obtained for $t = 5$. 
Blue (red) dots represent optimization performed with training data size $N = 3$ ($N = 15$). 
We observe that small training data size of $N=3$ may lead to a situation in which optimization has already reached appreciable training cost values ($C_{\mathcal{D}_{\mathcal{Q}}(N)}(\vec{\alpha}_\mathrm{opt}) \approx 0.5$) but the testing risk is still at its maximal value ($R_{\mathcal{S}_{\rm Haar_n}}(\Vec{\alpha}_\mathrm{opt})\approx 1$). 
This generalization issue is resolved by adding more points to the training data set. Indeed, when training on $N=15$ data points, the obtained data does not display a concentration around $R_{\mathcal{S}_{\rm Haar_n}}(\Vec{\alpha}_\mathrm{opt}) \approx 1$. We also observe that larger data sets result in an increased likelihood of almost perfect learning (that is, an optimization that achieves $C_{\mathcal{D}_{\mathcal{Q}}(N)}(\vec{\alpha}_\mathrm{opt}) \simeq R_{\mathcal{S}_{\rm Haar_n}}(\Vec{\alpha}_\mathrm{opt}) \simeq 0$). Only 7\% of the optimization runs with $N=3$ achieved almost perfect learning while $12.5\%$ of the optimization runs with $N=15$ reached that goal. 
That plot also shows that larger training set leads to better generalization: achieving a given cost value of $C_{\mathcal{D}_{\mathcal{Q}}(N)}(\vec{\alpha}_\mathrm{opt})$ with a bigger training set results in smaller risk $R_{\mathcal{S}_{\rm Haar_n}}(\Vec{\alpha}_\mathrm{opt})$. We observe this behavior for every optimization that we have performed.

Panel (b) corroborates those findings further. It shows the generalization error $R_{\mathcal{S}_{\rm Haar_n}}(\Vec{\alpha}_\mathrm{opt}) - C_{\mathcal{D}_{\mathcal{Q}}(N)}(\vec{\alpha}_\mathrm{opt})$, averaged over all 1000 optimizations, as a function of $N$, the training data size, for several values of $t$. 
We see that the average generalization error obtained for the values $t = 6,\ldots,10$ is almost identical. The reason for this behavior is likely that for these values of $t$ and for the training data sizes $N$ used in our experiment, only very few optimization runs managed to lower the cost function enough to achieve a risk $R_{\mathcal{S}_{\rm Haar_n}}(\Vec{\alpha}_\mathrm{opt})$ smaller than its maximal value. 
This optimization issue might be dealt with by more refined minimization techniques. While the results in this setup (large $t$ and insufficiently large $N$) are not useful from a learning point of view, our theoretical generalization bounds still hold. As the data suggests, the scaling is better than the worst case scenario covered by the theoretical analysis.

Panel (c) avoids interpretational complications caused by optimization issues and averages only those minimization runs that achieved $C_{\mathcal{D}_{\mathcal{Q}}(N)}(\vec{\alpha}_\mathrm{opt}) < 0.5$. We see a scaling behavior similar to what we observed when taking the entire data into account. Panels (b) and (c) show that the generalization error decreases faster than theoretical upper bound, which scales as $\sim N^{-1/2}$ with the training data size $N$ and is shown by the black solid line. As the learning difficulty (measured by $t$) increases, the rate at which generalization error decreases with $N$ seems to slowly approach the theoretical bound.

\end{document}